\definecolor{ForestGreen}{rgb}{0.1333,0.5451,0.1333}
\definecolor{DarkRed}{rgb}{0.80,0,0}
\definecolor{Red}{rgb}{1,0,0}
\declaretheorem[numberwithin=section,refname={Theorem,Theorems},Refname={Theorem,Theorems}]{theorem}
\declaretheorem[numberlike=theorem,name=Theorem,refname={Theorem,Theorems},Refname={Theorem,Theorems}]{thm}
\declaretheorem[numberlike=theorem]{lemma}
\declaretheorem{conjecture}
\declaretheorem[numberlike=theorem,name=Corollary,refname={Corollary,Corollaries},Refname={Corollary,Corollaries}]{cor}
\declaretheorem[numberlike=theorem,style=definition]{definition}
\declaretheorem[numberlike=theorem,style=definition,name=Definition,refname={Definition,Definitions},Refname={Definition,Definitions}]{defn}
\declaretheorem[numberlike=theorem,style=remark]{remark}
\declaretheorem[numberlike=theorem, refname={Question,Questions},Refname={Question,Questions},name={Question}]{question}
\declaretheorem[numberlike=theorem, refname={Observation,Observations},Refname={Observation,Observations},name={Observation}]{observation}
\def\final{0}
\def\gary#1{\marginpar{$\leftarrow$\fbox{GH}}\footnote{$\Rightarrow$~{\sf\textcolor{blue}{#1 --Gary}}}}
\newcommand{\tnote}[1]{{\color{violet}[\textbf{Thatchaphol}: #1]}}
\def\thatchaphol#1{\marginpar{$\leftarrow$\fbox{TS}}\footnote{$\Rightarrow$~{\sf\textcolor{purple}{#1 --Thatchaphol}}}}
\newcommand{\ghnoteinline}[1]{\todo[inline, size=\normalsize, color=blue!20]{Gary's Note: #1}}
\newcommand{\ghnoteinline}[1]{}
\def\gary#1{}
\newcommand{\tnote}[1]{}
\def\thatchaphol#1{}
\global\long\def\congest{\mathrm{cong}}
\global\long\def\Gtil{\widetilde{G}}
\global\long\def\cG{\mathcal{G}}
\global\long\def\cGtil{\widetilde{\cal{G}}}
\global\long\def\cP{\mathcal{P}}
\newcommand{\eps}{\epsilon}
\newcommand{\poly}{\mathrm{poly}}
\newcommand{\polylog}{\mathrm{polylog}}
\newcommand{\dist}{\mathrm{dist}}
\newcommand{\ignore}[1]{}
\newcommand{\seth}[1]{{\color{red}[{\tiny Seth: \bf #1}]\marginpar{\color{green}*}}}
\title{Reviving Thorup's Shortcut Conjecture}
  \date{}
\author{
Aaron Bernstein\thanks{New York University}
\and
Henry Fleischmann\thanks{Carnegie Mellon University}
\and
Maximilian Probst Gutenberg\thanks{ETH~Zürich}
\and
Bernhard Haeupler\thanks{INSAIT, Sofia University ``St.~Kliment Ohridski'' and ETH~Zürich}
\and
Gary Hoppenworth\thanks{University of Michigan}
\and
Yonggang Jiang\thanks{MPI-INF and Saarland University}
\and
George Z. Li\footnotemark[2]
\and
Seth Pettie\footnotemark[5]
\and
Thatchaphol Saranurak\footnotemark[5]
\and
Leon Schiller\thanks{Hasso Plattner Institute, University of Potsdam}
}
\begin{document}
\sloppy

\begin{titlepage}
  \maketitle 
  \pagenumbering{gobble}
  \begin{abstract}

We aim to revive Thorup's conjecture \cite{thorup1993shortcutting} on 
the existence of \emph{reachability shortcuts} with ideal size-diameter tradeoffs.  Thorup originally asked whether, given any graph $G=(V,E)$ with $m$ edges, we can add $m^{1+o(1)}$ ``shortcut'' edges $E_+$ from the transitive closure $E^*$ of $G$ so that $\dist_{G_+}(u,v) \leq m^{o(1)}$
for all $(u,v)\in E^*$, where $G_+=(V,E\cup E_+)$.
The conjecture was refuted by Hesse \cite{Hesse03}, followed by significant efforts in the last few years to optimize the lower bounds \cite{HuangP21, LuWWX22, bodwin2023folklore, WilliamsXX24, hopp25}.

In this paper, we observe that although Hesse refuted the \emph{letter} of Thorup's conjecture, his work~\cite{Hesse03}---and all followup work~\cite{HuangP21,LuWWX22,bodwin2023folklore,WilliamsXX24,hopp25}---does not refute the \emph{spirit} of the conjecture, which should allow $G_+$ to contain both new shortcut edges and \emph{new Steiner vertices}.
Our results are as follows.

\begin{itemize}
\item On the positive side, we present explicit attacks that break all known shortcut lower bounds~\cite{Hesse03,HuangP21,LuWWX22,bodwin2023folklore, WilliamsXX24,hopp25} when Steiner vertices are allowed.{\small \par}

\item On the negative side, we rule out ideal $m^{1+o(1)}$-size, $m^{o(1)}$-diameter shortcuts whose 
``thickness'' is $t=o(\log n/\log\log n)$, 
meaning no path can contain $t$ consecutive Steiner vertices. 
{\small \par}

\item We propose a candidate hard instance as the next step toward resolving the revised version of Thorup's conjecture. {\small \par}

\end{itemize}

Finally, we show promising implications. Almost-optimal parallel algorithms for computing a generalization of the shortcut that \emph{approximately} preserves distances or flows imply almost-optimal parallel algorithms with $m^{o(1)}$ depth for \emph{exact} shortcut paths and \emph{exact} maximum flow. The state-of-the-art algorithms have much worse depth of $n^{1/2+o(1)}$ \cite{rozhovn2023parallel} and $m^{1+o(1)}$ \cite{chen2022maximum}, respectively.

\end{abstract}

  \setcounter{tocdepth}{3}
  \newpage
  \tableofcontents
  \newpage
\end{titlepage}
\newpage
\pagenumbering{arabic}

\newpage

\section{Introduction }

One of the grand challenges in parallel computing is to overcome the \emph{transitive closure bottleneck}~\cite{KaoK93}: the observation that computing reachability in directed graphs apparently requires \underline{either} superlinear work 
(iterated matrix squaring) to achieve $\polylog(n)$ depth, 
or $n^{\Omega(1)}$ depth to achieve
near-linear work.  
The problem of 
simultaneously 
achieving $m^{1+o(1)}$ work 
and $m^{o(1)}$ 
depth has 
remained an elusive goal for decades,
for reachability and harder problems such as shortest paths and (min-cost) flows.

Naturally, researchers have tried to overcome this bottleneck in various ways.  These efforts have been extremely fruitful, leading to the development of \emph{hopsets} \cite{cohen2000polylog} for computing approximate shortest paths, \emph{shortcuts}~\cite{ullman1990high} to achieve work-depth tradeoffs, min-sketches for estimating transitive closure size~\cite{cohen1997size}, and attention to graph classes where the bottleneck does not exist, e.g., planar graphs~\cite{KaoK93}.
In 1992, Thorup~\cite{thorup1993shortcutting} made a bold, purely combinatorial conjecture that was motivated by the transitive closure bottleneck.

\begin{conjecture}[Thorup's Shortcut Conjecture~\cite{thorup1993shortcutting}]\label{conj:Thorup92}
Let $G^*=(V,E^*)$ be the transitive closure of an  
arbitrary directed graph $G=(V,E)$, with $m=|E|$.
There exists a \emph{shortcut} $E_+\subset E^*$ with $|E_+|\leq m$
such that $G_+=(V,E\cup E_+)$ has $\polylog(n)$ diameter, 
that is, $\dist_{G_+}(u,v) = \polylog(n)$, 
for every $(u,v)\in E^*$.
\end{conjecture}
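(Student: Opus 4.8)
The plan is to attack \Cref{conj:Thorup92} by a recursive divide-and-conquer construction that, at each level, peels off a small vertex set meeting every ``long'' directed path and routes through it. First I would reduce to the acyclic case: contract each strongly connected component $C$ to a single vertex, noting that a strongly connected component is itself trivially shortcut by choosing an arbitrary hub $h_C\in C$ and adding the $2(|C|-1)$ edges $\{(v,h_C),(h_C,v):v\in C\setminus\{h_C\}\}$, all of which lie in $E^*$, so that every intra-component pair obtains a $2$-hop path; one then checks that routing through consecutive component hubs composes correctly across the condensation. Thus it suffices to shortcut a DAG $H$ on $n'\le n$ vertices and $m'\le m$ edges, using $O(n)$ extra shortcuts so far.

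For the acyclic core, let $L$ be the length of the longest directed path in $H$. If $L\le\polylog(n)$ we are already done, since any transitive-closure pair is joined by \emph{some} path and every path has at most $L$ edges. Otherwise, fix a ``gap'' parameter $g$ and seek a set $B\subseteq V(H)$ such that every directed path with $\ge g$ edges meets $B$ --- so $B$-free paths are short --- while $|B|$ stays a small fraction of $|V(H)|$. Given such a $B$, I would recurse on the \emph{reachability DAG on $B$} (vertex set $B$; an edge $(b,b')$ whenever $b$ reaches $b'$ in $H$), whose longest path has shrunk by a factor roughly $g$, obtaining a shortcut set $E_+^B$; and for each $v\notin B$ add edges from $v$ into the $B$-vertices it reaches just after the $B$-free region it sits in, and symmetrically edges into $v$. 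A path $a\rightsquigarrow b$ in $H$ then decomposes into a short $B$-free prefix, a $B$-to-$B$ middle traversed in few hops via $E_+^B$, and a short $B$-free suffix. Choosing $g$ and the recursion parameters so that the longest-path length shrinks geometrically while the cumulative count of shortcut edges (the interfaces plus the recursively added $E_+^B$) stays $m\cdot\polylog(n)$ --- which is $m^{1+o(1)}$ as in the introduction's phrasing, and within a polylog factor of the literal bound $|E_+|\le m$ --- then yields diameter $\polylog(n)$.

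The hard part --- and, in hindsight, the crux of the conjecture itself --- is establishing the two balance conditions simultaneously: (i) that every DAG with longest-path length $L$ admits a path-hitting set much smaller than $|V(H)|$, and (ii) that the interface edges (each $v\notin B$ pointing to its ``nearest'' $B$-vertices) total only $m^{1+o(1)}$ over the whole recursion. For directed paths, trees, and bounded-width DAGs both hold --- a short layered hitting set exists, and a single interface is bounded by an interval-tree / Davenport--Schinzel argument --- but in general a vertex outside $B$ can reach many pairwise-incomparable ``nearest'' $B$-vertices, so the interface blows up, and moreover no sublinear path-hitting set need exist at all. Grid-like and butterfly-like ``expanding'' layered DAGs make both failures happen at once, and these are exactly the instances used by Hesse and all subsequent lower bounds cited above to refute the letter of \Cref{conj:Thorup92}. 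So any proof of the conjecture as literally worded would have to find structure in such expanding DAGs that has so far eluded everyone, or discard the hitting-set recursion for a fundamentally different routing primitive --- which is precisely the gap that, as the rest of this paper argues, Steiner vertices might fill.
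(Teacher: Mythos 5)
There is a genuine gap, and it is unfixable: \Cref{conj:Thorup92} is a \emph{conjecture} that the paper never proves --- on the contrary, the paper records that it is \emph{false} as literally worded, refuted by Hesse (\Cref{thm:Hesse}), and \Cref{sec:Hesse-counterexample} reproves this refutation via the family $\mathcal{G}_k$ (\Cref{thm:hesse}). Your recursion therefore cannot be completed, and the failure point is exactly where you locate it: on the layered, grid-like graphs $G\in\mathcal{G}_k$ there are $|P|=\Omega(n^{1.5-\epsilon})$ critical pairs whose witnessing paths are unique and pairwise nearly disjoint (\Cref{lem:criticalpairbasic}), so no sublinear path-hitting set exists and the ``interface'' edges from vertices to their reachable hitting-set vertices cannot be charged to $m\cdot\polylog(n)$; any non-Steiner shortcut achieving $n^{o(1)}$ diameter on these graphs must have $\Omega(n^{1.5-\epsilon})\gg m^{1+o(1)}$ edges. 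Your own closing paragraph concedes precisely this, so what you have written is an (accurate) obstruction analysis, not a proof of the statement.

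Two smaller remarks. First, your relaxation of the size bound from $|E_+|\le m$ to $m\cdot\polylog(n)$ or $m^{1+o(1)}$ does not help: Hesse's second construction ($m'=n^{1+\epsilon}$) rules out even $n^{2-\epsilon}$-size shortcuts with $n^{o(1)}$ diameter, so the conjecture fails under any such relaxation. Second, your reduction to the condensation DAG (hub inside each strongly connected component, plus routing through consecutive component hubs) is sound and standard, but it only adds ordinary transitive-closure edges; the paper's whole point is that escaping Hesse's obstruction appears to require \emph{Steiner vertices} (\Cref{conj:Steiner}), which your framework never introduces. So the correct conclusion to draw from your analysis is the paper's: the original conjecture is dead, and the open question is its Steiner revival.
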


In other words, once the shortcut $E_+$ is 
discovered in preprocessing, 
single-source reachability could be computed 
near-optimally,
in $\tilde{O}(m)$ work and $\tilde{O}(1)$ depth.
\cref{conj:Thorup92} was known for rooted trees~\cite{Chazelle87} 
with a diameter\footnote{The \emph{diameter} of a directed graph is defined to be the largest \emph{finite} distance.} of $O(\alpha(n,n))$
and later confirmed for planar graphs~\cite{thorup1995shortcutting} with a diameter of $O(\alpha(n,n)\log^2 n)$.\footnote{A corollary of Thorup's planar reachability oracle~\cite{Thorup04} implies a shortcut with size $O(n\log n)$ and diameter 4.}
Thorup~\cite{thorup1993shortcutting} insisted on $|E_+|\leq m$, but one could set any number of thresholds for the shortcut size, e.g., $O(m), m^{1+o(1)}, n^{1.5}, n^{1.99}$, etc.
In 2003 Hesse~\cite{Hesse03} refuted \cref{conj:Thorup92}
and many weaker versions of it.

\begin{theorem}[Hesse~\cite{Hesse03}]\label{thm:Hesse}
    For any $\epsilon>0$ and $n$ sufficiently large, 
    there exist $n$-vertex graphs 
    $G = (V,E)$ and $G'=(V,E')$ with $m,m'$ edges, respectively,
    such that:
    \begin{itemize} 
    \item 
    $m=\Theta(n^{19/17})$ and any 
    shortcut for $G$ with $o(n^{1/17})$ diameter
    must have $\Omega(mn^{1/17})$ edges.
    \item 
    $m'=n^{1+\epsilon}$ and any shortcut for $G'$ with $n^{o(1)}$ diameter must have 
    $\Omega(n^{2-\epsilon})$ edges.
    \end{itemize}
\end{theorem}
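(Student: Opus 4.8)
The plan is to argue within the standard \emph{canonical-path} framework for shortcut lower bounds. First I would equip the hard graph $G$ with a potential $\phi:V\to\mathbb{Z}$ so that every original edge $(x,y)\in E$ satisfies $\phi(y)=\phi(x)+1$; then every path in $G$ has length exactly the potential difference of its endpoints, while any shortcut edge $(x,y)\in E^*$ can only increase potential. I would then designate a large family $\Pi$ of long ``canonical'' paths in $G$ and let $P=\{(s_\pi,t_\pi):\pi\in\Pi\}\subseteq E^*$ be the induced demand pairs, all at $G$-distance exactly $\ell$ for some $\ell$ larger than the diameter bound $d$ we wish to rule out. The role of the potential is that, for any demand pair $(u,v)$, a $G_+$-path of at most $d$ hops must contain at least one shortcut edge --- indeed one of potential jump $\ge \ell/d$ --- since $\le d$ original edges advance $\phi$ by only $\le d<\ell$. (Because $E_+\subseteq E^*$, adding $E_+$ does not change the transitive closure, so $u\rightsquigarrow x$ in $G_+$ is the same as $u\rightsquigarrow x$ in $G$; I will use this silently.)

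The core ingredient is a \emph{dispersion} estimate for $G$: a bound $\kappa$ on the number of demand pairs a single shortcut edge can conceivably serve, namely $\max_{(x,y)\in E^*}\bigl|\{(u,v)\in P: u\rightsquigarrow_G x \text{ and } y\rightsquigarrow_G v\}\bigr|\le\kappa$. Given this, a double-counting argument finishes the proof: fix any valid shortcut $E_+$ of diameter $d$; for each demand pair pick a witnessing $\le d$-hop path in $G_+$ and charge the pair to one shortcut edge on that path; each demand charges at least one edge of $E_+$, and every edge of $E_+$ receives at most $\kappa$ charges, so $|E_+|\ge |P|/\kappa$. It then suffices to engineer the construction so that $|P|/\kappa$ exceeds the stated target, i.e.\ $\Omega(m n^{1/17})$ in the first bullet and $\Omega(n^{2-\epsilon})$ in the second.

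What remains is to build a graph realizing four quantities simultaneously: a polynomially bounded vertex count $n$ and a prescribed edge count $m$; a large demand set $|P|$; a canonical-path length $\ell$ exceeding the diameter being ruled out; and a small dispersion $\kappa$. A single ``base gadget'' --- a layered graph in which every internal vertex is reachable from few sources and reaches few sinks, which is what keeps $\kappa$ small (a sparse bipartite design between consecutive layers rather than a dense butterfly) --- already yields a nontrivial tradeoff. To reach the stated exponents one amplifies the gadget by composing it with itself via a graph product (which multiplies $n$, multiplies $|P|$ and $\kappa$ in a controlled way, and adds the layer counts), iterates a chosen number of times, and optimizes over the base-gadget size and the number of iterations; for the first bullet this optimization produces the exponents $19/17$ and $1/17$, and for the second one instead drives the number of layers up so that $\kappa=n^{o(1)}$ while $|P|=n^{2-o(1)}$ and $m=n^{1+\epsilon}$, at the cost of a longer (but still $n^{o(1)}$) canonical-path length $\ell$.

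I expect the main obstacle to be the explicit base gadget together with a clean, tight dispersion bound: one needs a layered graph that is sparse, has a huge reachability relation so that $|P|$ is large, and is nonetheless ``spread out'' enough that no pair of vertices lies on the canonical routes of more than $\kappa$ demand pairs --- requirements that pull against one another, and reconciling them (and then tracking how all four parameters transform under the amplifying product, which is where the awkward fractional exponents arise) is the technical heart. The potential bookkeeping and the double counting are routine by comparison.
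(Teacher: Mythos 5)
Your framework is the right one and is essentially what Hesse (and the paper's Section~2, which reconstructs a variant $\mathcal{G}_k$ of his construction) actually does: a layered DAG with a potential $\phi$ increasing by one per edge, a family of canonical paths of length $\ell$ with the ``uniqueness + bounded overlap'' property, an averaging argument showing a $\le d$-hop path must contain a shortcut edge of large potential jump, and a $k$-fold product to push the exponent on $n$ above $1$ so that $|P|\gg m$. The paper instantiates the base gadget with a ball $\mathcal B(r,d)$ of extreme convex-hull points in $\mathbb Z^d$ so that canonical paths are unique and pairwise cross in at most $k$ consecutive layers (\Cref{lem:criticalpairbasic}); that is exactly the ``spread out yet dense reachability'' tension you identify as the technical heart. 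You are also right that the specific exponents $19/17$ and $1/17$ fall out of optimizing over the base-gadget parameters and the number of product iterations.

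There is one genuine slip in the double-counting step as you have written it. You define $\kappa$ as a dispersion bound \emph{over all of $E^*$}, and then charge each demand pair to ``one shortcut edge on that path.'' In Hesse's graph the quantity $\max_{(x,y)\in E^*}|\{(u,v)\in P:u\rightsquigarrow x,\ y\rightsquigarrow v\}|$ is \emph{not} small: a shortcut edge between two nearby layers is compatible with a huge number of demand pairs, and if you charge to such an edge your bound $|E_+|\ge |P|/\kappa$ degenerates. The fix, which you gesture at but don't carry through, is to charge each demand pair to the shortcut edge of potential jump $\ge \ell/d$ (which exists by averaging once $d\le \ell$, so that the jump exceeds the ``product width'' $k$), and to prove the dispersion bound only for shortcut edges with jump $\ge k$. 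For those edges the bounded-overlap lemma gives $\kappa=1$, which is what the counting needs. Equivalently, the paper phrases this as a short/long dichotomy for shortcut edges: paths using only short edges cannot have $\le\ell$ hops, and each long edge is used by at most one critical pair. With that correction your outline is sound, though of course still conditional on exhibiting the base gadget and verifying the overlap lemma for it.
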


\cref{thm:Hesse} precludes the most optimistic outcome, but it still leaves open many fascinating and potentially practical questions, such as: what diameter can be achieved with $\tilde{O}(m)$- or $\tilde{O}(n)$-size shortcuts? And can such shortcuts be computed efficiently?  
It is now known that the optimum diameters for $\tilde{O}(n)$- 
and $\tilde{O}(m)$-size shortcuts are in the ranges 
$[\tilde{\Omega}(n^{1/4}),\tilde{O}(n^{1/3})]$,
and 
$[\tilde{\Omega}(n^{2/9}), \tilde{O}(n^{2/3}/m^{1/3})]$, respectively, and that the upper bounds can be attained in polynomial time. 
See \cref{tab:shortcuts} for details and references.

\newcommand{\istrut}[2][0]{\rule[- #1 mm]{0mm}{#1 mm}\rule{0mm}{#2 mm}}
\begin{table}[]
    \centering
    \begin{tabular}{|l|l|l|l|}
    \multicolumn{1}{l}{\textsf{Shortcut Size}} &
    \multicolumn{1}{l}{\textsf{Diameter}} &
    \multicolumn{1}{l}{\textsf{Construction Time}} &
    \multicolumn{1}{l}{\textsf{Notes/Citation}}\\\hline
      & $O(\sqrt{n})$     & $\tilde{O}(m\sqrt{n})$ & \cite{ullman1990high}\istrut[1]{4}\\\cline{2-4}
     $\tilde{O}(n)$       & $\tilde{O}(n^{2/3})$       & $\tilde{O}(m)$        & \cite{Fineman20} ($\tilde{O}(n^{2/3})$ depth)\istrut[1]{4}\\\cline{2-4}
            & $n^{1/2+o(1)}$            & $\tilde{O}(m)$        & \cite{LiuJS19}  ($n^{1/2+o(1)}$ depth)\istrut[1]{4}\\\cline{2-4}
         &  $O(n^{1/3})$                & $\tilde{O}(mn^{1/3})$ & \cite{kogan2023faster,kogan2022beating,kogan2022new}\istrut[1]{4}\\\cline{2-4}
         &  $\tilde{\Omega}(n^{1/6})$           &                       & \cite{HuangP21}; see also~\cite{Hesse03}\istrut[1]{4}\\\cline{2-4}
         &  $\tilde{\Omega}(n^{1/4})$   && \cite{BodwinH23}\istrut[1]{4}\\\hline\hline
      & $O(n/\sqrt{m})$ & $\tilde{O}(m^{3/2})$ & \cite{ullman1990high}\istrut[1]{4}\\\cline{2-4}
    $\tilde{O}(m)$ & $O(n^{2/3}/m^{1/3})$ & $\tilde{O}(m^{7/6}n^{1/6})$ & \cite{kogan2023faster,kogan2022beating,kogan2022new}\istrut[1]{4}\\\cline{2-4}
    & $\tilde{\Omega}(n^{1/11})$ &  & \cite{HuangP21}; see also~\cite{Hesse03}\istrut[1]{4}\\\cline{2-4}
    & $\tilde{\Omega}(n^{1/8})$ && \cite{LuWWX22}\istrut[1]{4}\\\cline{2-4}
    & $\tilde{\Omega}(n^{1/5})$ &&
\cite{WilliamsXX24}\istrut[1]{4}\\\cline{2-4}
    & $\tilde{\Omega}(n^{2/9})$ &&     
    \cite{hopp25}\istrut[1]{4}\\\hline
    \end{tabular}
    \caption{Upper and lower bounds on diameter specialized to $\tilde{O}(m)$- and $\tilde{O}(n)$-size shortcuts.}
    \label{tab:shortcuts}
\end{table}

\subsection{Reviving \cref{conj:Thorup92}}

The shortcut problem, as Thorup~\cite{thorup1993shortcutting} 
originally defined it, is now well on its way to being completely solved, 
as we have narrowed down the correct exponent to a small range and 
just a couple plausible answers: $1/4$ or $1/3$.

However, \cref{conj:Thorup92} is perhaps \emph{too strict} from the algorithmic perspective.
This is because the final goal is 
to somehow replace a directed graph $G=(V,E)$ with \emph{any} $G_+$ of comparable size such that (1) the transitive closure of $G$ can be extracted from the transitive closure of $G_+$, and (2) every $(u,v)\in E^*$ is witnessed by a 
``short'' $u$-$v$ path in $G_+$.  We propose the following small fix to \cref{conj:Thorup92} by allowing $G_+$ to introduce \emph{Steiner vertices}.

\begin{conjecture}[Steiner Shortcut Conjecture]\label{conj:Steiner}
    Let $G^*=(V,E^*)$ be the transitive closure of a directed graph $G=(V,E)$.
    There exists a graph $G_+ = (V\cup V_+, E\cup E_+)$ such that
    \begin{itemize}
        \item $G_+$ captures the transitive closure of $G$, i.e., for all $(u,v)\in V\times V$, $v$ is reachable from $u$ in $G$ iff it is reachable from $u$ in $G_+$.
        \item $|E_+|=\widetilde{O}(|E|)$;
        \item $G_+$ has $\polylog(n)$ diameter, i.e., $\forall (u,v)\in E^*$, $\dist_{G_+}(u,v) = \polylog(n)$.
    \end{itemize}
\end{conjecture}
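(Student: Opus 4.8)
The plan is to attack \cref{conj:Steiner} from the positive side, reducing it to a single ``bipartite reachability gadget'' whose construction is the crux. First reduce to acyclic $G$: contract each strongly connected component $C$ of $G$, and in $G_+$ re-expand $C$ together with one fresh Steiner hub $h_C$ carrying edges $v\to h_C$ and $h_C\to v$ for every $v\in C$. This realizes all-pairs reachability inside $C$ with $O(|C|)$ edges and diameter $2$, is faithful (every walk through $h_C$ begins and ends in $C$, and inside an SCC everything already reaches everything), and a $\polylog(n)$-diameter Steiner shortcut of the condensation lifts back: a shortcut edge $C\to C'$ becomes $h_C\to h_{C'}$, which is faithful since one vertex of $C$ reaching one vertex of $C'$ forces the whole SCC $C$ to reach the whole SCC $C'$. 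So assume henceforth that $G$ is a DAG with a fixed topological order $v_1,\dots,v_n$.

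Now recurse on the topological order: split it at the midpoint into a left block $L$ and a right block $R$, recursively build Steiner shortcuts $G_+[L]$ and $G_+[R]$, and handle $L$-to-$R$ reachability separately. Since all edges respect the order, any $u\in L$ reaching $v\in R$ does so through some crossing edge $(a,b)\in H:=E\cap(A\times B)$, where $A\subseteq L$ and $B\subseteq R$ are the tails and heads of crossing edges; the witnessing walk is a subpath in $G[L]$ from $u$ to $a$, the edge $a\to b$, and a subpath in $G[R]$ from $b$ to $v$. Routing both subpaths through the recursive structures is fatal, giving the recurrence $D(n)\le 2D(n/2)+O(1)=\Theta(n)$ for the diameter; the construction must instead be \emph{asymmetric}: the right subpath (from $b$ to $v$) is served by $G_+[R]$ at cost $D(n/2)$, but the left portion (from $u$ to $a$, and across the edge $a\to b$) is served by a \emph{fresh, flat} Steiner layer of $\polylog(n)$ diameter that never recurses into $G_+[L]$. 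Then $D(n)\le D(n/2)+\polylog(n)=\polylog(n)$ over the $O(\log n)$ recursion levels, and the new edges of each level will be charged to the edges crossing that cut, which sum to $m$ over the whole recursion (plus $O(n\log n)$ from per-vertex portals).

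The flat layer is the heart of the matter. It must provide, for each $u\in L$, a portal $p_u$ reachable from $u$ by one edge, such that $p_u$ reaches --- within $\polylog(n)$ hops, and \emph{only} --- a portal $q_a$ for every $a\in A$ reachable from $u$ inside $L$, and from each such $q_a$ an edge to every $b$ with $(a,b)\in H$, after which $b$ fans out inside $G_+[R]$. Thus the layer must faithfully and cheaply (ideally with $\widetilde O(|H|+|L|)$ edges and $\polylog(n)$ diameter) realize the bipartite relation $\{(u,a): a\in A,\ u\text{ reaches }a\text{ in }L\}$, a bipartite \emph{sub}-transitive-closure relation, oriented so the layer carries flow only from the $p$-side to the $q$-side. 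The natural tool is to cover that relation by a few (approximate) bicliques and install one depth-$O(\log n)$ Steiner tree per biclique, charging the total biclique weight to $E$; alternatively one could try to realize it recursively, as a shortcut problem in its own right. A butterfly or sorting network on $\Theta(|A|+|B|)$ Steiner vertices would route every pair cheaply, but it is \emph{unfaithful} --- it manufactures reachabilities absent from $E^{\ast}$ --- so faithfulness is exactly what forbids the easy answer.

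The main obstacle, and the reason this remains a conjecture, is precisely this charging step: a generic sub-transitive-closure relation need not admit any faithful $\polylog(n)$-diameter realization (biclique cover or otherwise) of weight $\widetilde O(m)$, so no purely local, level-by-level argument can work. One would need either (i) a global potential that simultaneously bounds the Steiner infrastructure summed over all $O(\log n)$ recursion levels, or (ii) to replace the balanced topological split by an \emph{adaptive} decomposition --- balanced separators, or an expander/sparse-cut decomposition of the DAG --- engineered so that each crossing relation has small biclique complexity by construction. As a sanity check, any such attack must in particular yield $m^{1+o(1)}$-size, $m^{o(1)}$-diameter Steiner shortcuts for the known lower-bound families \cite{Hesse03,HuangP21,LuWWX22,bodwin2023folklore,WilliamsXX24,hopp25} --- which the explicit attacks of this paper already do --- and for the candidate hard instance proposed below; conversely, a genuine obstruction on that instance would redirect the effort toward \emph{refuting} \cref{conj:Steiner}.
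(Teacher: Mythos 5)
The statement you are addressing is a \emph{conjecture}; the paper offers no proof of it and is explicit about being uncertain whether it is even true, presenting instead partial evidence on both sides. There is therefore nothing to compare your argument against, and the right thing to note is that you too correctly stop short of claiming a proof. Within that honest framing: the SCC-contraction-with-hubs reduction is sound, and the observation that an \emph{asymmetric} topological split yields the recurrence $D(n)\le D(n/2)+\polylog(n)$ rather than the fatal $2D(n/2)$ is the right high-level move. But the ``flat layer'' is where the argument has no content, and your presentation slightly understates how little the reduction buys. The relation $\{(u,a): u,a\in L,\ u\text{ reaches }a\text{ in }G[L],\ a\in A\}$ that the flat layer must realize faithfully is a bipartite sub-transitive-closure whose pair count can be $\Theta(|L|^{2})$, whose faithful biclique or Steiner-tree cover complexity has no a priori bound in $|H|$ or $|L|$, and whose realization in $\polylog(n)$ hops with $\widetilde{O}(m)$ edges is not a weakening of the conjecture but essentially \emph{is} the conjecture restricted to bipartite DAGs. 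Your proposed charge ``to the edges crossing that cut, which sum to $m$'' has no mechanism behind it: the flat layer's cost tracks the internal reachability of $L$, not the crossing edges. You acknowledge this in your final paragraph, so the gap is flagged rather than hidden, but the framing as a ``reduction to a single gadget'' is misleading about the difficulty.

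Two connections to the paper's actual content would sharpen your picture. First, \Cref{thm:lowerbound-introduction} proves that $t$-thick Steiner shortcuts with $t=o(\log n/\log\log n)$ cannot work on $\cG_k$, so any successful construction must at some point traverse $\Omega(\log n/\log\log n)$ consecutive Steiner vertices; your flat layers, having $\polylog(n)$ diameter, are \emph{compatible} with this, but any simplification of the flat layer to constant depth is already provably impossible on the family $\cG_k$. Second, the paper proposes the noised family $\cGtil_k(c,d,p)$ as the candidate hard instance (\Cref{qu:Gcdp-shortcut}, \Cref{thm:improvedlowerboundundernoise}); your closing ``sanity check'' against the known lower-bound families is weaker than it sounds, since the paper already shows those break trivially, and the real test for a flat-layer construction of the kind you sketch is whether its charging argument survives the random edge deletions in $\cGtil_k$.
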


We are not the first to propose the use of Steiner points.  Berman, Bhattacharyya, Grigorescu, Raskhodnikova, Woodruff, and Yaroslavtsev~\cite{BermanBGRWY14} proved that
if $G$ is a DAG with dimension $d$, meaning
there is a reachability-preserving embedding
into the oriented hypergrid $[n]^d$, then 
it has a diameter-2 graph $G_+$ with 
$O(n\log^d n)$ shortcuts.  However, the prior work 
on Steiner shortcuts~\cite{BermanBGRWY14,Raskhodnikova10}
has no bearing on \cref{conj:Steiner}.

\subsection{Contributions}

The purpose of this paper is to explore \cref{conj:Steiner}
and the power of Steiner vertices more generally in constructing shortcuts.
We are \emph{uncertain} about the truth of \Cref{conj:Steiner} and our goal is to raise it as an open problem.   
Although 
\cref{conj:Steiner} can be refuted using known techniques when the size of the Steiner shortcut is limited to $|E_+| \le |E|^{1-\epsilon}$ 
for a constant $\epsilon$ (see \Cref{sec:lower bound sublinear m}), once we permit $|E_+| = \widetilde{O}(|E|)$, all known techniques become ineffective.

Indeed, our main contribution is to carefully demonstrate 
why the lower bounds of 
Hesse~\cite{Hesse03}, Huang and Pettie~\cite{HuangP21},
Lu, Williams, Wein, and Xu~\cite{LuWWX22},  
Williams, Xu, and Xu~\cite{WilliamsXX24}, and Hoppenworth, Xu, and Xu~\cite{hopp25} fail 
catastrophically when Steiner vertices are allowed.  
These results are presented in \cref{sec:Hesse-counterexample}.

Our second result, presented in \Cref{sec:thick steiner shortcuts}, can be viewed as either partially 
refuting \cref{conj:Steiner} or  identifying the crucial property that the shortcut for \cref{conj:Steiner} must satisfy.
Specifically, we show that the shortcut must have a sufficiently large \emph{thickness}, defined as follows.

\begin{definition}[Thickness]\label{def:thickness} 
    Let $G_+=(V\cup V_+,E\cup E_+)$ be a Steiner shortcut graph for $G=(V,E)$.
    We say $G_+$ has \emph{thickness} $t$
    (or is \emph{$t$-thick})
    if every path in $G_+$ has at most $t-1$ consecutive Steiner vertices from $V_+$.
\end{definition}

For example, if $V_+=\emptyset$ (there are no Steiner vertices) then the thickness is $1$,
and all prior lower bounds~\cite{Hesse03,HuangP21,LuWWX22,WilliamsXX24,hopp25}
are merely for 1-thick shortcut graphs. 
A 2-thick shortcut graph is one for which the in- and out-neighbors of $V_+$-vertices are all original $V$-vertices.

It turns out that almost all existing lower bounds -- including the state of the art from \cite{WilliamsXX24,hopp25} -- break once we allow Steiner shortcuts of thickness two. Our main result in \cref{sec:thick steiner shortcuts} is to show that thickness two is nonetheless insufficient for general graphs:

\begin{thm} \label{thm:lowerbound-introduction}
For $t\geq 1$, there exists an $m$-edge graph $G$
such that every $t$-thick Steiner shortcut graph $G_+$
for $G$ either has diameter $m^{\Omega(1/t)}$ or $m^{1+\Omega(1/t)}$ edges.  
\end{thm}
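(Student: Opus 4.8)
The plan is to build $G$ as a layered ``tensor product'' construction that forces long paths in any low-thickness shortcut, generalizing the style of lower bound in \cite{Hesse03,HuangP21}. Concretely, I would start from the classical hard instance for $1$-thick shortcuts---a union of many edge-disjoint ``canonical paths'' arranged so that every pair $(u,v)\in E^*$ is witnessed by a \emph{unique} path in $G$, and any shortcut edge $(x,y)\in E^*$ can ``serve'' only the one canonical path through both $x$ and $y$. The key structural invariant I want is a potential/charging argument: assign to each shortcut edge a small ``budget,'' and show that if the diameter is small then the total number of (shortcut edge, canonical pair) incidences is large, forcing $|E_+|$ to be large. For $t$-thick shortcuts, a single shortcut edge is replaced by a \emph{path of $\le t-1$ Steiner vertices}, so one ``super-edge'' through the Steiner cloud can now serve canonical paths more cleverly; the construction must be robust enough that even these length-$t$ detours through Steiner vertices cannot cheat.

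The core idea for defeating thickness $t$ is to iterate/recurse the construction $\Theta(t)$ times, so that ``collapsing'' the graph by one level of Steiner shortcutting only reduces the effective depth by a constant factor, and hence you need $\Omega(t)$ levels of shortcutting—i.e. $\Omega(t)$ consecutive Steiner vertices—to make real progress. In more detail: let $H_1$ be the base hard instance on $n_1$ vertices with the property that any $1$-thick shortcut of subpolynomial diameter has $n_1^{1+\delta}$ edges. Define $H_{i+1}$ by substituting a copy of $H_i$ (or a suitable ``blown up'' product) into each edge/gadget of $H_1$, so that $H_t$ has $m = n_1^{\Theta(t)}$ edges. The claim is that a $t$-thick Steiner shortcut $G_+$ for $H_t$ of diameter $D$ induces, after ``peeling off'' the top level of Steiner vertices, a $(t-1)$-thick Steiner shortcut for $H_{t-1}$ of comparable size and diameter $D/c$ for an absolute constant $c$; inducting down to the $1$-thick case and invoking the base lower bound yields that either $D = m^{\Omega(1/t)}$ (the diameter does not shrink fast enough across $t$ peelings) or $|E_+| = m^{1+\Omega(1/t)}$ (the base lower bound blows up at every level). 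The quantitative bookkeeping is: $t$ peelings, each costing a constant factor in diameter and an $n_1^{\delta}$ factor in ``size inflation,'' and $m = n_1^{\Theta(t)}$, so $n_1 = m^{\Theta(1/t)}$ and both exponents become $\Theta(1/t)$.

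The main obstacle—and the step I would spend the most care on—is the \emph{peeling lemma}: showing that one can ``shortcut out'' the top layer of Steiner vertices in a $t$-thick shortcut graph and obtain a genuine $(t-1)$-thick Steiner shortcut for the next-smaller instance $H_{t-1}$ without paying more than a constant factor in diameter and a controlled factor in edge count. The subtlety is that a Steiner vertex may have very high degree, so naively contracting it (replacing a length-2 walk $u \to w \to v$ through a Steiner vertex $w$ by a direct edge $u\to v$) could create $\mathrm{deg}^+(w)\cdot\mathrm{deg}^-(w)$ new edges, which is too expensive if done globally. The fix should be a counting/amortization argument: only a few ``useful'' Steiner vertices actually lie on witnessing paths for the canonical pairs we care about, and for those we can afford the contraction because the base construction bounds how many canonical paths a single vertex can lie on; alternatively, one replaces the global contraction with a per-canonical-path surgery so the blowup is charged against the (already large) diameter savings. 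A secondary technical point is verifying that the recursive substitution preserves the ``unique witnessing path'' and ``bounded sharing'' properties that make the base lower bound's charging argument go through; this is a routine but delicate product-construction verification analogous to the amplification step in \cite{HuangP21}.
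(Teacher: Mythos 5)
Your proposal takes a genuinely different route from the paper---a recursive ``peeling'' induction on $t$ rather than a single direct counting argument---and it has a real gap at exactly the step you flag as the main obstacle. The paper does not peel at all. It fixes $k > t$, takes $G\in\mathcal G_k$ (Hesse's one-shot $k$-fold product, not a nested edge-substitution), and runs one global charging argument: after normalizing the shortcut into components $E'_{i,j}$ that jump from layer $L_i$ to layer $L_j$ with $j\ge i+k$ (Lemma~\ref{lem:normalizing}), it bounds the \emph{efficiency} of each component (critical shortest paths served per edge). The two workhorses are Lemma~\ref{lem:volumnbound}, which says any single Steiner vertex lies on at most $\Delta^k$ critical shortest paths, and the thickness bound, which caps the number of distinct Steiner routes a fixed first edge can spawn at $\Delta^{(t-1)(1+\eps)}$; together with uniqueness of critical paths these force the efficiency to be at most $\Delta^{k-1-\eps}$ when $t\le(1-\eps)k$, and a volume count then shows a small shortcut must have some component exceeding that.

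The gap in your version is the peeling lemma, and I do not think your proposed fixes close it. First, the operation ``peel off the top layer of Steiner vertices in a $t$-thick shortcut of $H_t$ and obtain a $(t-1)$-thick shortcut of $H_{t-1}$'' is not well defined: the adversary's Steiner vertices have no reason to be stratified into layers aligned with the $H_1$-into-$H_{t-1}$ substitution structure, and $H_{t-1}$ is a different graph, so there is no canonical way to project a shortcut for $H_t$ onto $H_{t-1}$ while also lowering thickness. Second, your amortization fix presupposes exactly what needs to be proved: you want to argue that ``the base construction bounds how many canonical paths a single vertex can lie on,'' but Steiner vertices are \emph{chosen by the adversary}, not part of the base construction, so there is no a priori bound on how many canonical pairs route through one of them. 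In the paper this bound is Lemma~\ref{lem:volumnbound}, and its proof leans hard on the extreme-point geometry of $\cB(r,d)$ (if two critical pairs with the same critical direction pass through one Steiner vertex, then a convex-combination identity forces them to coincide). A generic edge-substitution construction does not automatically supply such a rigidity property, so the per-vertex sharing bound---and hence the whole amortization---has no foundation. Without that, contracting a high-degree Steiner vertex really can cost $\deg^+\cdot\deg^-$ edges with nothing to charge it against, and the induction stalls. If you want to rescue a peeling-style argument you would first need to prove a volume bound analogous to Lemma~\ref{lem:volumnbound} for your construction, at which point you would likely find the paper's direct efficiency argument is both shorter and avoids the ill-posed peeling step entirely.
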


In other words, \cref{thm:lowerbound-introduction} refutes \cref{conj:Steiner} for any fixed $t$, and in fact the proof goes through for any $t=o(\log n/\log\log n)$. This implies that any upper bound for \cref{conj:Steiner} would require a somewhat intricate connection between the Steiner vertices.

Roughly speaking, the ``hard'' graphs used to prove \cref{thm:lowerbound-introduction} 
are a family $\mathcal{G}_k$, which are constructed as the $k$-fold product of a base graph,
where $k>t$.  
This graph class is \emph{highly structured}, which actually makes it too easy to shortcut with Steiner vertices; in particular, the $k$-fold product $\mathcal{G}_k$ can be shortcut with thickness $t = k+1$.  A natural and generic way to make
this graph class more difficult to shortcut is to 
discard each edge with some probability $p$ 
inversely proportional to the diameter; denote this noised class as $\cGtil_k(p)$.  It is an open problem whether $\cGtil_k(p)$ suffices to refute \cref{conj:Steiner}. As mild evidence of the hardness of this graph family, we show that, unlike $\mathcal{G}_k$, it is resistant to $t$-thick Steiner shortcuts for 
$t \leq 8k/7$. 
See Section \ref{sec:thick steiner shortcuts} for a more detailed discussion.

\paragraph{Potential Implications.} If Conjecture \ref{conj:Steiner} turns out to be true, this would have exciting algorithmic applications. Most notably, it would suggest a promising approach for solving parallel directed reachability with near-linear work and $\polylog$ depth.

In the directed reachability problem, we are given as input a graph $G$ and a vertex pair $s,t$, and asked to report whether there exists any path from $s$ to $t$. While low-depth algorithms are known if we allow polynomial work, it is still unclear what depth is possible if we restrict ourselves to $m^{1+o(1)}$ work. A depth of $n$ is trivial and was the best known for a long time. A breakthrough of Fineman in 2018 achieved depth $n^{2/3}$ \cite{Fineman20}, which was later improved to $n^{1/2+o(1)}$ \cite{LiuJS19}. The $n^{1/2+o(1)}$ bound can also be generalized to exact shortest paths~\cite{cao2020efficient,RozhonHMGZ23}. 

All known $m^{1+o(1)}$-work algorithms with sublinear depth 
rely on the following basic approach:
{\bf 1)} compute a shortcut $E_+$ with diameter $h$ 
{\bf 2)} run BFS from $s$ in $G_+ = (V,E\cup E')$.  The existing work has focused on optimizing step 1, and there is still potential for further progress in this direction. But such progress has a hard barrier: even with the perfect hopset construction, the lower bounds of \cite{hopp25} proves that this approach could never lead to depth $o(n^{2/9})$. 

Observe, however, that the framework would work just as well if we replaced $E'$ with a \emph{Steiner} shortcut. So if Conjecture \ref{conj:Steiner} turned out to be true, this would suggest a natural approach for overcoming the above barrier and achieving $\polylog(n)$ depth. An analogous framework is also frequently used in other models of computation (dynamic, distributed, streaming), and in all those cases a Steiner shortcut would work just as well as a regular one. 

The main body of the paper focuses exclusively on the simplest case of (Steiner) shortcuts for reachability because even here we are not sure if Conjecture \ref{conj:Steiner} is true or false. But if the conjecture ends up being true, then one can state a natural generalization of the conjecture for shortest paths (a Steiner hopset) and maximum flow (a low-congestion Steiner shortcut): see \Cref{sec:generalized} for more details.

\paragraph{Connection to Circuit Complexity.}
While Steiner shortcuts are primarily motivated by the algorithmic applications mentioned above, they are also a natural combinatorial question in their own right. 
Our \cref{conj:Steiner} can be recast in terms of circuit complexity.  Consider a multi-input, multi-output circuit $C$ consisting solely of unbounded fan-in OR gates.
We would like to ``compress'' $C$ into an equivalent 
circuit of comparable size but smaller depth.
Conjectue \ref{conj:Steiner} is equivalent to \Cref{conj:circuit} below; none of the prior work on depth-reduction~\cite{Hastad86,Schnitger82,Schnitger83,raz1997separation, de2016limited,GolovnevKW21} directly applies to \cref{conj:circuit}.

\begin{conjecture}\label{conj:circuit}
    Suppose $C : \{0,1\}^a \to \{0,1\}^b$ is a circuit
    consisting solely of OR gates and $m$ wires.  
    There is a logically equivalent circuit 
    $C_+ : \{0,1\}^a\to \{0,1\}^b$ that has 
    $\widetilde{O}(m)$ wires and depth $\polylog(m)$.
\end{conjecture}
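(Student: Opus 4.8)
The plan is to recognize \cref{conj:circuit} as a graph-shortcut problem and reduce it to \cref{conj:Steiner}. An OR-only circuit $C$ with $m$ wires \emph{is} a DAG $D_C$: its vertices are the $a$ inputs, the $b$ outputs, and the OR gates, and its arcs are the $m$ wires. Since an OR gate outputs the disjunction of its inputs, output $j$ equals $\bigvee\{x_i : i \text{ reaches } j \text{ in } D_C\}$; hence two OR circuits on the same input/output labels compute the same function \emph{if and only if} they induce the same input-to-output reachability relation. Conversely, from any DAG $D_+$ that contains the $a$ inputs and $b$ outputs, has the same input-to-output reachability as $D_C$, and has longest directed path at most $\delta$, placing an OR gate at each non-input vertex yields a circuit $C_+$ equivalent to $C$ with $|E(D_+)|$ wires and depth $\delta$. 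Therefore \cref{conj:circuit} is exactly \cref{conj:Steiner} restricted to DAGs --- with circuit depth playing the role of diameter --- and a standard strongly-connected-component condensation (shortcutting each component internally by an in-tree/out-tree on linearly many new vertices) shows the DAG case is equivalent to the general case. So a proof amounts to: given the DAG of $C$, construct a reachability-preserving ``Steiner super-DAG'' with $\widetilde O(m)$ arcs and $\polylog(m)$ depth.

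\textbf{Routes to the construction.} Two natural strategies present themselves. (i) \emph{Low-dimension decomposition.} By Berman, Bhattacharyya, Grigorescu, Raskhodnikova, Woodruff, and Yaroslavtsev~\cite{BermanBGRWY14}, if the reachability relation of $D_C$ embeds (reachability-preservingly) into the oriented hypergrid $[N]^d$, it admits a depth-$2$ Steiner shortcut with $O(N\log^d N)$ arcs, which, read as a circuit, already settles \cref{conj:circuit} for bounded-dimension inputs; one would then try to cover an arbitrary $D_C$ by a near-linear number of bounded-dimension pieces and shortcut each. (ii) \emph{Recursive separators.} Find a small vertex set $S$ crossed by every long input-to-output path, attach the two sides to $S$ through cheap Steiner fan-in/fan-out trees, and recurse --- the mechanism behind the $\widetilde O(n)$-size, $n^{2/3}$- and $n^{1/3}$-diameter shortcuts of Fineman and of Kogan--Parter~\cite{Fineman20,kogan2023faster}, which one would push toward $\polylog$ depth by using Steiner vertices to route $S$ more economically.

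\textbf{Main obstacle.} The decisive barrier is \cref{thm:lowerbound-introduction}: any $\widetilde O(m)$-wire, $\polylog(m)$-depth solution corresponds to a super-DAG of thickness $\omega(\log n/\log\log n)$ (with $n$ the number of gates), i.e.\ $C_+$ must contain chains of super-logarithmically many consecutive OR gates that are \emph{not} merely forwarding a single wire. Hence no proof can succeed by replacing local structure with a shallow gadget --- hyperedge- and star-type gadgets, being $2$-thick, are explicitly ruled out --- so the construction must globally coordinate long chains of new gates, and neither route above is known to produce that. Compounding this, the paper's candidate hard family $\cGtil_k(p)$ (noised $k$-fold products) resists every $t$-thick shortcut with $t\le 8k/7$, so a proof of \cref{conj:circuit} must in particular output, for these instances, a near-linear, polylog-depth OR circuit whose internal wiring is qualitatively more intricate than the natural $(k{+}1)$-thick circuit for the un-noised $\mathcal{G}_k$; and classical circuit depth-reduction~\cite{Hastad86,raz1997separation,GolovnevKW21} lives in a different regime (bounded-depth classes, or polynomial size blowup) and offers no leverage. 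In short: the reduction above is routine, but its last step --- constructing Steiner shortcuts of super-logarithmic thickness --- has no known technique and might not even be possible; the concrete next target is the $\cGtil_k(p)$ family, where a near-linear polylog-depth OR circuit would be strong evidence for \cref{conj:circuit} and a matching lower bound would refute it.
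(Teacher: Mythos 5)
The statement you were asked to prove is a \emph{conjecture}: the paper does not prove \cref{conj:circuit}, explicitly states uncertainty about its truth, and offers only (i) an equivalence with \cref{conj:Steiner}, proved in the appendix as \cref{thm:equiv_conj} via \cref{lem:or_circuit} and \cref{obs:or_reach}, and (ii) evidence both for and against it. Your proposal, correctly, does not prove the conjecture either; what it does establish --- that an OR-only circuit is determined up to logical equivalence by its input-to-output reachability relation, so the conjecture is a Steiner-shortcut problem for the circuit's DAG --- is essentially the paper's own equivalence argument. Two details you wave through as routine are spelled out in the paper's proof and are worth noting: \cref{conj:Steiner} demands that reachability be preserved for \emph{all} pairs in $V\times V$, not only input-output pairs, which the paper handles by attaching a fresh source $s_i$ and sink $t_i$ to every vertex $v_i$ (in one direction of the reduction) and by re-attaching $(v_i,s_i),(t_i,v_i)$ edges (in the other); and circuit \emph{depth} and circuit \emph{diameter} are not interchangeable for free --- the paper converts a diameter-$d$ circuit into a depth-$O(d)$ one by a layering step that copies each gate $d$ times, costing a factor $d$ in size, which is harmless only because $d=\polylog$.

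Beyond the reduction, your assessment of the state of affairs matches the paper's: \cref{thm:lowerbound-introduction} forces any $\widetilde{O}(m)$-wire, $\polylog$-depth solution to have thickness $\omega(\log n/\log\log n)$, ruling out shallow local gadgets, and the noised family $\cGtil_k(p)$ (\cref{thm:improvedlowerboundundernoise}) is the paper's proposed next obstacle; neither \cite{BermanBGRWY14} nor the separator-based shortcut machinery is known to reach the required regime. So there is no error to report, but also no proof --- the honest conclusion of your write-up, that the last step has no known technique and may be impossible, is exactly the paper's position, and the statement remains open.
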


\paragraph{Organization.}
In \Cref{sec:Hesse-counterexample} we give a self-contained exposition of Hesse's lower bound for (non-Steiner) shortcuts, which we denote by graph family $\cG_k$; we will later use this same graph for \Cref{thm:lowerbound-introduction}. We then show that the lower bound fails once Steiner vertices are allowed. In \Cref{sec:app:breaking} we also show how Steiner vertices break the lower bound of \cite{WilliamsXX24}.\footnote{The most recent $\Omega(n^{2/9})$ lower bound of \cite{hopp25} is a slightly optimized version of \cite{WilliamsXX24} and can be broken in the same way.} 
All other existing lower bounds that we know of can be broken in a similar fashion using Steiner vertices, but we omit the details.

\Cref{sec:thick steiner shortcuts} contains our thickness lower bounds. We prove \cref{thm:lowerbound-introduction}. We also formalize the noised graph family $\cGtil_{k}$---our candidate hard instance for  \Cref{conj:Steiner}---and prove that shortcutting $\cGtil_{k}$ requires strictly higher thickness than the one for $\cG_k$.

In \Cref{sec:generalized} we discuss generalizations of Steiner shortcuts to Steiner hopsets and Steiner low-congestion shortcuts.

\section{Hesse's Counterexample Graphs $\mathcal{G}_k$}\label{sec:Hesse-counterexample}
In this section we review Hesse's lower bound construction~\cite{Hesse03} and illustrate why his argument breaks down when Steiner vertices are allowed.\footnote{Technically, we consider a slight variation of Hesse's construction, decreasing the density of the graph to improve our achievable thickness lower bounds from $\sqrt{\log n}$ to $\log n/\log \log n$. This modification has no significant effect on the analysis in the non-Steiner setting nor for breaking the construction via Steiner points.} The attack presented for Hesse's construction also applies, with suitable modifications, to all subsequent lower bound \cite{HuangP21,LuWWX22,WilliamsXX24,hopp25}; for details of how to do this with \cite{WilliamsXX24} in particular, see Appendix \ref{sec:app:breaking}.
The reason we describe Hesse's counterexample in detail 
is that it is the basis for 
the lower bounds of 
Section \ref{sec:thick steiner shortcuts}.

Following~\cite{Hesse03} we construct a family of graphs $\mathcal{G}_k$, $k \geq 2$.
Each graph $G=(V,E) \in \mathcal{G}_k$ will have an associated set of critical vertex pairs $P \subseteq V \times V$ such that any reachability shortcut of $G$ 
has size at least $|P|$ or hop diameter at least 
$\text{poly}(n)$ between some pair of vertices 
$(s, t) \in P$. The properties of graph family $\mathcal{G}_k$ can be summarized in the following theorem. 

\begin{restatable}[{Cf. \cite[Theorem 4.1]{Hesse03}}]{theorem}{hesse} \label{thm:hesse}
Let $k > 1$ be an integer and $\epsilon=\Omega(1/k)$.  
$\mathcal{G}_k$ is an infinite family of $n$-vertex, $n^{1+O(1/k)}$-edge directed graphs. Each graph $G=(V,E) \in \mathcal{G}_k$ has an associated set of critical vertex pairs $P \subseteq V \times V$ of size 
$|P| = \Omega(n^{1.5-\eps})$.
For any shortcut set $E_+ \subseteq E^*$ of $G$, 
where $G^*=(V,E^*)$ is the transitive closure of $G$,
either
\begin{itemize}
    \item $|E_+| \geq |P|$,  or 
    \item There is a pair of vertices $(s, t) \in P$ 
    such that $\dist_{G_+}(s,t) = n^{\Omega(1/k)}$ in 
    $G_+=(V,E\cup E_+)$.
\end{itemize}
\end{restatable}

In \Cref{subsec:break}, we show that 
Hesse's construction and argument fails in the presence of Steiner vertices, as  captured by the following theorem.

\begin{restatable}{theorem}{Steinerattack} \label{thm:Steinerattack}
Let $k\ge 2$ be an integer, and let $G=(V,E)\in\mathcal{G}_k$ be an $n$-vertex, $n^{1+O(1/k)}$-edge graph 
with associated critical vertex pairs 
$P \subseteq V\times V$.
Then $G$ admits a $k$-thick Steiner shortcut graph 
$G_+=(V\cup V_+,E\cup E_+)$ with $|E_+|=|E|/n^{\Omega(1/k)}$,
such that every $(s,t)\in P$ has $\dist_{G_+}(s,t)=k$.
\end{restatable}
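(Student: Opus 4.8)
The plan is to exploit the product structure of Hesse's graphs $\mathcal{G}_k$: a graph $G \in \mathcal{G}_k$ is (essentially) a $k$-fold product of a small base graph, so a path witnessing a critical pair $(s,t) \in P$ can be decomposed into $k$ "coordinate moves," and a single Steiner vertex can be introduced to absorb each intermediate configuration. First I would recall the explicit description of $\mathcal{G}_k$ from the proof of \Cref{thm:hesse} (in \Cref{sec:Hesse-counterexample}): the vertex set is built from tuples, each critical pair $(s,t)$ is connected in $G$ by a canonical path that proceeds through $k$ stages, and the reason the non-Steiner lower bound works is precisely that these canonical paths heavily overlap yet cannot all be shortcut cheaply with ordinary edges. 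I would then define $V_+$ to consist of one Steiner vertex $x_\pi$ for each "stage-boundary configuration" $\pi$ that arises along some canonical path — there should be only $n^{1-\Omega(1/k)}$ such configurations because each configuration is specified by fewer than all $k$ coordinates, or more precisely because the number of distinct intermediate states is a factor $n^{\Omega(1/k)}$ smaller than $|P|$ itself.

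The core construction is then as follows. For each Steiner vertex $x_\pi$, add an incoming edge from every original vertex $u$ that appears as a stage-boundary on the canonical path for some critical pair, with $\pi$ the configuration reached at $u$; and add an outgoing edge from $x_\pi$ to every original vertex $v$ reachable from that configuration in the next stage. More carefully, since we want thickness exactly $k$ (at most $k-1$ consecutive Steiner vertices) and the canonical path has $k$ stages, I would chain the Steiner vertices: $s \to x_{\pi_1} \to x_{\pi_2} \to \cdots \to x_{\pi_{k-1}} \to t$, so that the $s$-$t$ path in $G_+$ has exactly $k$ edges and passes through $k-1$ Steiner vertices in a row. Here $\pi_1, \ldots, \pi_{k-1}$ are the successive stage-boundary configurations of the canonical $s$-$t$ path. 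The key counting step is that the total number of Steiner-to-Steiner and Steiner-to-original edges is dominated by the number of consecutive configuration pairs $(\pi_i, \pi_{i+1})$, which — because each $\pi_i$ determines the "future" coordinates and $\pi_{i+1}$ only changes one more coordinate — is at most $n^{1+O(1/k)}/n^{\Omega(1/k)} = |E|/n^{\Omega(1/k)}$, matching the claimed bound $|E_+| = |E|/n^{\Omega(1/k)}$. One then verifies: (i) $G_+$ preserves reachability (every Steiner edge corresponds to a genuine reachability in $G$, and the chain realizes the canonical path, so transitive closure is unchanged); (ii) thickness is $k$ by construction; (iii) $\dist_{G_+}(s,t) = k$ for every $(s,t) \in P$, since the $k$-edge chain exists and $s \neq t$ forces at least $\ge 1$, with the path length being exactly $k$ by the chain length.

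The main obstacle I anticipate is the bookkeeping in the counting step: I must argue that the Steiner vertices $\{x_\pi\}$ and their incident edges are genuinely cheap, i.e. that the "configuration alphabet" at each stage boundary is small enough. This hinges on the precise combinatorial design of $\mathcal{G}_k$ — specifically on the fact that a stage-$i$ boundary configuration is pinned down by only the last $k-i$ (or first $i$) coordinates of the product, so the number of such configurations is $n^{1 - i/k + o(1)}$ or similar, and summing the bipartite in/out degrees over all Steiner vertices telescopes to $O(|E| \cdot n^{-\Omega(1/k)})$ rather than blowing up to $\Theta(|E| \cdot k)$ or worse. A secondary subtlety is ensuring that the \emph{same} Steiner gadget simultaneously serves \emph{all} critical pairs passing through a given configuration, so that we do not accidentally need a fresh copy of $x_\pi$ per pair — this is exactly the overlap property that made the pairs hard in the non-Steiner setting, now turned to our advantage. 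Once the configuration-counting lemma is in hand, properties (i)--(iii) are immediate from the definitions, so I would structure the write-up as: (1) recall $\mathcal{G}_k$ and canonical paths; (2) prove the configuration-count bound; (3) define $V_+, E_+$; (4) verify reachability, thickness, and diameter.
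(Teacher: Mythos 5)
Your high-level plan matches the paper's construction closely: chain $k-1$ Steiner vertices so that each hop advances one coordinate of the product all at once by $\ell\vec{v}_i$ (rather than incrementing coordinates round-robin as the original edges do), yielding a $k$-hop path for every critical pair and a $k$-thick shortcut; the verifications of reachability preservation, thickness, and diameter then go through essentially as you sketch.

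However, the reasoning you give for the key counting step is wrong, and following it literally would lead you astray. You claim a stage-$i$ configuration ``is pinned down by only the last $k-i$ (or first $i$) coordinates,'' so that the Steiner alphabet shrinks with $i$ and the count is $n^{1-i/k+o(1)}$. In fact each Steiner vertex in the $i$-th Steiner layer $S_i$ must carry a \emph{full} $k$-tuple $(p_0',\ldots,p_{k-1}')$ with every $p_j'\in\mathbb{Z}^d$ in the usual rectangle: the first $i$ entries are already advanced to $p_j+\ell\vec{v}_j$ and the remaining $k-i$ are still at their original values, but all $k$ are needed to address the vertex. Consequently $|S_i|=N$ for every $i$, the same size as an original layer $L_{i,j}$. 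The savings come not from fewer coordinates per configuration but from fewer \emph{layers}: there are only $k-1$ Steiner layers versus $k\ell$ original ones, so $|V_+|=(k-1)N=\Theta(|V|/\ell)$, and with in/out-degree at most $\Delta$ per Steiner vertex one gets $|E_+|=O(|V_+|\Delta)=O(|E|/\ell)=|E|/n^{\Omega(1/k)}$. Your formula happens to produce a similar final magnitude, but the mechanism is different and matters for the correctness argument; in particular, verifying that $G_+$ does not add new reachability to $V\times V$ relies on the full $k$-coordinate addressing to ensure that every $s$-$t$ path through the Steiner layers corresponds to a genuine critical pair of $G$.

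A related misconception: the Steiner vertices are \emph{not} ``stage-boundary configurations that arise along some canonical path.'' On the canonical path all $k$ coordinates advance round-robin, so intermediate vertices have all coordinates partially advanced; the Steiner configurations instead have a prefix of coordinates fully advanced and a suffix untouched, and such points appear nowhere in $G$. This is precisely why Steiner vertices are essential here and why the non-Steiner lower bound does not apply. Your write-up should make the construction explicit from the product structure rather than from positions on canonical paths, otherwise the counting and the transitive-closure check will not come out correctly.
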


\begin{remark} All existing lower-bound constructions for (non-Steiner) shortcut sets have the same high-level structure: they tailor the construction to produce a set of critical vertex pairs $P$, and then show that these particular pairs are hard to shortcut. The constructions make no claims about other vertex pairs. For this reason, to show that these lower-bound proofs fail in our new setting, it is enough to show that the critical pairs in particular can be shortcut using Steiner vertices. We strongly believe that for all these constructions one can in fact use Steiner vertices to shortcut \emph{all} pairs in the graph, but we did not think this would be instructive, since by design only the critical pairs have interesting structure.
\end{remark}

In \cref{sec:thick steiner shortcuts} we show
that \cref{thm:Steinerattack} is essentially tight for the family $\mathcal{G}_k$, in that any $t$-thick Steiner shortcut set 
with $t\ll k$ must have size superlinear in $m$ or diameter polynomial in $m$.
Before reviewing Hesse's 
construction of $\mathcal{G}_k$
we begin with $\mathcal{G}_1$, 
which introduces some of the key
ideas and is sufficient to 
prove lower bounds on $\tilde{O}(n)$-size shortcuts; 
see Huang and Pettie~\cite{HuangP21}.

\subsection{Warm-up Construction --- Graph Family $\mathcal{G}_1$}\label{subsec:warmup-construction}

We construct a graph family $\mathcal{G}_1$ that does not quite imply polynomial diameter lower bounds for $\tilde{O}(m)$-size shortcut sets. 
This will motivate some of the design choices introduced in the graph family $\mathcal{G}_k$ in \Cref{subsec:counter}. 

Graph family $\mathcal{G}_1$ is parameterized by constants $c \in \bbR^+$ and $d \in \bbN^+$.  We denote a specific parameterization of $\mathcal{G}_1$ by  $ \mathcal{G}_1(c, d)$. 
The size of each graph $G = (V, E) \in \mathcal{G}_1(c, d)$ is controlled by a parameter $r\in \bbN^+$.

    \paragraph{Vertex Set $V$.}  Let $\ell = r^c$.     Graph $G$ has $\ell+1$ layers, $L_0,L_1, \dots, L_{\ell}$. The vertices of each layer can be written as a collection of points in a $d$-dimensional grid. Formally, for $i \in [0, \ell]$, we define $L_i$ to be
    $$
    L_i = \{i\} \times (-r(r+1)\ell, r(r+1)\ell]^d,
    $$
    and let $V = \cup_{i=0}^{\ell} L_i$.
    \paragraph{Edge Set $E$.} Let $B(r, d) \subseteq \bbZ^d$ denote the set of  all integer vectors $\Vec{v}$ in $\mathbb{Z}^d$ of magnitude $\|\Vec{v}\| \leq r$, where $\|\cdot \|$ denotes the standard Euclidean norm. Let $\mathcal{B}(r, d) \subseteq B(r, d)$ denote the set of extreme points of the convex hull of $B(r, d)$. By B\'ar\'any and Larman~\cite{barany1998convex}, 
    $|\mathcal{B}(r, d)|  = \Theta\left( r^{d \cdot \frac{d-1}{d+1}}\right)$.
 The edge set $E$ of $G$  consists of
all $(x,y)\in V\times V$ where
\begin{align*}
    x &= (i,p) \in L_{i},\\
    y &= (i+1, p + \vec{v}) \in L_{i+1},\\
    \text{and } \vec{v} &\in \cB(r,d).
\end{align*}

\paragraph{Critical  Pairs and Critical Paths.} 
We will define a collection of vertex pairs and associated paths in $G$ that are particularly useful in the analysis of $G$. We first specify a set of source nodes $S \subseteq L_0$ in the first layer of $G$ that will be useful for defining our set of critical vertex pairs. Let $S$ be the set
    $$
    S = \{0\} \times (-r^2\ell, r^2\ell]^d.
    $$
    For each vertex $s = (0, p) \in S$ and a vector $\vec{v} \in \mathcal{B}(r, d)$, let $t$ be the vertex $t = (\ell, p + \ell \vec{v})$.  Note that $t$ is guaranteed to lie in $L_\ell$ 
    due to $p\in (-r^2\ell,r^2\ell]^d$ and 
    $\|\vec{v}\|\leq r$.
   We call $(s, t)$ a \textit{critical vertex pair}; let $P \subseteq S \times L_{\ell}$ denote the set of all critical vertex pairs in $G$. 
    Each $(s,t)\in P$ is identified with the \emph{critical path} $P_{s,t}$, defined as follows.
    $$
    P_{s, t} = ((0, p), (1, p+\vec{v}), \dots, (\ell, p + \ell \vec{v})).
    $$
    Let $\mathcal{P} = \{P_{s, t} \mid (s, t) \in P\}$ be the set of critical paths.  

We summarize several properties of graph $G$, critical pairs $P$, and critical paths $\mathcal{P}$ in the following lemma, which we state without proof. 

\begin{lemma}[{Cf.~\cite[Lemma 2.2 and 2.6]{HuangP21}}]\label{lem:HuangP-properties}
    Graph $G$ and paths $\mathcal{P}$ have the following properties:
    \begin{description}
        \item[Sizes.] $|V| = \Theta(r^{c+2d+cd})$, $|E| = \Theta(r^{d \cdot \frac{d-1}{d+1}} \cdot |V|)$, and $|P| = |\mathcal{P}| = \Theta(r^{2d + cd + d \cdot \frac{d-1}{d+1}})$.
        \item[Uniqueness.] Every path $P_{s, t} \in \mathcal{P}$ is a unique $(s, t)$-path in $G$. 
        \item[Disjointness.] Paths in $\mathcal{P}$ are pairwise edge-disjoint. 
    \end{description}
    \label{lem:hp_lowerbound}
\end{lemma}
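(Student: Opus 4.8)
The plan is to establish the three bullet points essentially independently: \textbf{Sizes} by direct counting, \textbf{Uniqueness} by a single convex-geometry argument, and \textbf{Disjointness} by a short algebraic observation.

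For \textbf{Sizes}, I would first note that each layer $L_i$ is a $d$-dimensional integer box of side $2r(r+1)\ell$, so $|L_i| = \Theta\bigl((r^2\ell)^d\bigr) = \Theta(r^{(c+2)d})$ after substituting $\ell = r^c$; summing over the $\ell+1 = \Theta(r^c)$ layers gives $|V| = \Theta(r^{c+2d+cd})$. For the edge count, every vertex of $L_i$ with $i<\ell$ lying at $\|\cdot\|$-distance more than $r$ from the boundary of its box has exactly $|\mathcal{B}(r,d)|$ out-edges (distinct $\vec{v}\in\mathcal{B}(r,d)$ give distinct heads $p+\vec{v}$), and such vertices form a $1-o(1)$ fraction of each layer since the step size $r$ is $o(r(r+1)\ell)$; with the B\'ar\'any--Larman estimate $|\mathcal{B}(r,d)| = \Theta(r^{d(d-1)/(d+1)})$ this yields $|E| = \Theta\bigl(\ell\cdot|L_0|\cdot|\mathcal{B}(r,d)|\bigr) = \Theta(|V|\cdot r^{d(d-1)/(d+1)})$. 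Finally $|S| = \Theta\bigl((r^2\ell)^d\bigr)$, and distinct $(s,\vec{v})\in S\times\mathcal{B}(r,d)$ give distinct critical pairs (two different slopes $\vec{v}$ from the same source yield different sinks $t=(\ell,p+\ell\vec{v})$), so $|P| = |\mathcal{P}| = |S|\cdot|\mathcal{B}(r,d)| = \Theta(r^{2d+cd+d(d-1)/(d+1)})$; along the way one verifies $t\in L_\ell$, and more generally that every intermediate vertex of $P_{s,t}$ lies in its layer, via the bound $\|p+i\vec{v}\|_\infty \le r^2\ell + \ell r = r(r+1)\ell$ valid for all $i\le\ell$.

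For \textbf{Uniqueness}, I would use that $G$ is layered: any $(s,t)$-path with $s=(0,p)$ and $t=(\ell,p+\ell\vec{v})$ has exactly one edge between each consecutive pair of layers, so it is described by displacement vectors $\vec{v}_1,\dots,\vec{v}_\ell\in\mathcal{B}(r,d)$ with $\sum_{i=1}^{\ell}\vec{v}_i = \ell\vec{v}$. Then $\vec{v} = \frac1\ell\sum_i\vec{v}_i$ is a convex combination of points of $B(r,d)$, but $\vec{v}$ is by the definition of $\mathcal{B}(r,d)$ an extreme point of $\operatorname{conv}(B(r,d))$, hence cannot be a nontrivial average of points of that convex body, forcing $\vec{v}_i = \vec{v}$ for every $i$; thus the path coincides with $P_{s,t}$. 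For \textbf{Disjointness}, observe that the edge $P_{s,t}$ uses between layers $i$ and $i+1$ is $\bigl((i,\,p+i\vec{v}),(i+1,\,p+(i+1)\vec{v})\bigr)$, and this single edge recovers both $\vec{v}$ (as its displacement) and then $p$. So two critical paths that share an edge have the same $p$ and the same $\vec{v}$, hence are identical; distinct critical paths are therefore edge-disjoint.

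The only step carrying genuine content is the convexity argument in \textbf{Uniqueness}: an extreme point of a convex set is never a nontrivial convex combination of points of that set. This is precisely why the construction uses $\mathcal{B}(r,d)$ --- the extreme points of $\operatorname{conv}(B(r,d))$ --- rather than all of $B(r,d)$ as the edge slopes. Everything else is bookkeeping; the one place to be slightly careful is the grid-boundary loss in the edge count, which is negligible exactly because $\ell = r^c$ makes the box side length much larger than the step size $r$.
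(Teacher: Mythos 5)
Your proof is correct. The paper states this lemma without proof (deferring to \cite{HuangP21}), so there is no internal proof to compare against; however, your extreme-point argument for Uniqueness is exactly the one the paper uses for the $\mathcal{G}_k$ analogue (\cref{lem:criticalpairbasic} in Appendix~D), and your counting and edge-disjointness reasoning is the standard bookkeeping that argument relies on, so this is essentially the intended approach.
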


\paragraph{A remark on the geometry of graph $G$.}  
Vertex set $V$ and paths $\mathcal{P}$ have a very natural geometric interpretation as an arrangement of points and lines in $\bbR^{d+1}$. 
Notice that the vertices of each path $P_{s, t} \in \mathcal{P}$ lie on a line of the form $\{x \cdot (1, \vec{v}) + (0, p) \mid x \in \bbR\}$ in $\bbR^{d+1}$ for $\vec{v} \in \mathcal{B}(r, d)$ and $(0, p) \in S$. Moreover, because $\mathcal{B}(r, d)$ consists only of extreme points, each such direction $\vec{v}$  \emph{cannot} be expressed as a convex combination of other vectors in $\mathcal{B}(r, d)$.
This geometric perspective is useful for interpreting the Uniqueness and Disjointness properties of \Cref{lem:hp_lowerbound}.

\paragraph{Implications and Limitations of $\mathcal{G}_1$.} 
Graph class $\mathcal{G}_1$ already gives an interesting lower bound: any $O(n)$-sized shortcut-set of $\mathcal{G}_1$ has diameter at least $\Omega(\textrm{poly}(n))$. This follows almost immediately from Lemma \ref{lem:HuangP-properties}: Since the critical paths are pairwise disjoint, a shortcut edge can only be relevant to a single critical path, so shortcutting every critical pair would require at least $|P|$ edges; by the construction of $\mathcal{G}_1$, 
we have $|P| = \Omega(n)$.
Unfortunately it is impossible to have $|P|=\Omega(m)$ 
in $\mathcal{G}_1$.  In order to force $|P| \gg m$, Hesse~\cite{Hesse03} defines graph $\mathcal{G}_k$ to be 
a kind of $k$-fold product of $\mathcal{G}_1$.

\paragraph{Optimizing $\mathcal{G}_1$ for Steiner shortcut lower bounds.} Technically, the construction we consider differs slightly from~\cite{Hesse03} and~\cite{HuangP21}, although the difference is essentially aesthetic for the non-Steiner parts of our discussion. We define $L_i = \{i\} \times (-r(r+1)\ell, r(r+1)\ell]^d$ as opposed to $\{i\} \times (-4r\ell, 4r\ell]^d$ as in~\cite{Hesse03}. Consequently, we can define $S$, the starting points of our critical paths, as $\{0\} \times (-r^2 \ell, r^2 \ell]^d$. $S$ is instead defined as $\{0\} \times (-2r\ell, 2r\ell]^d$ in~\cite{Hesse03}. This means that in Hesse's construction only a constant fraction of vertices in the starting layer are the first vertex in a critical path; for our modified construction, all but a vanishing fraction of the vertices are. When we prove~\Cref{thm:lowerbound-introduction} via~\Cref{thm:lowerbound}, this makes the difference between a thickness lower bound of $o(\sqrt{\log n})$ versus $o(\log n/\log \log n)$.

\subsection{Hesse's $\cG_k$ Family}
\label{subsec:counter}

\paragraph{Construction parameters.} 
Graph family $\mathcal{G}_k$ is parameterized by constants $c \in \bbR^+$ and $d \in \bbN^+$.  We denote a specific parameterization of $\mathcal{G}_k$ by  $ \mathcal{G}_k(c, d)$. 
The size of each 
$G = (V, E)$ in $\mathcal{G}_k$ is controlled by 
a parameter $r\in \bbN^+$. 

\paragraph{Vertex Set.}  
Fix $\ell=r^{c}$.
The vertex set $V$ is partitioned into a collection of $k\ell + 1$ disjoint layers 
$\{L_i\}_{i \in [0, k \cdot \ell]}$. 
We will frequently use $L_{i, j}$ to denote 
layer $L_{i\cdot k + j}$, so for example
$L_{i,k}$ is another name for $L_{i+1,0}$.

The vertices of $L_{i,j}$ are identified with tuples $\left((i,j),(p_0,\ldots,p_{k-1})\right)$, where each point $p_0, \dots, p_{k-1} \in \bbZ^d$ is contained in the rectangle $(-(r+1)r\ell, (r+1)r\ell]^d$.
Let $N = |L_{i,j}| = (2(r+1)r\ell)^{kd}$, 
so the total number of vertices is $(k\ell+1)N$.

\paragraph{Critical Subdirections.}
Recall that $B(r,d) = \{\vec{v} \in \mathbb{Z}^d \mid \|\vec{v}\| \le r\}$ is the radius-$r$ ball in dimension $d$ and $\cB(r,d)\subseteq B(r,d)$ is the set of extreme points of the convex hull of $B(r,d)$.
Each vector in $\cB(r,d)$ defines a \emph{critical subdirection} and 
by \cite{barany1998convex} the number
of critical subdirections is
$\Delta = |\cB(r,d)| = \Theta\left(r^{d\cdot\frac{d-1}{d+1}}\right)$.

\paragraph{Edge Set.} The edge set $E$ consists of
all $(x,y)\in V\times V$ where
\begin{align*}
    x &= ((i,j),(p_0,\ldots,p_j,\ldots,p_{k-1})) \in L_{i,j},\\
    y &= ((i,j+1),(p_0,\ldots,p_j + \vec{v},\ldots,p_{k-1})) \in L_{i,j+1},\\
    \text{and } \vec{v} &\in \cB(r,d).
\end{align*}
Naturally $(i,j+1)$ would be interpreted as $(i+1,0)$ when $j+1=k$.  In this way each vertex has out-degree at most $\Delta$.

\paragraph{Critical Directions, Critical Pairs, and Critical Paths.} 
A \textit{critical direction} is a 
$k$-tuple $(\vec{v}_0,\ldots,\vec{v}_{k-1}) \in (\mathcal{B}(r, d))^k$ of critical subdirections,
so there are clearly $\Delta^k$ critical directions. 
Define $S\subset L_{0,0}$ to be a rectangle around the origin.
$$
S = \{ ((0,0),(p_0,\ldots,p_{k-1})) \in L_{0, 0} \;\mid\; \text{$p_i  \in (-r^2\ell, r^2\ell]^d$ for each $i \in [0, k)$} \}.
$$

For each vertex $s = ((0,0),(p_0,\ldots,p_{k-1})) \in S$ and a critical direction $(\vec{v}_0,\ldots,\vec{v}_{k-1}) \in (\mathcal{B}(r, d))^k$, 
let $t$ be the vertex 
$t = ((\ell,0),(p_0+\ell \vec{v}_0,\ldots,p_{k-1}+\ell\vec{v}_{k-1})) \in L_{\ell, 0}$. 
We call $(s, t)$ a \textit{critical vertex pair};
let $P \subseteq S \times L_{\ell,0}$ denote the set of all critical vertex pairs in $G$. 

\begin{restatable}[{Cf.~\cite{Hesse03}}]{lemma}{criticalpairbasic} \label{lem:criticalpairbasic}
    
    Let $G\in \mathcal{G}_k$ with critical pair set $P$.
    \begin{itemize}
        \item $|P|  = \frac{N\Delta^k r^{dk}}{(r+1)^{dk}}$.
        \item Every critical pair $(s,t)\in P$ has a unique $s$-$t$ path $P_{s,t}$ in $G$, called its \emph{critical path}.
        \item If two critical paths  intersect at 
            $u\in L_i$ and $v\in L_{i'}$ then $i'<i+k$. 
    \end{itemize}
\end{restatable}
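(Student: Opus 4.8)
The plan is to establish the three claims in sequence, in all cases exploiting the strictly layered structure of $G\in\mathcal{G}_k$---every edge runs from $L_{i,j}$ to $L_{i,j+1}$ and changes only the $j$-th of the $k$ coordinate blocks---together with the defining extreme-point property of $\mathcal{B}(r,d)$ supplied by B\'ar\'any--Larman. \textbf{Size of $P$.} A critical pair is specified by a source $s\in S$ and a critical direction $\vec v=(\vec v_0,\dots,\vec v_{k-1})\in\mathcal{B}(r,d)^k$; I first check this parametrization is injective. The pair records $s$ explicitly, and for fixed $s=((0,0),(p_0,\dots,p_{k-1}))$ the target is $t=((\ell,0),(p_0+\ell\vec v_0,\dots,p_{k-1}+\ell\vec v_{k-1}))$, which indeed lands in $L_{\ell,0}$ since $p_j\in(-r^2\ell,r^2\ell]^d$ and $\|\vec v_j\|\le r$; so two directions differing in block $j$ give different $t$ because $\ell\ge 1$. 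Counting lattice points, $|S|=(2r^2\ell)^{dk}$, and since $N=(2(r+1)r\ell)^{dk}$ this equals $N\cdot(r/(r+1))^{dk}$, whence $|P|=|S|\cdot\Delta^k = N\Delta^k r^{dk}/(r+1)^{dk}$.

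\textbf{Uniqueness of the critical path.} Any $s$--$t$ walk in $G$ must advance exactly one layer per edge, so it is a simple path with $k\ell$ edges visiting every layer from $L_{0,0}$ to $L_{\ell,0}$ once; the edge leaving $L_{i,j}$ adds some vector of $\mathcal{B}(r,d)$ to block $j$ only. Thus block $j$ is modified precisely at the $\ell$ transitions $L_{i,j}\to L_{i,j+1}$ for $i=0,\dots,\ell-1$, and its net change along the path is a sum of $\ell$ elements of $\mathcal{B}(r,d)$ which must equal $\ell\vec v_j$. Since every element of $\mathcal{B}(r,d)$ is an extreme point of $\mathrm{conv}(B(r,d))$, it is not a convex combination of other points of $\mathcal{B}(r,d)$; so an average of $\ell$ points of $\mathcal{B}(r,d)$ equal to $\vec v_j$ forces all of them to be $\vec v_j$. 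This pins down every edge, giving the unique path $P_{s,t}$.

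\textbf{Localized intersection.} I will prove the contrapositive: if two critical paths $P_{s,t}$ and $P_{s',t'}$, with directions $\vec v$ and $\vec v'$, share vertices $u\in L_a$ and $v\in L_b$ with $a\le b$ and $b-a\ge k$, then the two critical pairs coincide. On the layer segment $L_a\to\cdots\to L_b$, block $j$ is updated exactly $n_j:=|\{m : a\le m<b,\ m\equiv j\!\!\pmod k\}|$ times---a quantity depending only on $a,b,j,k$, hence the same for both paths---and $b-a\ge k$ guarantees $n_j\ge 1$ for every $j$. By the uniqueness analysis each such update adds $\vec v_j$ on the first path and $\vec v'_j$ on the second, so the block-$j$ displacement from $u$ to $v$ equals both $n_j\vec v_j$ and $n_j\vec v'_j$; as $n_j\ge 1$ we get $\vec v_j=\vec v'_j$ for all $j$, i.e.\ $\vec v=\vec v'$. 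Writing the block-$j$ coordinate of $u$ as $p_j$ plus an integer multiple of $\vec v_j$ fixed by the layer of $u$, equality of $u$ on both paths together with $\vec v=\vec v'$ now forces $p_j=p'_j$ for all $j$, so $s=s'$ and the pairs are identical.

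\textbf{Anticipated difficulty.} The conceptual crux is the extreme-point argument behind uniqueness---precisely the feature of B\'ar\'any--Larman directions that makes Hesse's construction work---though it is short once the layer structure is in place. The only genuinely fiddly bookkeeping is tracking which coordinate block is updated at which transition under the indexing $L_{i,j}=L_{ik+j}$, and correctly handling the partial blocks near the endpoints $L_a,L_b$ in the intersection argument; neither presents a real obstacle.
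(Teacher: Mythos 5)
Your proof is correct and follows essentially the same route as the paper: the size count via $|S|\cdot\Delta^k$ with the lattice-point computation, uniqueness via the extreme-point property of $\mathcal{B}(r,d)$ applied blockwise, and the localized-intersection claim via the fact that $\ge k$ consecutive layer transitions touch every block at least once and therefore pin down the full critical direction (and hence the source $s$). The only cosmetic difference is that the paper phrases the third bullet in terms of ``$k$ consecutive edges determine the critical direction,'' while you argue directly from the displacement between the two shared vertices; both amount to the same observation.
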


\Cref{lem:criticalpairbasic} has a similar geometric interpretation as \Cref{lem:HuangP-properties}. The main difference is that the division into subdirections allows for two critical paths to have small overlap, but they still cannot overlap in two layers at distance $k$, which implies the third property. 
See \Cref{app:criticalpairbasic} 
for formal proof.

\begin{definition}[Summary of Parameters]\label{def:summary-parameters}
Let us summarize the 
relevant parameters of graph $G\in \mathcal{G}_k$ with size parameter $r$.

\begin{itemize}
    \item $G$ has $k\ell + 1 = kr^c+1$ layers in total, each with $N=(2(r+1)r\ell)^{kd}$ vertices, hence 
    $|V| = O(k\ell N) = O(k\ell(2(r+1)r\ell)^{kd})$.
    \item The outdegree of each vertex in $G$ is at most 
    $\Delta = \Theta\left(r^{d \cdot \frac{d-1}{d+1}}\right)$, and in fact the average degree of $G$ is $\Theta(\Delta)$, 
    hence $|E| = \Theta(n\Delta) = \Theta(k\ell\Delta N)$. 
    \item The number of critical pairs/critical paths is 
    $|P| = \Theta\left(\frac{N\Delta^k r^{dk}}{(r+1)^{dk}}\right)$ by \Cref{lem:criticalpairbasic}. 
\end{itemize}
\end{definition}

We are now ready to prove that graph $G$ and critical pairs $P$ imply a polynomial lower bound for reachability shortcuts. 
We restate \cref{thm:hesse} for convenience.

\hesse*

\begin{proof}
Let $E_+ \subseteq E^*$ be the set of shortcut edges for $G$, where 
$|E_+| < |P|$. We say that an edge $(u, v) \in E_+$ is \textit{short} if $(u, v) \in L_i \times L_j$ for some choice of $i, j \in [0, k \cdot \ell]$ such that $j < i + k$, and \emph{long} otherwise.

By \Cref{lem:criticalpairbasic}, each long shortcut edge can only be used by one critical
pair in $P$.  Since $|E_+|<|P|$, by the pigeonhole principle there is some $(s,t)\in P$ such that the shortest $s$-$t$ path in $G_+$ uses only original edges from $E$ and \emph{short} edges from $E_+$, hence
$\dist_{G_+}(s,t) \geq k\ell/(k-1) > \ell$.

Treating $k$ as constant, 
the quantitative claims follow by setting 
$\ell = r^c = r^{O(\epsilon)}$ and $d=\Omega(1/\epsilon)$, so that
$\Delta=\Theta(r^{d(d-1)/(d+1)})= r^{(1-\Theta(\epsilon))d}$,
$n=\Theta((2r)^{(2+c)kd}r^c)$, 
$m=O(n\Delta)=r^{(2+c)kd+c+(1-\Omega(\epsilon))d} = n^{1+O(1/k)}$,
$|P|=\Omega\left(\frac{N\Delta^k r^{dk}}{(r+1)^{dk}}\right)$=$r^{(2+c)kd + (1 - \Omega(\varepsilon))kd}$= $\Omega(n^{1.5-\epsilon})$.
\end{proof}

\subsection{Breaking Hesse's Counterexamples using Steiner Vertices}
\label{subsec:break}

In  this section we will show how Steiner vertices break Hesse's counterexample argument. We will achieve this by proving  \Cref{thm:Steinerattack}, 
restated for convenience.

\Steinerattack*

Let $G \in \mathcal{G}_k$ be the given graph,
parameterized by $r \in \bbN^+$.  Each critical pair
$(s,t)\in P$ is of the form $(((0,0),p), ((\ell,0),p+\ell\vec{v}))$, for some 
$p=(p_0,\ldots,p_{k-1})$ 
and critical direction 
$\vec{v}=(\vec{v}_0,\ldots,\vec{v}_{k-1})$. 
Rather than move from $s$ to $t$ by adding 
the $k$ critical subdirections 
in a ``round robin'' format, 
we can add each all at once, 
walking through new Steiner vertices corresponding to points
\[
p \rightarrow (p_0+\ell\vec{v}_0,\ldots,p_{k-1}) \rightarrow
(p_0+\ell\vec{v}_0,p_1+\ell\vec{v}_1,\ldots,p_{k-1}) \rightarrow
(p_0+\ell\vec{v}_0,p_1+\ell\vec{v}_1,p_2+\ell\vec{v}_2,\ldots,p_{k-1}), 
\]
and so on.  We now give the details.

\paragraph{Constructing a Steiner shortcut.} 
Let $G_+ = (V \cup V_+, E \cup E_+)$ be the Steiner shortcut for $G$.  
The vertex set $V_+$ is partitioned into a collection of $k-1$ disjoint layers $S_1, \dots, S_{k-1}$,
where $S_i$ is the set of tuples 
$(i, (p_1, \dots, p_{k-1}))$, each point $p_1, \dots, p_{k-1} \in \bbZ^d$ being contained in the 
rectangle $(-(r+1)r\ell, (r+1)r\ell]^d$. 
Note that by definition, each $S_i$ has the same number
of vertices as each $L_{i,j}$: $|S_i| = |L_{i, j}| = N$. 

Recall that $P\subset S\times L_{\ell,0}$, where $S\subset L_{0,0}$ is in the first layer.
We include in $E_+$ shortcut edges 
$E_0\cup E_1\cup \cdots \cup E_{k-1}$ 
of the following types.
For each $s = ((0, 0), (p_0, \dots, p_{k-1})) \in S$ and 
each subdirection $\vec{v}_0 \in \mathcal{B}(r, d)$, 
we add edges of the following form to $E_0$. 
$$
\Big(((0, 0), (p_0, \dots, p_{k-1})), \; (1, (p_0 + \ell \vec{v}_0, \dots, p_{k-1})\Big) \in S \times S_1
$$
Likewise, for each index $i \in [1, k-2]$, 
we include in $E_i \subseteq S_i \times S_{i+1}$ 
all edges of the following form,
$$
\Big((i, (p_0, \dots, p_{i}, \dots, p_{k-1})),
\; 
(i+1, (p_0, \dots, p_{i} + \ell\vec{v}_i, \dots, p_{k-1}))\Big) \in S_i \times S_{i+1}
$$
where $(i, (p_1, \dots, p_k)) \in S_i$ 
and $\vec{v}_i \in \mathcal{B}(r, d)$ is a critical subdirection.  Finally, $E_{k-1} \subseteq S_{k-1} \times L_{\ell, 0}$ contains all edges of the following form.
$$
\Big((k-1, (p_0, \dots, p_{k-1})),\; 
((\ell, 0), (p_1, \dots,  p_k+\ell\vec{v}_{k-1}))\Big) \in S_{k-1}\times L_{\ell,0},
$$
where $(k-1, (p_1, \dots, p_k)) \in S_{k-1}$ and 
$\vec{v}_{k-1} \in \mathcal{B}(r, d)$
is a critical subdirection.

This completes the construction of our Steiner 
shortcut $G_+$.  We now argue it is a valid Steiner
shortcut (i.e., it does not change the transitive closure
restricted to $V\times V$) and satisfies the size and distance claims of \Cref{thm:Steinerattack}.

\paragraph{Size of $G_+$.} For each $i \in [1, k-1]$,  $|S_i| = N$, so $|V_+| = kN = \Theta(|V| / \ell)$. 
Each vertex in $V_+$ has indegree and outdegree  at most $|\mathcal{B}(r, d)| = \Delta$. 
We conclude that $|E_+| = O(|V_+| \Delta) = O(|V|\Delta / \ell)$. Then $|E_+| = O(|E| / \ell) 
= |E| n^{\Omega(1/(kd))} = |E| / n^{\Omega(1/k)}$ as claimed.

\paragraph{Transitive closure and diameter.} We now verify that the transitive closure of $G_+$ restricted to the original vertices in $G$ is identical to the transitive closure of $G$.  By construction, for any $(s,t)\in V\times V$, any $s$-$t$ path in $G_+$ that uses Steiner vertices $V_+$ must be contained in 
$W = S\times S_1\times S_2\times\cdots\times S_{k-1}\times L_{\ell,0}$. Let $s \in S \subseteq L_{0, 0}$ and 
$t \in L_{\ell, 0}$ be a pair of nodes such that 
there exists an $s$-$t$ path in $W$.  By construction
of $E_+ = E_0\cup\cdots\cup E_{k-1}$, this implies
$s=((0,0),p), t=((\ell,0),p+\ell\vec{v})$, where $\vec{v}=(\vec{v}_0,\ldots,\vec{v}_{k-1})$ are the critical 
subdirections corresponding to the edges along the path.
In other words, $(s,t)\in P$ was already in the transitive closure of $G$.
Conversely, any such $(s,t)\in P$ has a 
length-$k$ path in $W$.
We conclude that $G_+$ is a valid reachability 
shortcut of $G$ and that $\dist_{G_+}(s,t)=k$ for $(s,t)\in P$.

\section{Necessity of Thick Steiner Shortcuts}
\label{sec:thick steiner shortcuts}
This section has two goals. First, we show that if ideal Steiner shortcuts exist with size $m^{1+o(1)}$ and diameter $m^{o(1)}$, 
then they must be \emph{thick} on some graphs in the family 
$\cG_k(c,d)$. 
Second, we prove that a certain \emph{randomized} 
graph class $\widetilde{\cG}_k(c,d,p)$ defined shortly 
is \emph{strictly} harder to shortcut
than $\cG_k(c,d)$, and argue that it may in fact be sufficient to refute \Cref{conj:circuit}.

Our first result is that $\cG_k$ cannot be shortcut 
by any $t$-thick Steiner shortcut unless $t\geq 0.99k$. 

\begin{thm} \label{thm:lowerbound}
For $k\ge 2$, $t\leq 0.99k$, and $m=2^{\omega(k \log k)}$,
every $t$-thick Steiner shortcut
for the $m$-edge graph $G \in \cG_k$ requires either 
$m^{\Omega(1/k)}$ diameter or $m^{1+\Omega(1/k)}$ 
edges.
\end{thm}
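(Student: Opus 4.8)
## Proof Proposal for Theorem \ref{thm:lowerbound}

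\textbf{Overview of the approach.} The plan is to generalize the ``long vs.\ short edge'' dichotomy from the proof of \cref{thm:hesse} to the Steiner setting. In the non-Steiner case, a shortcut edge $(u,v)$ was called \emph{long} if it jumped from layer $L_i$ to layer $L_j$ with $j \ge i+k$, and by \cref{lem:criticalpairbasic} each long edge lies on at most one critical path; short edges only advance a path by at most $k-1$ layers, so a path using only short edges and original edges needs $\gg \ell$ hops. With Steiner vertices of bounded thickness $t$, a shortest $s$--$t$ path for a critical pair $(s,t)$ alternates between \emph{segments} of original vertices (in $V$) and \emph{Steiner blocks} of at most $t-1$ consecutive vertices from $V_+$. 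The first step is to define a notion of ``projecting'' such a path back to the layered structure of $\cG_k$: each maximal sub-walk of the shortcut path that starts and ends at an original vertex and internally uses only Steiner vertices (a Steiner block sandwiched by two original endpoints, or a single shortcut edge between two original vertices) should be classified as \emph{long} or \emph{short} according to whether it advances the layer index by at least $k$ or not.

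\textbf{Key steps.} First I would set up a charging/counting argument for long sub-walks: the crucial claim is that a long sub-walk through a short Steiner block, together with its two original endpoints $u \in L_i$, $w \in L_j$ with $j \ge i+k$, can serve on at most $O(1)$ critical paths — or more precisely, that the \emph{number of distinct critical pairs $(s,t)$ whose shortest $G_+$-path uses a given long sub-walk} is small. This requires a geometric uniqueness lemma analogous to the third bullet of \cref{lem:criticalpairbasic}: two critical paths cannot overlap in two layers at distance $\ge k$, so if a long sub-walk's endpoints $u, w$ both lie on critical paths $P_{s,t}$ and $P_{s',t'}$, those paths must coincide near $u$ and near $w$, forcing $(s,t) = (s',t')$ by the uniqueness of critical paths. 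Second, I would count the total number of long sub-walks available in $G_+$: each is determined by its Steiner block (a walk of length $< t$ through $V_+$) plus its two original endpoints. If $|E_+| = m^{1+\delta}$ and $|V_+| \le |E_+|$, then the number of walks of length $\le t$ through the Steiner vertices is at most $|E_+| \cdot (\text{max Steiner degree})^{t} \le m^{(1+\delta)} \cdot m^{\delta t}$ or so; the point is that for $\delta = o(1/k)$ and $t \le 0.99k$ this is strictly less than $|P| = \Omega(n^{1.5 - \epsilon}) = m^{1+\Omega(1/k)}$. Third, combining these: if $G_+$ has $< m^{1+\Omega(1/k)}$ edges, then by pigeonhole there is a critical pair $(s,t)$ whose shortest $G_+$-path contains \emph{no} long sub-walk. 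But then every sub-walk is short and advances the layer index by at most $k-1$ (counting also that a short Steiner block contributes at most $t-1 \le k-1$ extra hops but does not let the layer jump), so the path must have length $\ge \ell / (k-1) = m^{\Omega(1/k)}$, giving the diameter bound.

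\textbf{The technical heart — the ``short Steiner block cannot jump'' lemma.} The subtlety, and the place I expect most of the work, is making precise why a Steiner block of length $< t$ cannot behave like a long edge unless $t \ge k$ (roughly). The Steiner vertices $V_+$ are arbitrary — they carry no a priori layer labels — so ``advances the layer index'' must be interpreted via the original endpoints of each sub-walk. The danger is that a short Steiner block, while it has few vertices, could still connect two original vertices that \emph{are} far apart in layer index but \emph{not} on a common critical path; such a block acts like a long edge and is cheaply purchased. The resolution must be that there simply aren't enough such blocks: the counting in step two must be robust to \emph{all} sub-walks through $V_+$ of length $< t$, long or short. So really the argument splits shortcut usage into (a) original edges of $G$, (b) sub-walks through Steiner blocks of length $< t$ with original endpoints. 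Type (a) advances layers by exactly $1$ each. For type (b), whether long or short, the total \emph{number} of distinct type-(b) walks available is at most $\text{poly}(|E_+|) \cdot m^{O(\delta t)} \ll |P|$ when $\delta$ is chosen as $c_0/k$ for small $c_0$ and $t \le 0.99 k$; hence the total ``layer progress'' that the at most $|E_+|$ edges can buy any single critical pair through type-(b) jumps is limited, and I must argue that for most critical pairs the shortest path is forced to use $\Omega(\ell)$ steps of type (a) or short type-(b) blocks, each advancing the layer index by $O(k)$.

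\textbf{Parameter bookkeeping.} Finally I would assemble the quantitative statement exactly as in the proof of \cref{thm:hesse}: set $\ell = r^c$ with $c = \Theta(1/k)$ hidden in the $m = 2^{\omega(k\log k)}$ regime, $d = \Theta(k)$, so that $m = n^{1+O(1/k)}$ and $|P| = \Omega(n^{1.5-\epsilon})$; then the above dichotomy yields that either $G_+$ has $\ge |P| / m^{O(\delta t)} = m^{1 + \Omega(1/k)}$ edges (for suitable $\delta$), or some critical pair has $\dist_{G_+}(s,t) \ge \ell/(k-1) = r^{\Theta(1/k)} = m^{\Omega(1/k^2)}$... and here I'd need to double-check that $m^{\Omega(1/k^2)}$ actually matches the claimed $m^{\Omega(1/k)}$, which suggests $c$ should be chosen as an absolute constant rather than $\Theta(1/k)$ — with $c = \Theta(1)$ one gets $\ell = r^{\Theta(1)} = m^{\Omega(1/k)}$ directly. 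The constraint $m = 2^{\omega(k\log k)}$ is exactly what makes the extra $m^{O(\delta t)}$ factor negligible relative to $|P|$, since $t \le 0.99 k$ forces a gap between the ``cost'' exponent $1 + \delta t/k \cdot (\ldots)$ and the ``target'' exponent $1.5 - \epsilon$. I expect the main obstacle to be rigorously handling sub-walks through Steiner blocks that are long — ensuring the geometric uniqueness argument still pins each such block to $O(1)$ critical pairs even though Steiner vertices can be shared arbitrarily among many blocks — so the charging scheme should charge each long sub-walk to the (unique, by the geometry) critical pair it can serve, and bound the number of long sub-walks by the number of Steiner-block walks times a small branching factor.
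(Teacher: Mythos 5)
Your approach is genuinely different from the paper's, and the difference is where the gap lies. The paper does not count Steiner sub-walks globally; instead it (a) normalizes the shortcut into $O((k\ell)^2)$ components $E'_{i,j}$, each going from a fixed $L_i$ to a fixed $L_j$ with $j\geq i+k$ (\cref{lem:normalizing}), (b) defines an \emph{efficiency} for each component (number of critical shortest paths routed through it per edge, \cref{def:efficiency}), and (c) shows no $t$-thick component can be too efficient (\cref{lem:efficientofnormalized}). Step (c) is where the paper's argument resembles yours---count paths inside a component that start with a fixed first edge, use pigeonhole, then invoke the intersection property of \cref{lem:criticalpairbasic}---but it is preceded by a degree-pruning step that your proposal is missing.

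The concrete gap: in your second step you write that the number of length-$\leq t$ walks through the Steiner vertices is at most $|E_+|\cdot(\text{max Steiner degree})^t \leq m^{1+\delta}\cdot m^{\delta t}$. This implicitly assumes every Steiner vertex has degree at most $m^{\delta}$, which is not a priori true: a single Steiner vertex could have degree as large as $|E_+|$, and a small number of very high-degree Steiner vertices can generate an astronomical number of sub-walks even if $|E_+|$ is barely superlinear. Without a degree bound, the count of sub-walks (long or short) is simply not controlled by $|E_+|$, and the charging to critical pairs fails. The paper closes exactly this hole via \cref{lem:volumnbound}: any single Steiner vertex can lie on at most $\Delta^k$ critical shortest paths, because two paths through it with the same critical direction would already determine a transitive-closure relation that forces them to coincide. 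This caps the \emph{efficiency} of any Steiner vertex of degree $\geq\Delta^{1+\epsilon}$ at $\Delta^{k-1-\epsilon}$, and \cref{lem:efficiencynotdecrease} then shows such vertices can be deleted without hurting the analysis. Only after this pruning does the ``branching factor $\Delta^{1+\epsilon}$ per Steiner layer, $t-1$ layers'' counting go through, giving the $\Delta^{(t-1)(1+\epsilon)} < \Delta^{k-1-\epsilon}$ bound that drives the contradiction. You would need to either port this volume bound into your framework (charging each high-degree Steiner vertex directly) or find another mechanism to bound Steiner degrees; as written the sub-walk count in your step two does not hold.

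A secondary point: your charging step (``each long sub-walk serves $O(1)$ critical pairs'') is correct in spirit and matches what the paper does, but it does need the observation---which you should make explicit---that the two original endpoints $u\in L_i$, $w\in L_j$ of a long sub-walk must lie on the critical path in $G$ itself. This follows from transitive-closure preservation and the uniqueness of critical paths (\cref{lem:criticalpairbasic}): the subpaths $s\to u$, $u\to w$, $w\to t$ in $G_+$ certify paths in $G$, which concatenate to an $s$--$t$ path that must be the unique critical path. Your sketch gestures at this but should spell it out, since it is what licenses the application of the third bullet of \cref{lem:criticalpairbasic}.
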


\Cref{thm:lowerbound}
rules out $t$-thick Steiner shortcuts
for $t=o( \log n/ \log \log n)$ and will be proved in \Cref{subsec:deterministiclowerbound}.
Note that \Cref{thm:lowerbound}'s constraint on $t$
is essentially sharp, as 
\Cref{thm:Steinerattack} shows that $k$-thick Steiner
shortcuts can shortcut all the critical vertex pairs
with length-$k$ paths.

\paragraph{Noisy Version of $\mathcal{G}_k$.} We now introduce a noisy variant of the graph family $\mathcal{G}_k$, which we call $\cGtil_k(c,d,p)$. It is defined as $\cGtil_k(c,d,p) = \{ \widetilde{G}(p) \mid G\in  \cG_k(c,d)\}$, where $\widetilde{G}(p)$ denotes the (random) graph obtained from a given $G$ when deleting each edge independently with probability $p$. 
$\cGtil_k(c,d,p)$ is thus a family of random graphs that we obtain from the elements of $\cG_k(c,d)$ by deleting each edge independently with probability $p$.
Here $p$ should not be thought of as an absolute constant, but a function of $c,d,r$ that is inversely proportional to the diameter of $G$.

Intuitively, the problem with $\cG_k$ is that it has an extremely rigid structure that can be exploited by Steiner shortcuts. By randomly deleting edges, $\widetilde{\cG}_k(c,d,p)$ destroys this structure.
We highlight the following open problem,
the resolution of which would either refute \Cref{conj:Steiner} or contribute a significant step toward computing almost optimal Steiner shortcuts, 
either of which would be very exciting. 

\begin{question}\label{qu:Gcdp-shortcut}
Are there parameters $k\geq 2,c,d,p$ 
for which any Steiner shortcut for an $m$-edge 
graph $G \in  \cGtil_{k}(c,d,p)$ cannot simultaneously
have size $m^{1+o(1)}$ and diameter $m^{o(1)}$?
\end{question}

We report some partial progress towards 
answering \Cref{qu:Gcdp-shortcut} by showing, 
in a formal sense, that $\cGtil_{k}(c,d,p)$ is indeed harder to shortcut than $\cG_{k}(c,d)$.
The following theorem, which will be proved in  \Cref{subsec:randomizedlowerbound}, shows that $\widetilde{\cG}_k(c,d,p)$ 
requires $t$-thick Steiner shortcuts for $t>(1+\Omega(1))k$. In contrast, \Cref{thm:lowerbound} applies only when $t \le 0.99k$. 

\begin{thm}\label{thm:improvedlowerboundundernoise}
For $k\geq 2$, there are parameters $c,d,p$
such that with probability $1-1/\poly(m)$, 
every $t$-thick Steiner shortcut for the $m$-edge graph 
$G \in \cGtil_{k}(c,d,p)$ requires either 
diameter $m^{\Omega(1/k)}$ or $m^{1+\Omega(1/k)}$ 
edges, provided that
$1.01t \le \frac{8}{7}k$ and 
$m=2^{\omega(k \log k)}$.
\end{thm}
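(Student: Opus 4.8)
The plan is to follow the same template as the (forthcoming) proof of \Cref{thm:lowerbound}, but to exploit the random edge deletions to forbid a wider range of thicknesses. Recall the rough shape of the non-noisy argument: one chooses $c,d$ so that $\cG_k(c,d)$ has $|P| = \Omega(n^{1.5-\epsilon}) = m^{1+\Omega(1/k)}$ critical pairs, and then argues that a $t$-thick Steiner shortcut with too few edges must leave some critical pair $(s,t)\in P$ with large $\dist_{G_+}(s,t)$. The key structural fact is \Cref{lem:criticalpairbasic}: two critical paths cannot intersect in two layers that are $\ge k$ apart, so a single ``long'' shortcut edge (one spanning $\ge k$ layers of the original graph) is useful to at most one critical pair, and there are fewer than $|P|$ of those. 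The new ingredient for the noisy family is that a short shortcut path through Steiner vertices of thickness $t$ must, in order to make progress of more than roughly $t$ layers, ``re-enter'' the original graph $G$ and traverse a genuine subpath of some critical path of $G$; but after random deletion with probability $p$ inversely proportional to the diameter, a typical critical path of length $\Theta(k\ell)$ survives intact only with probability $(1-p)^{\Theta(k\ell)}$, which is vanishing, so most critical pairs have \emph{no} surviving long critical subpath at all. This is what should let us push the forbidden thickness from $\approx k$ up to $\approx \frac{8}{7}k$.

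The key steps, in order, would be: (1) Fix $c,d$ (and hence $\Delta,\ell,N$) exactly as in the proof of \Cref{thm:hesse}/\Cref{thm:lowerbound} so that the counting identities $m = n^{1+O(1/k)}$ and $|P| = m^{1+\Omega(1/k)}$ hold, and set $p = \Theta(1/\mathrm{diam})$ with $\mathrm{diam}$ the target $m^{o(1)}$ bound. (2) Define the event that a critical pair $(s,t)\in P$ is \emph{good}: its critical path $P_{s,t}$ loses at least one edge under deletion, so that the unique $s$--$t$ path of $G$ is destroyed. By the Uniqueness clause of \Cref{lem:criticalpairbasic} and independence, each pair is good with probability $1 - (1-p)^{k\ell} \ge 1 - 1/\mathrm{poly}(m)$ once $p\,k\ell = \omega(\log m)$; a union bound over the $|P| = \mathrm{poly}(m)$ pairs gives that \emph{all} pairs are good w.h.p. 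We restrict attention to that event henceforth. (3) Suppose for contradiction $G_+$ is $t$-thick with $1.01t \le \tfrac87 k$, $|E_+| < |P| = m^{1+\Omega(1/k)}$, and diameter $< \mathrm{diam} = m^{o(1)}$. Decompose any short $s$--$t$ path in $G_+$ into maximal ``Steiner segments'' (at most $t-1$ consecutive $V_+$-vertices, bracketed by $V$-vertices) alternating with ``original segments'' lying in $G$. A Steiner segment advances the $G$-layer index by at most $t$ (its two endpoints are in $G$, and — using the layered structure and the fact that Steiner shortcut edges cannot, on the good event, encode a surviving long critical subpath — the layer gap across one Steiner segment is $O(t)$; here one reuses the $\cG_k$ geometry exactly as in \Cref{thm:lowerbound}). (4) An original segment in $\widetilde G(p)$ that advances $\ge k$ layers would be a surviving length-$\ge k$ subpath of some critical path; on the good event the \emph{whole} critical path is broken, but a length-$\ge k$ surviving subpath can still exist, so instead we count: charge each short $s$--$t$ path to the long shortcut edges it uses and to the surviving long original subpaths; as in \Cref{lem:criticalpairbasic} each long shortcut edge serves $\le 1$ critical pair, and the deletions ensure few critical pairs admit the needed combination of a long original subpath together with the complementary Steiner segments — making the total number of critical pairs that can be shortcut below diameter $\mathrm{diam}$ strictly less than $|P|$, a contradiction. (5) Finally, verify the numerics: the slack between $0.99k$ and $\tfrac{8}{7}k/1.01$ is exactly what the improved ``charging'' in step (4) buys, and the hypothesis $m = 2^{\omega(k\log k)}$ is what makes $N \ge r^{\Omega(kd)}$ dominate the $k^{O(k)}$ and $\Delta^k$ error terms.

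The main obstacle will be step (4) — making precise the claim that after noising, a short Steiner+original path of thickness $\le \tfrac87 k$ cannot shortcut more than $o(|P|)$ of the critical pairs. In the non-noisy case (\Cref{thm:lowerbound}) one relies on \Cref{lem:criticalpairbasic} alone and gets the threshold $\approx k$; here we need a \emph{quantitative} statement that a thickness-$t$ Steiner path, after re-entering $G$, must traverse an original subpath long enough that, combined with the geometry of $\cG_k$, it is ``aligned'' with a critical direction, and then that the surviving (post-deletion) such subpaths are too few. Concretely I expect to need a second-moment / union-bound estimate showing that for the relevant length $\ell' \approx \ell$, the number of pairs $(s,t)\in P$ whose critical path has a surviving subpath of $\ge \ell'$ consecutive edges is $|P|\cdot(1-p)^{\ell'} = o(|P|)$ when $\ell' \ge \Omega(\log m / p)$, and then balancing $\ell'$ against the thickness budget $t$ to land exactly at $\tfrac87 k$. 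Getting the constant $\tfrac87$ (rather than something weaker) is where the careful bookkeeping lives; the $1.01$ factors are cosmetic slack absorbed by the $m = 2^{\omega(k\log k)}$ assumption.
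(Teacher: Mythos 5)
Your proposal identifies the right template (follow the proof of \Cref{thm:lowerbound} and use noise to forbid a wider thickness range), but it misreads what the noise is for and omits the key quantitative lemma that makes the $\tfrac{8}{7}$ constant appear.

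First, your choice of $p$ and the ``good event'' are problematic. You set $p \sim 1/\mathrm{diam}$ where $\mathrm{diam}=m^{o(1)}$ is the \emph{target} shortcut diameter, and then declare a critical pair good when its critical path is destroyed, hoping that \emph{all} pairs are good w.h.p. But that would leave nothing to shortcut: if every critical path is deleted, the noised graph $\widetilde G$ has no critical pairs, and the lower bound is vacuous. The paper takes $p=1/(k\ell)$, i.e.~inversely proportional to the number of layers of $G$ (not to the target diameter), so a \emph{constant fraction} of critical paths survive intact. The surviving pairs are what one must still shortcut. Moreover with $p=1/(k\ell)$ your quantity $pk\ell = 1$, not $\omega(\log m)$, so the union bound you sketch does not even get off the ground, and pushing $p$ up would destroy too many pairs.

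Second, the central mechanism by which noise strengthens the thickness lower bound is missing. In the paper (Lemmas~\ref{lem:supportsize} and \ref{lem:efficiencybound}), one shows via a Cauchy--Schwarz/double-counting argument that if two sets $A\subseteq L_i, B\subseteq L_j$ (with $|i-j|\ge 4rk$) are joined by more than $\approx \sigma^{3/2}$ critical paths, where $\sigma=\max\{|A|,|B|\}$, then the union of these paths supports so many edges between the layers that at least one is deleted with high probability. Thus, after noise, the number of \emph{surviving} critical paths between any such pair of sets is bounded by $\tilde O(\sigma^{3/2})$ rather than the trivial $\sigma^2$. This improvement of the exponent from $2$ to $3/2$ is exactly what shifts the threshold from $k$ to $(8/7)k$: taking a maximally efficient middle-layer Steiner vertex $v$, the in/out neighborhoods $A,B$ in $G$ reachable from $v$ through its component have $\sigma\le \Delta^{(t/2+1)(1+\epsilon')}$, so the upper bound on the number of critical paths through $v$ becomes $p(v)\lesssim\Delta^{\frac{3t}{4}(1+O(\epsilon))}$; equating this against the efficiency-driven lower bound $p(v)\gtrsim\Delta^{2k-t(1+O(\epsilon))}$ gives $\tfrac{7}{4}t\gtrsim 2k$, i.e.~$t\gtrsim\tfrac{8}{7}k$. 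Your step~(4) gestures at ``surviving long subpaths'' and ``charging,'' but neither the support-size lemma nor the $\sigma^{3/2}$ bound (nor the arithmetic that produces $8/7$) appears; you are effectively re-running the non-noisy argument with a union bound over single paths, which does not give the extra factor. To repair the proposal you would need to (a) replace $p$ by $\Theta(1/(k\ell))$ and abandon the ``all paths destroyed'' event, and (b) prove the support-size dichotomy for critical paths between sets $A,B$, which is the genuinely new ingredient.
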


\ignore{

\paragraph{Weak thickness.} We are using the definition of thickness from \Cref{def:thickness} because it is most intuitive, but our lower bounds can be generalized to capture a much larger class of shortcuts.\seth{Is this definition necessary?}\thatchaphol{we put this because there are some shortcut construction in undirected graphs where its thickness is big, but weak thickness is small. We want to show that we get a lower bound even for weak thickness. Tangential point: thickness is not a lower bound for diameter, but weak thickness lower bound diameter.}

\begin{definition}[Weak Thickness]\label{def:weakthickness}
     We say that a Steiner shortcut $E'$ with diameter $h$ of a graph $G=(V,E)$ has \emph{$h$-length weak thickness $t$} if for every $s,t\in V$, there exists a $(s,t)$ path of length $h$ in $E\cup E'$ contains at most $t-1$ consecutive Steiner vertices. If $E'$ has $h$-length weak thickness $t$, then we say that $E'$ is a \emph{$(h,t)$-weak thick} Steiner shortcut. 
\end{definition}
We will show that lower bounds for weak thickness and regular thickness are equivalent up to a loss of a small factor on the shortcut size. Notice that a $t$-thick Steiner shortcut is also a $(h,t)$-weak thick Steiner shortcut for every $h$. The following lemma, which we prove in \Cref{subsec:equivalencethickness}, shows the other direction.

\begin{restatable}{lemma}{equivalencethickness}
    \label{lem:equivalencethickness}
       Given a directed graph $G=(V,E)$, if $G$ has a $(h,t)$-weak thick Steiner shortcut with diameter $h$ and size $s$, then $G$ also has a $t$-thick Steiner shortcut with diameter $h$ and size $s(t+1)$.
\end{restatable}

Given \Cref{lem:equivalencethickness}, the lower bounds stated in \Cref{thm:lowerbound,thm:improvedlowerboundundernoise} also apply for weak thickness.
}

\subsection{Basic Properties}

We begin with some basic properties of $\cG_{k}$ 
that are helpful for proving lower bounds. 
Let $G\in\cG_{k}$ be a graph in this graph family with 
the parameters $r,\ell$, etc.~defined as usual; see \Cref{def:summary-parameters}.

Suppose $G_+=(V\cup V_+,E\cup E_+)$ 
is a Steiner shortcut graph for $G$.
For each critical pair $(s,t)\in P$, designate one
shortest $s$-$t$ path in $G_+$ a \emph{critical shortest path}.\footnote{Note that $G$ contains a unique critical path from $s$ to $t$ whereas $G_+$ may contain many $s$-$t$ paths.}
We call the shortcut edge set $E_+$ \emph{normalized}
if it is contained in 
$(L_i\times V_+) \cup (V_+\times V_+) \cup (V_+ \times L_j)$, for two layers $L_i,L_j$, $i+k\leq j$.
\Cref{lem:normalizing} shows that any shortcut edge set $E_+$ can be replaced by the union of $O((k\ell)^2)$ normalized shortcut sets, without increasing the diameter too much.

\begin{lemma}\label{lem:normalizing}
    Let $G\in \cG_{k}$ and $G_+$ be a $t$-thick Steiner shortcut for $G$ with $M$ edges and diameter $h$.
    There is a $t$-thick Steiner shortcut 
    $G'=(V\cup V',E\cup E')$ for $G$ such that
    $E'$ is the union of less than $(\ell k)^2$ normalized $t$-thick shortcut sets, each with at most $M$ edges.
    The diameter of $G'$ is less than $hk$.
\end{lemma}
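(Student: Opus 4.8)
The plan is to transform the given Steiner shortcut $G_+$ into one whose shortcut edges split up according to which pair of original layers they bridge. The key structural fact about $G\in\cG_k$ is that it is a layered DAG with $k\ell+1$ layers, every edge of $E$ runs from $L_i$ to $L_{i+1}$, and hence whenever $x\in L_i$, $y\in L_j$ and $(x,y)\in E^*$ we have $j>i$ and $\dist_G(x,y)=j-i$. In particular, any ``short'' piece of shortcut structure — a $V\times V$ shortcut edge, or a maximal run of Steiner vertices, that bridges $L_i$ to $L_j$ with $j-i<k$ — can, on any given path, be replaced by the length-$(j-i)<k$ path that already exists in $G$, costing only a factor $<k$ in the distance between original vertices.

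For the main construction: for each pair of layer indices $(a,b)$ with $a+k\le b$ — there are fewer than $\binom{k\ell+1}{2}<(k\ell)^2$ of these — introduce a fresh copy $V_+^{(a,b)}$ of the Steiner set $V_+$, together with a fresh subdivision vertex $w_{uv}^{(a,b)}$ for each ``long'' shortcut edge $(u,v)\in E_+\cap(V\times V)$ with $u\in L_a$, $v\in L_b$. The normalized set $E_{(a,b)}'$ contains: $(u,\hat w)$ for every $(u,w)\in E_+$ with $u\in L_a$, $w\in V_+$; $(\hat w,\hat w')$ for every $(w,w')\in E_+$ with $w,w'\in V_+$; $(\hat w,v)$ for every $(w,v)\in E_+$ with $w\in V_+$, $v\in L_b$; and the two edges $(u,w_{uv}^{(a,b)})$, $(w_{uv}^{(a,b)},v)$ for each subdivided long $V\times V$ shortcut edge between $L_a$ and $L_b$. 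By construction $E_{(a,b)}'\subseteq(L_a\times V_+^{(a,b)})\cup(V_+^{(a,b)}\times V_+^{(a,b)})\cup(V_+^{(a,b)}\times L_b)$, so it is normalized; and since its four kinds of edges are copies (or $2$-subdivisions) of four pairwise-disjoint subsets of $E_+$, classified by whether each endpoint is original or Steiner, $|E_{(a,b)}'|=O(M)$. Careful accounting — e.g.\ splitting the subdivision edges into their own (still normalized) pieces, or first assuming harmlessly that $G_+$ has no $V\times V$ shortcut edges — yields the precise bounds in the lemma (fewer than $(k\ell)^2$ pieces, each with at most $M$ edges). When $t=1$ there are no Steiner vertices and ``normalized'' should be read as $E_{(a,b)}'\subseteq L_a\times L_b$, which is exactly Hesse's non-Steiner setting.

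Set $G'=(V\cup V',E\cup E')$ with $V'=\bigcup_{(a,b)}V_+^{(a,b)}$ and $E'=\bigcup_{(a,b)}E_{(a,b)}'$. Thickness: distinct copies $V_+^{(a,b)}$ are pairwise non-adjacent, and each copy meets original vertices only in $L_a$ (incoming) and $L_b$ (outgoing), so any maximal run of consecutive Steiner vertices on a simple path of $G'$ lies entirely inside one copy, where it mirrors a run of consecutive Steiner vertices of $G_+$ and hence has length $\le t-1$. Validity: a path of $G'$ between two original vertices decomposes into $E$-segments and Steiner ``jumps'' $u\to\hat w_1\to\cdots\to\hat w_r\to v$ inside some copy $(a,b)$ (possibly degenerating to a single subdivision vertex); such a jump mirrors a $u\to w_1\to\cdots\to w_r\to v$ path of $G_+$, so $v$ is reachable from $u$ in $G_+$, hence in $G$ — therefore reachability in $G'$ restricted to $V\times V$ equals that of $G_+$, which equals that of $G$. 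Conversely, given $(x,y)\in E^*$, take a shortest $x$-$y$ path of $G_+$ (length $\le h$), subdivide its long $V\times V$ shortcut edges, replace its short $V\times V$ shortcut edges and short Steiner runs by $G$-paths of length $<k$, and reroute its long Steiner runs through the appropriate copies; each edge of the $G_+$-path is replaced by fewer than $k$ edges, so the new $x$-$y$ path has length $<hk$.

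The step I expect to be most delicate is the validity argument: one must rule out that a path of $G'$ that weaves in and out of many different copies $V_+^{(a,b)}$ could certify a reachability not present in $G$. This is exactly where it matters that distinct copies are non-adjacent and that each copy has a single entry layer $L_a$ and a single exit layer $L_b$ with $a+k\le b$, so that every maximal Steiner run on such a path collapses to one reachability pair $(u,v)\in E^*$ and the path as a whole certifies only a composition of pairs already present in $E^*$. Everything else is bookkeeping — the disjointness accounting behind the per-piece edge count, the handling of $V\times V$ shortcut edges and the $t=1$ corner case, and tracking the exact constants $<(k\ell)^2$, $\le M$, and $<hk$.
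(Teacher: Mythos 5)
Your construction is the same as the paper's: for each layer pair $(a,b)$ with $a+k\le b$, take a fresh disjoint copy of $V_+$ and the restriction of $E_+$ to $(L_a\times V_+)\cup(V_+\times V_+)\cup(V_+\times L_b)$, then argue that any short jump (spanning fewer than $k$ layers) can be replaced by the original $G$-path at a factor $<k$ cost. You are more careful than the paper on two points it glosses over --- the paper's one-line proof silently discards direct $V\times V$ shortcut edges (which your subdivision trick handles) and doesn't address the degenerate $t=1$ case --- but these are bookkeeping refinements within the same argument, not a different approach.
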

\begin{proof}
    For $i,j$ such that $i+k\leq j$, let $E_{i,j}'$ 
    be a copy of $E_+$ restricted to 
    $(L_i\times V_+) \cup (V_+\times V_+) \cup (V_+\times L_j)$.
    Let $E'$ be the union of all such $E_{i,j}'$, where
    each uses a disjoint copy of the Steiner vertices $V_+$.
    Every path in $G_+$ from $L_i$ to $L_j$ via $V_+$ is preserved
    by an equivalent path in $G'$, so long as $i+k\leq j$.  Thus, given a
    length-$h$ path in $G_+$, we can find a corresponding 
    length-$h(k-1)$ path in $G'$, as every subpath 
    from $L_i$ to $L_j$, via zero or more hops in $V_+$, 
    with $j<i+k$
    may need to be replaced by a length-$(j-i)$ path.
\end{proof}

\paragraph{Efficiency.}
Let $G\in\cG_k$ and $G'=(V\cup V',E\cup E')$ be a 
Steiner shortcut for $G$ such that $E' = \bigcup E_{i,j}'$
is the union of disjoint normalized shortcut sets $E_{i,j}'$.
We say that a critical shortest path $\Pi$ in $G'$ is 
\emph{$E_{i,j}'$-respecting} if a subpath of $\Pi$
from $L_i$ to $L_j$ is contained in $E_{i,j}'$.

\begin{definition}[Efficiency]\label{def:efficiency}
Fix $G,G'=(V\cup V',E\cup E'),$ and $E_{i,j}'$ as above, 
and fix the set of critical shortest paths.
\begin{itemize}
    \item The \emph{efficiency of $E_{i,j}'$} is the ratio
    of the number of $E_{i,j}'$-respecting critical shortest paths to $|E_{i,j}'|$.
    \item The \emph{efficiency of a Steiner vertex $u$} incident
    to $E_{i,j}'$ is the ratio between the number of $E_{i,j}'$-respecting critical shortest paths passing through $u$
    to $\deg_{E_{i,j}'}(u)$.
\end{itemize}
\end{definition}

\Cref{lem:efficiencynotdecrease} is a simple observation 
that follows from the definition of efficiency.

\begin{lemma}\label{lem:efficiencynotdecrease}
    Let $G\in\cG_k$ and $G'=(V\cup V',E\cup E')$ be a Steiner shortcut, and $E'_{i,j}\subset E'$ be a normalized Steiner shortcut set from $L_i$ to $L_j$.  
    Suppose $u$ is a Steiner vertex incident to $E_{i,j}'$,
    whose efficiency is at most the efficiency of $E_{i,j}'$.
    Obtain $E_{i,j}''$ by deleting $u$ and its incident edges,
    and rerouting critical shortest paths that used $u$.
    Then the efficiency of $E_{i,j}''$ is at least that of $E'_{i,j}$.
\end{lemma}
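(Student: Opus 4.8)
The plan is to reduce the statement to an elementary comparison between a fraction and a ``mediant.'' Fix the normalized Steiner shortcut set $E'_{i,j}$ together with the designated set of critical shortest paths in $G'$. Write $a$ for the number of $E'_{i,j}$-respecting critical shortest paths and $b=|E'_{i,j}|$, so the efficiency of $E'_{i,j}$ is $a/b$; write $a_u$ for the number of $E'_{i,j}$-respecting critical shortest paths passing through $u$ and $b_u=\deg_{E'_{i,j}}(u)\ge 1$, so the efficiency of $u$ is $a_u/b_u$, and the hypothesis reads $a_u/b_u\le a/b$. Note that since the copies of $V_+$ used by distinct normalized sets are disjoint (as produced by \Cref{lem:normalizing}), $u$ belongs only to $E'_{i,j}$, so deleting $u$ and its incident edges affects no other shortcut set.

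First I would track the denominator: deleting $u$ and its incident $E'_{i,j}$-edges leaves $|E''_{i,j}|=b-b_u$. If $b-b_u=0$ then $E''_{i,j}$ is empty and there is nothing to prove, so assume $b-b_u\ge 1$. Next, the numerator: every $E'_{i,j}$-respecting critical shortest path that does not pass through $u$ is untouched by the deletion, hence still present in the shortcut graph; moreover, since deleting edges only increases distances and such a path already realized the $G'$-distance between its endpoints, it still realizes the (possibly larger) distance in the new graph, so it remains a shortest path and may be retained as the designated critical shortest path. Its $L_i$-to-$L_j$ subpath lay in $E'_{i,j}$ and avoided $u$, hence lies in $E''_{i,j}$, so it is still $E''_{i,j}$-respecting. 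Consequently at least $a-a_u$ of the designated critical shortest paths are $E''_{i,j}$-respecting (rerouting the remaining $a_u$ paths, and any other paths, can only add to this count), so the efficiency of $E''_{i,j}$ is at least $(a-a_u)/(b-b_u)$.

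It remains to verify $(a-a_u)/(b-b_u)\ge a/b$. Cross-multiplying (all denominators are positive), this is equivalent to $b(a-a_u)\ge a(b-b_u)$, i.e., to $ab_u\ge ba_u$, which is precisely a restatement of the hypothesis $a_u/b_u\le a/b$. Chaining this with the bound from the previous paragraph yields the claim. I do not anticipate a genuine obstacle here; the only point requiring care is the bookkeeping in the numerator---that no $E''_{i,j}$-respecting designated path is lost beyond the $a_u$ that ran through $u$---which is exactly the observation that the surviving paths are unaffected by the edge deletion and remain shortest.
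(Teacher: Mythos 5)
Your proposal is correct and takes essentially the same approach as the paper: both reduce the claim to the mediant inequality $(a-a_u)/(b-b_u)\ge a/b$, which is equivalent to the hypothesis $a_u/b_u\le a/b$. Your write-up just fills in the bookkeeping (that surviving paths remain shortest and $E''_{i,j}$-respecting, and the degenerate case $b=b_u$) that the paper leaves implicit.
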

\begin{proof}
    Suppose there are $\alpha$ many $E'_{i,j}$-respecting critical shortest paths passing through $u$ and 
    $\deg_{E_{i,j}'}(u)=\beta$. 
    Suppose there are $\alpha'$ total $E'_{i,j}$-respecting critical shortest paths and $|E'_{i,j}|=\beta'$. 
    By \Cref{def:efficiency}, $E'_{i,j}$ has efficiency 
    $\alpha'/\beta'$, and, after deleting $u$, $E_{i,j}''$ has an efficiency of at least $(\alpha'-\alpha)/(\beta'-\beta)\ge \alpha'/\beta'$,
    since $\alpha/\beta \le \alpha'/\beta'$.
\end{proof}

The following lemma is crucial for our analysis.

\begin{lemma}\label{lem:volumnbound}
    Recall from \Cref{def:summary-parameters} that $\Delta = \Theta(r^{d\frac{d-1}{d+1}})$ is the number of critical subdirections. Given a graph $G\in\cG_{k}$ and a 
    Steiner shortcut $G'=(V\cup V', E\cup E')$ for $G$, the number of critical shortest paths passing through 
    any Steiner vertex $u\in V'$ is at most $\Delta^k$.
\end{lemma}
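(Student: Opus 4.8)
The plan is to bound the number of critical shortest paths through a fixed Steiner vertex $u \in V'$ by a purely geometric/combinatorial counting argument, using the structure of $\cG_k$ rather than any property of the shortcut $E'$ itself. The key idea is that a critical shortest path in $G'$ from $s$ to $t$ still corresponds to a critical pair $(s,t) \in P$, which is determined by a starting point $p = (p_0,\ldots,p_{k-1}) \in (\bbZ^d)^k$ (with each $p_i$ in the appropriate box) together with a critical direction $\vec v = (\vec v_0,\ldots,\vec v_{k-1}) \in \cB(r,d)^k$. So it suffices to show that, once we fix $u$, the set of critical pairs whose critical shortest path passes through $u$ is parameterized by at most $\Delta^k$ choices.

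First I would observe that the critical direction $\vec v$ is free: there are exactly $\Delta^k$ critical directions, and this already gives the bound $\Delta^k$ provided I can argue that the starting point $p$ is \emph{determined} by $u$ and $\vec v$. So the heart of the argument is: if two critical shortest paths $\Pi_1$ (for critical pair with start $p^{(1)}$, direction $\vec v$) and $\Pi_2$ (for start $p^{(2)}$, direction $\vec v$) both pass through the same Steiner vertex $u$ and share the same critical direction $\vec v$, then $p^{(1)} = p^{(2)}$. The reason should be that $u$ sits at some well-defined ``coordinate position'' along the journey — more precisely, $u$ lies on a Steiner sub-path that realizes a specific prefix of the coordinate updates $p_j \mapsto p_j + \ell \vec v_j$, and since the path is a \emph{shortest} path (hence, morally, makes monotone progress through layers $L_i$), the coordinates of the original $G$-vertices on $\Pi$ immediately before and after the Steiner detour are pinned down once we know the prefix of direction updates already applied. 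Concretely: a critical shortest path enters the Steiner region from some $L_i$-vertex with coordinates that are a partial sum of $p$ plus some of the $\ell \vec v_j$'s, travels through $u$, and exits to some $L_j$-vertex; the entry coordinates together with $\vec v$ and the count of updates already applied recover $p$ uniquely. Thus for each of the $\Delta^k$ choices of $\vec v$ there is at most one critical pair using $u$.

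The main obstacle I anticipate is making precise the claim that a critical \emph{shortest} path in $G'$, which may weave in and out of Steiner vertices multiple times, still visits layers $L_0, L_{1}, \ldots, L_{\ell k}$ in a controlled (essentially monotone) fashion — this is what lets me talk about ``the coordinates when $\Pi$ passes through $u$'' unambiguously. I would handle this by invoking the layered structure of $G$ (all original edges go from $L_{i,j}$ to $L_{i,j+1}$) together with the uniqueness of critical paths from \Cref{lem:criticalpairbasic}: the unique $s$-$t$ path in $G$ alone would have length $\ell k$, and a genuine shortcut path can only be shorter by using $E'$, so whatever route $\Pi$ takes, the original-edge portions are forced to lie on the unique critical path for $(s,t)$, which fixes all the $G$-vertex coordinates along $\Pi$. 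Then $u$ being on $\Pi$ means $u$ lies between two consecutive $G$-vertices of that unique critical path (or between $s$/$t$ and such a vertex), and those two endpoints, together with the direction tuple $\vec v$, determine $p$. Wrapping this up gives at most $\Delta^k$ critical shortest paths through $u$, as claimed. A secondary subtlety is the boundary behavior of Steiner detours that start at $s \in S$ or end at $t \in L_{\ell,0}$, but these are degenerate special cases of the same argument.
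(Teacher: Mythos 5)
Your high-level plan is right: fix one of the $\Delta^k$ critical directions $\vec v$ and show that, for a given Steiner vertex $u$, at most one critical pair $(s,t)$ with direction $\vec v$ can have its critical shortest path pass through $u$. However, the way you try to carry this out has a genuine gap. You want to recover the starting point $p$ from $u$ and $\vec v$ by saying that $u$ sits at a ``well-defined coordinate position'' — i.e., that the entry/exit $V$-vertices of the Steiner detour through $u$, and hence the prefix of direction updates already applied, are determined by $u$. But $u$ is a Steiner vertex with no a~priori coordinates in $(\bbZ^d)^k$: it can have many $V$-neighbors (or chains of Steiner neighbors leading to many different $V$-vertices), and which entry/exit pair a given critical shortest path uses depends on the path, not on $u$ alone. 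In other words, the pair $(w_1, w_2)$ of $V$-vertices bracketing $u$ on $\Pi$ is pinned down once you already know $s$, but you are trying to use it to \emph{determine} $s$, which is circular. Without an argument that two paths $\Pi_1, \Pi_2$ with the same direction through $u$ must use the same bracketing $V$-vertices, the claim ``$p$ is determined by $u$ and $\vec v$'' is not established.

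The paper closes this gap with a crossing argument that you don't have: if $\Pi_1$ goes from $a$ to $a'$ through $u$ and $\Pi_2$ from $b$ to $b'$ through $u$, then $a \leadsto u \leadsto b'$ and $b \leadsto u \leadsto a'$ in $G'$; since the shortcut preserves transitive closure on $V$, both $(a,b')$ and $(b,a')$ are reachable in $G$. Writing out these two $G$-reachabilities as sums of $\ell$ critical subdirections on the coordinate $i^\star$ where $a$ and $b$ differ, and using $a'_{i^\star}-a_{i^\star}=b'_{i^\star}-b_{i^\star}=\ell\vec v_{i^\star}$, gives $\sum_j(\vec u_j+\vec u_j')=2\ell\vec v_{i^\star}$. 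Since $\vec v_{i^\star}\in\cB(r,d)$ is an extreme point of the convex hull, every summand must equal $\vec v_{i^\star}$, forcing $a_{i^\star}=b_{i^\star}$ — a contradiction. This step (combining the two paths into two new reachable pairs and then invoking convexity) is the essential mechanism, and it sidesteps any need to assign a coordinate to $u$. Your proposal needs this idea to go through; as written it does not.
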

\begin{proof}
    By the pigeonhole principle, if there are more than $\Delta^k$ critical shortest paths passing through $u$
    then two are associated with the same critical direction,
    say $(\vec{v}_0,\ldots,\vec{v}_{k-1})$. 
    Let the two critical pairs be $(a,a'),(b,b')\in P$,
    where
    \begin{align*}
        a &= ((0,0),(a_0,\ldots,a_{k-1}))\\
        b &= ((0,0),(b_0,\ldots,b_{k-1}))\\
        a'&= ((\ell,0),(a_0+\ell \vec{v}_0,\ldots,a_{k-1}+\ell \vec{v}_{k-1}))\\
        b'&= ((\ell,0),(b_0+\ell \vec{v}_{0},\ldots,b_{k-1}+\ell\vec{v}_{k-1}))
    \end{align*}
    Since $(a,a')\neq (b,b')$, 
    there must exist an index $i^\star$ such that $a_{i^\star}\not=b_{i^\star}$.
    Since the two critical shortest paths go through $u$ 
    and the Steiner shortcut does not change the transitive closure on $V$, it follows that $(a,b'),(b,a')$ are also in the transitive closure $E^*$ of $G$.
    For $a$ to reach $b'$ 
    there must exist $\ell$ critical subdirections $\vec{u}_0,\vec{u}_1,\ldots,\vec{u}_{\ell-1}$ such that 
    \[
    a_{i^\star}+\sum_{j\in[0,\ell)} \vec{u}_j=b'_{i^\star}.
    \]
    Similarly, for $b$ to reach $a'$, there must exist $\ell$ critical subdirections $\vec{u}_0',\vec{u}_1',\ldots,\vec{u}_{\ell-1}'$ such that 
    \[
    b_{i^\star}+\sum_{j\in[0,\ell)}\vec{u}_j'=a'_{i^\star}.
    \]
    Recall that $a'_{i^\star}-a_{i^\star}=b'_{i^\star}-b_{i^\star}=\ell\vec{v}_{i^\star}$. Combining the equalities above, we get that
    \[
    \sum_{j\in [0,\ell)}(\vec{u}_j+\vec{u}_j') = 2\ell\vec{v}_{i^\star}.
    \]
    However, according to the definition of $\vec{u}_j,\vec{u}_j',\vec{v}_{i^\star}\in \cB(r,d)$ (they are the extreme points of a convex hull), 
    we must have $\vec{u}_j=\vec{u}_j'=\vec{v}_{i^{\star}}$ for every $j\in[0,\ell)$, for otherwise $\vec{v}_{i^\star}=\frac{1}{2\ell}\sum_{j\in [0,\ell)}(\vec{u}_j+\vec{u}_j')$, contradicting the fact that 
    $\vec{v}_{i^\star}$ is a extreme point, and
    contradicting the fact that $a_{i^\star}\not=b_{i^\star}$.
\end{proof}

This section is intended to prove that a normalized Steiner shortcut cannot have a large efficiency.

\begin{lemma}\label{lem:efficientofnormalized}
    Fix a $k \ge 2,\epsilon > 0$. Let $G\in \cG_k$ and 
    $G'=(V\cup V',E\cup E')$ be a $t$-thick Steiner shortcut for $G$, where $t\leq (1-\epsilon)k$.
    Any normalized shortcut set 
    $E'_{i,j} \subseteq E'$ has efficiency 
    at most $\Delta^{k-1-\eps}$.
\end{lemma}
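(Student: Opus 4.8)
The plan is to bound the efficiency of a normalized shortcut set $E'_{i,j}$ by a charging argument that reduces to \Cref{lem:volumnbound}. Recall that the efficiency of $E'_{i,j}$ is the ratio of the number of $E'_{i,j}$-respecting critical shortest paths to $|E'_{i,j}|$. By \Cref{lem:efficiencynotdecrease}, we may repeatedly delete any Steiner vertex whose individual efficiency is at most the efficiency of the whole set without decreasing the overall efficiency; so after this cleanup we may assume every surviving Steiner vertex $u$ has efficiency strictly greater than that of $E'_{i,j}$. Equivalently, if $\eta$ denotes the efficiency of (the cleaned-up) $E'_{i,j}$, then every Steiner vertex $u$ incident to $E'_{i,j}$ satisfies $\#\{E'_{i,j}\text{-respecting critical shortest paths through } u\} > \eta \cdot \deg_{E'_{i,j}}(u)$.

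Next I would exploit $t$-thickness and normalization. A subpath of an $E'_{i,j}$-respecting critical shortest path that runs from $L_i$ to $L_j$ inside $E'_{i,j}$ starts with an edge from $L_i$ into $V'$, then passes through a block of at most $t-1$ consecutive Steiner vertices, then an edge back into $L_j$ — so this subpath uses at most $t$ edges of $E'_{i,j}$ and at most $t-1$ Steiner vertices. Summing the per-vertex inequality above over all Steiner vertices $u$ incident to $E'_{i,j}$: each $E'_{i,j}$-respecting critical shortest path $\Pi$ passes through at most $t-1$ such vertices, so it is counted at most $t-1$ times on the left; on the right, $\sum_u \deg_{E'_{i,j}}(u)$ counts each edge of $E'_{i,j}$ that touches $V'$ at most twice, hence is at most $2|E'_{i,j}|$. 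Letting $\alpha'$ be the number of $E'_{i,j}$-respecting critical shortest paths, this gives $(t-1)\alpha' \ge \sum_u \#\{\ldots\text{through }u\} > \eta \sum_u \deg_{E'_{i,j}}(u)$, which is not directly what I want — I need an \emph{upper} bound on $\eta = \alpha'/|E'_{i,j}|$, so instead I should argue via \Cref{lem:volumnbound}: each Steiner vertex carries at most $\Delta^k$ critical shortest paths in total, so the number of $(\Pi, u)$ incidences with $u \in V'$ and $\Pi$ an $E'_{i,j}$-respecting critical shortest path through $u$ is at most $\Delta^k \cdot |V'_{i,j}|$, where $V'_{i,j}$ is the set of Steiner vertices incident to $E'_{i,j}$. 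On the other hand, every $E'_{i,j}$-respecting critical shortest path passes through at least one Steiner vertex of $E'_{i,j}$ (since $i+k\le j$ forces the $L_i$-to-$L_j$ subpath to use a shortcut edge, which by normalization has a $V'$ endpoint), so $\alpha' \le \Delta^k |V'_{i,j}|$. Combined with $|E'_{i,j}| \ge |V'_{i,j}|$ (each Steiner vertex has at least one incident edge in $E'_{i,j}$), this already yields efficiency $\le \Delta^k$; the point of the argument is to save the extra factor $\Delta^{1+\epsilon}$, which must come from the thickness constraint $t\le(1-\epsilon)k$.

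To get the improvement, I would track the critical direction more carefully. A critical shortest path $\Pi$ has an associated critical direction $(\vec v_0,\ldots,\vec v_{k-1})\in\cB(r,d)^k$, and the key structural fact (in the spirit of \Cref{lem:volumnbound}) is that, because the $L_i$-to-$L_j$ subpath crosses $j-i \ge k$ original layers using only Steiner vertices in between, the coordinates of the two $V$-endpoints differ by $\ell$ copies of the critical subdirections spread across the $k$ coordinate blocks — and the at-most-$(t-1)$ Steiner vertices on that subpath can only ``remember'' information about at most $t-1$ of the $k$ blocks, so a single Steiner vertex $u$ is compatible with at most $\Delta^{t-1}$ choices of the ``partial'' direction it has already committed to, leaving at least $k-(t-1) \ge \epsilon k$ blocks still free. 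Making this precise: I would show that if a Steiner vertex $u$ lies on more than $\Delta^{k-1-\epsilon}$ many $E'_{i,j}$-respecting critical shortest paths (counted with the $|E'_{i,j}|$ normalization), then by pigeonhole two of them agree on ``too many'' coordinate blocks, and then replaying the convex-extreme-point / midpoint argument of \Cref{lem:volumnbound} on a block where they disagree produces a contradiction with the fact that the Steiner shortcut does not alter the transitive closure on $V$. The efficiency of $E'_{i,j}$ is a weighted average of the efficiencies of its Steiner vertices (by the cleanup from \Cref{lem:efficiencynotdecrease} and a convexity/averaging step), so it too is bounded by $\Delta^{k-1-\epsilon}$.

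The main obstacle I anticipate is the counting step that converts ``thickness $t \le (1-\epsilon)k$'' into the quantitative saving of $\Delta^{1+\epsilon}$ in the exponent: one has to carefully set up how a Steiner vertex on an $E'_{i,j}$-respecting path can be labeled by a vector of ``already-fixed'' critical subdirections (using that at most $t-1$ Steiner vertices, hence at most $t-1$ of the $k$ coordinate blocks, are touched), show that two paths through $u$ sharing this label and sharing their $L_i$-endpoint's relevant coordinates must collide in a way that contradicts extremality, and then combine this per-vertex bound with the averaging over $E'_{i,j}$ from \Cref{lem:efficiencynotdecrease}. The bookkeeping around exactly which coordinate blocks a Steiner vertex ``sees'' — and handling the boundary edges into $L_i$ and out of $L_j$ separately from the internal $V'\times V'$ edges — is where the proof will be most delicate, but it is essentially a refinement of the already-established argument in \Cref{lem:volumnbound}.
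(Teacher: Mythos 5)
Your proposal starts out on the right footing (cleanup via \Cref{lem:efficiencynotdecrease}, exploiting thickness to bound path lengths, invoking \Cref{lem:volumnbound}), but the crucial counting step you sketch is not the one that closes the argument, and as written it has a gap. You frame the refinement of \Cref{lem:volumnbound} in terms of a Steiner vertex ``remembering'' at most $t-1$ of the $k$ coordinate blocks, and then applying pigeonhole on critical directions through a fixed Steiner vertex. But Steiner vertices are arbitrary new vertices with no imposed geometric structure: there is no canonical notion of which ``coordinate blocks'' a Steiner vertex sees, and even $\Delta^{k-1-\eps}$ paths through a vertex can all carry \emph{distinct} critical directions (there are $\Delta^k$ of them), so pigeonhole on directions does not force two paths to agree on ``too many blocks.'' You flag this bookkeeping as the delicate part, and indeed it is where the plan breaks down.

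The paper's proof is both simpler and genuinely different in its key step. First, the cleanup in \Cref{lem:efficiencynotdecrease} is used specifically to delete all Steiner vertices of degree $\ge \Delta^{1+\eps}$: by \Cref{lem:volumnbound} such vertices automatically have efficiency at most $\Delta^{k}/\Delta^{1+\eps}=\Delta^{k-1-\eps}$, so removing them cannot decrease the efficiency. This yields a \emph{degree bound} on the surviving Steiner vertices. Then, instead of counting critical directions, the paper counts \emph{routes}: starting from a fixed first shortcut edge $e$, thickness bounds the path length inside $E_{i,j}''$ by $t$, and the degree bound limits the branching at each step, giving at most $\Delta^{(t-1)(1+\eps)}<\Delta^{k-1-\eps}$ distinct routes from $e$. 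High efficiency forces some $e$ to carry more than $\Delta^{k-1-\eps}$ respecting critical shortest paths, so by pigeonhole two of them take the \emph{same route}, hence share their $L_i$- and $L_j$-endpoints; by uniqueness of critical paths and the limited-intersection property of \Cref{lem:criticalpairbasic} (using $j\ge i+k$), these two must be the same critical pair, a contradiction. The structural geometry of \Cref{lem:volumnbound} is used only to license the high-degree deletion, not to label Steiner vertices by partial directions. Your proposal is missing the degree-targeted cleanup and the route-counting step; without them, the argument as sketched does not close.
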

\begin{proof}
    Suppose that $E'_{i,j}$ has efficiency greater than
    $\Delta^{k-1-\eps}$.
    Begin by deleting all Steiner vertices from 
    $E'_{i,j}$ whose degrees are at least $\Delta^{1+\eps}$,
    and let $E_{i,j}''$ be the resulting edge set.
    Notice that according to~\cref{lem:volumnbound}, 
    these vertices have efficiency at most $\Delta^k/\Delta^{1+\eps}=\Delta^{k-1-\eps}$. 
    According to \Cref{lem:efficiencynotdecrease}, the efficiency of $E_{i,j}''$ is at least as large as that of $E'_{i,j}$.
    
    Recall that the efficiency of $E_{i,j}''$ is the ratio of the number of $E_{i,j}''$-respecting critical shortest paths to $|E_{i,j}''|$.  
    Thus, there must exist an edge $e\in E_{i,j}''$ 
    such that at least
    $\Delta^{k-1-\eps}$ different $E_{i,j}''$-respecting critical shortest paths use $e$, and moreover, use $e$ as the \emph{first} edge in $E_{i,j}''$.
    Due to the $t$-thickness of $E_{i,j}''$, 
    every path in $E_{i,j}''$ has length at most 
    $t \leq (1-\eps)k$, which, with the degree bound 
    $\Delta^{1+\epsilon}$, implies that the total number
    of paths in $E_{i,j}''$ using $e$ as the first edge
    is at most
    \[
    \Delta^{(t-1)(1+\epsilon)} 
    \leq 
    \Delta^{((1-\epsilon)k-1)(1+\epsilon)}
    <
    \Delta^{k-1-\epsilon}. 
    \]
    By the pigeonhole principle, there must be two 
    $E_{i,j}''$-respecting critical shortest paths
    $\Pi_1,\Pi_2$
    that take the same route through $E_{i,j}''$, 
    beginning with edge $e$.  This implies that 
    $\Pi_1,\Pi_2$ share the same vertices $v_i\in L_i$ and $v_j\in L_j$.  Notice that $v_i,v_j$ must also be in the two corresponding critical paths in $G$,
    which follows from
    the uniqueness of critical paths (\Cref{lem:criticalpairbasic}).
    By the definition of the normalized Steiner shortcut $E_{i,j}'$, $j\geq i+k$, and by the limited intersection property of \Cref{lem:criticalpairbasic}, 
    this implies that $\Pi_1,\Pi_2$
    are not distinct paths, but 
    associated with the \emph{same}
    critical pair in $P$, a contradiction.
\end{proof}

\subsection{$\cG_{k}$ Needs Thick Steiner 
Shortcuts---Proof of~\cref{thm:lowerbound}}\label{subsec:deterministiclowerbound}

\begin{proof}[Proof of~\cref{thm:lowerbound}]
    Select $G\in \cG_{k}=\cG_k(c,d)$, 
    where $c=3,d=20000$.  Let $m$ and $n$ be the number of edges and vertices in $G$, where $m=\exp(\omega(k \log k))$. 
    Suppose there exists a Steiner shortcut $G_+=(V\cup V_+,E\cup E_+)$ for $G$ with 
    $|E_+| \leq m^{1+\eps^2_0/k}$ edges 
    and diameter $m^{\eps_0/k}$,
    for a sufficiently small constant 
    $\eps_0=\eps_0(c,d)$.  
    We will deduce a contradiction. 
    Recall from \Cref{def:summary-parameters} that we have 
    \[
    m=O(Nk\ell \Delta) \leq r^{k\cdot f(c,d)},
    \]
    where $f(c,d)$ is a constant depending on $c,d$.
    We set $\epsilon_0 < 1/f(c,d)$, so
    \[
    |E_+| = m^{1+\eps^2_0/k} \le m\cdot r^{\eps_0}.
    \]
    We now substitute for $G_+$ a normalized shortcut graph
    $G'=(V\cup V',E\cup E')$, 
    where $E' \bigcup_{i,j} E_{i,j}'$ is the union of less than $(k\ell)^2$ normalized shortcuts $E_{i,j}'$
    from $L_i$ to $L_j$.
    By~\cref{lem:normalizing}, $|E'| < (k\ell)^2|E_+|$
    and $G'$ has diameter upper bounded
    \[
    k\cdot m^{\eps_0 /k}=O(k\cdot r^{\eps_0\cdot f(c,d)})\le O(k\cdot r^{c/2}) < k\ell.
    \]
    Since the diameter is less than the number of layers of $G$ (which is $k\ell$), every critical pair must be connected by a critical shortest path $\Pi$ that uses some $E'$ edge, and therefore must be
    $E_{i,j}'$-respecting for at least one $E_{i,j}'$.  
    Recall from \Cref{def:summary-parameters} that the number of critical pairs is
    \[
    \Omega\left(\frac{N\Delta^kr^{dk}}{(r+1)^{dk}}\right)=\Omega(m/k\ell)\cdot \Delta^{k-1}=\Omega(m)\cdot \Delta^{k-1-\frac{2c}{d}}.
    \] 
    The first inequality follows from $(1 - 1/(r+1))^{dk} = \Omega(1)$ since $r = \Omega(k)$. This in turn is because $r \geq m^{\frac{1}{k \cdot g(c,d)}}$, where $g(c,d)$ is a constant depending on $c$ and $d$, so we have $g(c,d) \cdot k \log r \geq  \log m$. Then since $m=\exp(\omega(k \log k))$, we have $\log m = \omega( \log k)$.  
    
    Next, remember that $E'$ is partitioned into 
    less than $(\ell k)^2\leq (r^{2c}\cdot k^2)$ 
    normalized components $\{E_{i,j}'\}$.
    Thus, there exists a component $E'_{i,j}$ for which $\Omega(m)\cdot\Delta^{k-1-\frac{6{c}}{d}}/k^2$ critical shortest paths are $E'_{i,j}$-respecting. The efficiency of $E'_{i,j}$ is therefore at least
    \[
    \frac{\Omega(m)\cdot \Delta^{k-1-\frac{6{c}}{d}}}{k^2\cdot m\cdot r^{\eps_0}}
    \geq 
    \Omega(\Delta^{k-1-\frac{8{c}}{d}}/ k^2)
    \ge 
    \Omega(\Delta^{k-1-\frac{9{c}}{d}}) 
    > 
    \Delta^{k-1.01}.
    \]
    
    The second to last inequality is because $k= O(r)$. 
    The last inequality follows by plugging in $c=3,d=20000$ 
    taking $\Delta$ to be sufficiently large.
    We now apply \Cref{lem:efficientofnormalized} with $\eps=0.01$, which
    implies that the efficiency of
    $E_{i,j}'$ is at most 
    $\Delta^{k-1.01}$, a contradiction.
\end{proof}

\subsection{$\cGtil_{k}$ Needs Thicker Steiner Shortcuts---Proof of \Cref{thm:improvedlowerboundundernoise}}\label{subsec:randomizedlowerbound}

In this subsection, we improve the bounds from \Cref{subsec:deterministiclowerbound} under the assumption that edges in $G_k$ are deleted independently with probability $p = 1/(k\ell)$. That is, we consider the (random) graphs from the family $\cGtil(c, d, p)$, where in this section $c=3,d=20000$. The introduction of this type of ``noise'' implies that each critical path remains present with constant probability but at the same time it limits the maximum efficiency of Steiner shortcuts which enables improved bounds on their thickness. Precisely, we show that after noising, no sufficiently efficient Steiner shortcut can survive. To this end, given any $A \subseteq L_i, B\subseteq L_j$, we relate the number of critical paths going from $A$ to $B$ to the number of edges on these paths (i.e. their \emph{support size}). Concretely, defining $\sigma := \max\{|A|, |B|\}$, we show that whenever there are $\gg \sigma^{3/2}\polylog(n)$ critical paths going from $A$ to $B$, their support size is so large that with high probability at least one edge supporting our paths is deleted when applying noise. Using a union bound, this in turn implies that the number of critical paths between any pair of sets $A \subseteq V_i, B\subseteq V_{j}$ is at most $\sigma^{3/2}$, which is much smaller than the trivial bound of $\sigma^2$. This limits the maximum efficiency any Steiner shortcut of a given thickness for $G \in \cGtil_{k}(c,d,p)$ with high probability, and allows us to rule out shortcuts of thickness $(8/7)k$ instead of just $k$ as in \Cref{thm:lowerbound}. 

Formally, our proof relies on the following two lemmas, the first of which connects the number of critical paths between $A$ and $B$ to their support size in a $G \in \cG_k(c,d)$.

\begin{lemma}\label{lem:supportsize}
    Consider any $G \in \cG_k(c,d)$ and let $i, j$ be such that $|i - j| \ge 4rk$. Futher, let $A \subseteq L_i, B \subseteq L_j$ be two subsets of vertices in layers $i$ and $j$, respectively. Then, if there is a set $\cP$ of at least $\frac{s^{3/2}}{2\sqrt{rk}}$ critical paths that connect a vertex in $A$ to a vertex in $B$, the number of edges supported by the paths in $\cP$ between the layers  $L_i$ and $L_j$ is at least $s$.
\end{lemma}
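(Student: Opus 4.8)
The plan is to prove the statement directly, in the contrapositive‑flavored form: a large family $\cP$ of critical $A$–$B$ paths is forced to use many distinct edges. Assume WLOG $i<j$ and set $D:=j-i\ge 4rk$; every path in $\cP$ uses exactly $D$ edges strictly between $L_i$ and $L_j$, and we want the set $E_\cP$ of distinct such edges to satisfy $|E_\cP|\ge s$. We may also assume $A,B$ are exactly the endpoint sets of $\cP$ in $L_i,L_j$ (replacing $A,B$ by these subsets changes neither $\cP$ nor its support and only decreases $\sigma:=\max\{|A|,|B|\}$); in particular each vertex of $A$ (resp.\ $B$) contributes a distinct incident edge to $E_\cP$, so $\sigma\le|E_\cP|$. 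First I would average: there is an edge $e^\star\in E_\cP$ lying on at least $|\cP|D/|E_\cP|\ge 4rk\,|\cP|/|E_\cP|$ paths of $\cP$; call this subfamily $\cP^\star$ and put $\pi:=|\cP^\star|$.

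The argument rests on two quantitative facts about $\cP^\star$, both exploiting that a critical path of $\cG_k$ has, at the start of super‑layer $I'$, the position $x^{(I')}=p+I'\vec v$, an arithmetic progression in its critical direction $\vec v=(\vec v_0,\dots,\vec v_{k-1})$, and that two such progressions with distinct directions agree in at most one super‑layer. The first fact is an \emph{upper} bound on $\pi$: any two distinct paths through $e^\star$ must have distinct critical directions, so following them from $e^\star$ toward whichever of $L_i,L_j$ is farther (WLOG $L_j$, at distance $\ge D/2\ge 2rk\ge k$, so each of the $k$ components advances at least once) they reach distinct vertices of $B$; hence $\pi\le|B|\le\sigma\le|E_\cP|$.

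The second fact — the technical heart — is a ``fan‑out'': the paths of $\cP^\star$ all share the two endpoints of $e^\star$, hence share a vertex $v^\star$ in some layer $L_{a^\star}$, yet once we move more than one super‑layer away from $a^\star$ toward $L_j$ (WLOG $a^\star$ lies in the first half, leaving $\Omega(rk)$ layers of room — this is exactly why the hypothesis demands $|i-j|\ge 4rk$), the position of a path of $\cP^\star$ becomes an injective affine function of the free coordinates of its direction. Consequently, in each of $\Omega(rk)$ layers after $L_{a^\star}$ the paths of $\cP^\star$ occupy $\pi$ distinct vertices and therefore use $\pi$ distinct edges of $E_\cP$; since edges in distinct layers are distinct, $|E_\cP|=\Omega(rk\,\pi)$.

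Combining $|E_\cP|=\Omega(rk\,\pi)$ with $\pi\ge 4rk\,|\cP|/|E_\cP|$ gives $|E_\cP|^2=\Omega\big((rk)^2|\cP|\big)$, i.e.\ $|E_\cP|\gtrsim rk\sqrt{|\cP|}$; plugging in $|\cP|\ge s^{3/2}/(2\sqrt{rk})$ and rearranging the constants gives $|E_\cP|\ge s$ (the first fact, $\pi\le\sigma\le|E_\cP|$, is the ingredient that covers the complementary parameter range). The hard part will be the fan‑out step: one must carefully track the within‑super‑layer interleaving — the $k$ components of a critical path are updated one at a time, so positions in the middle of a super‑layer are \emph{not} injective in all subdirections — to confirm that injectivity is recovered after crossing a full super‑layer, and one must handle the degenerate boundary layers near $e^\star$, $L_i$, and $L_j$, where a little of the factor $rk$ is lost. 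Getting this bookkeeping tight enough — and, if necessary, splitting into the regimes where $\sigma$ is small versus comparable to $|E_\cP|$ — is what pins down the precise factor $\tfrac{1}{2\sqrt{rk}}$ and the exponent $\tfrac32$ in the statement.
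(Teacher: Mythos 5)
Your proposal has a genuine gap: the combination of facts you derive is too weak to close the inequality. You show (via edge-averaging) that some $e^\star$ carries $\pi \ge 4rk|\cP|/|E_\cP|$ paths, and (via the clean fan-out from the shared vertex of $e^\star$) that $|E_\cP| = \Omega(rk\,\pi)$. Together these give $|E_\cP|^2 = \Omega\bigl((rk)^2|\cP|\bigr)$, which is a correct lower bound but not the one the lemma requires. Plugging $|\cP| \ge s^{3/2}/(2\sqrt{rk})$ into $|E_\cP| \gtrsim rk\sqrt{|\cP|}$ gives only $|E_\cP| \gtrsim (rk)^{3/4}s^{3/4}$, and $(rk)^{3/4}s^{3/4} \ge s$ holds only when $s \lesssim (rk)^3$. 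The lemma must hold for all $s$, and in its application (\cref{lem:efficiencybound} and then \cref{thm:improvedlowerboundundernoise}) one takes $s \approx \sigma\, k\ell\log n$ with $\sigma$ as large as $\Delta^{\Omega(t)}$, which far exceeds $(rk)^3$. Your Fact~1 does not rescue the complementary range: $\pi\le\sigma\le|E_\cP|$ combined with the averaging bound only gives $|E_\cP|^2 \ge 4rk|\cP|$, which is even weaker. Said differently, the lemma is equivalent to $|E_\cP|^3 \gtrsim rk\,|\cP|^2$, and for $|\cP| \gg (rk)^4$ that is strictly stronger than your $|E_\cP|^2 \gtrsim (rk)^2|\cP|$; no algebraic massaging of the two facts recovers the extra factor of $|\cP|/s$.

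The missing idea is a quadratic gain from a second moment argument rather than a single first-moment averaging. The paper's proof averages over \emph{layers} to find a bottleneck layer $L_t$ with at most $s/(4rk)$ supported vertices $U_1$, then builds a bipartite incidence graph $H$ between $U_1$ and the supported vertices $U_2$ at $L_{t\pm k}$ (paths of $\cP$ are in bijection with edges of $H$, by the intersection property of \cref{lem:criticalpairbasic}), and applies Cauchy--Schwarz to $\sum_{u_2}\deg_H(u_2)^2$ to produce a vertex $u\in U_1$ for which $p(u) = \sum_{u_2\in N_H(u)}\deg_H(u_2) \ge 4rk\,(|\cP|/s)^2$ paths pass through the $k$-step neighborhood $N_H(u)$. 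This $p(u)$ exceeds your $\pi$ by a factor of $|\cP|/s$. The fan-out is then done from the set $N_H(u)$ (whose points are within per-coordinate distance $2r$ of one another) over $>2rk$ further steps, yielding $p(u)$ distinct supported vertices in a single layer, so $s \ge p(u) \ge 4rk|\cP|^2/s^2$, i.e.\ $s^3 \ge 4rk|\cP|^2$ — exactly the exponent the lemma needs. Your fan-out observation (injectivity of positions one super-layer away from a shared vertex, accumulating over $\Omega(rk)$ layers) is sound and elegant, but it cannot compensate for averaging over edges rather than running the Cauchy--Schwarz step.
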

\begin{proof}
    Assume that the number of critical paths from $A$ to $B$ is at least $\eta s$ for some suitable $\eta$ ($\eta$ can be seen as the efficiency of a star shortcutting $A$ to $B$). Suppose that the support size of these paths at most $s$ across the layers $L_i, L_{i+1}, \ldots, L_{j}$. We show that a contradiction arises if $\eta \ge \frac{\sqrt{s}}{2\sqrt{rk}}$. This shows that many critical paths between $A,B$ imply a large support size.
    
    Denote the set of critical paths with one vertex in $A$ and one in $B$ by $\cP$. Note that our assumption on the support size of the paths in $\cP$ implies that the average number of outgoing edges associated to any path in $\cP$ per layer in between $A$ and $B$ in $\Gtil$ is at most $s/|i-j|$, so there must be a layer $L_t, t \in [i, j-1]$ with at most $s/(4rk)$ supported outgoing edges incident to $L_t$. We let $U_1$ be the set of vertices in $L_t$ that are on a path in $\cP$ and note that $|U_1| \le s/(4rk)$ because the number of supported vertices is at most the number of supported outgoing edges of $L_t$. 
    
    Since by assumption, there are $\eta s$ paths and each path passes through $U_1$ while $|U_1| \le s/(4rk)$, the average number of paths passing through a vertex in $U_1$ is at least $4 \eta rk$. Now, consider layer $L_{t-k}$ or $L_{t+k}$
    , i.e., a layer prior to or subsequent to layer $L_t$ by $k$ steps. Denote the set of supported vertices in this layer by $U_2$. We consider the logical bipartite graph $H = ( U_1 \sqcup U_2, E_H  )$ obtained by connecting every vertex $u \in U_1$ to all vertices in $U_2$ that lie on a path in $\cP$ passing through $u$. We denote the set of neighbors of $u$ in $H$ by $N_H(u)$ and its degree by $\deg_H(u) \coloneqq |N_H(u)|$. For two vertices $u_1, u_2$, we write $u_1 \sim_H u_2$ to denote that $u_1, u_2$ are adjacent in $H$.
    
    For each vertex $u_1 \in U_1$, we define $p(u_1) \coloneqq \sum_{u_2 \in N_H(u_1)} \deg_H(u_2)$ such that $p(u_1)$ is equal to the total number of critical paths passing through the neighbors of $u_1$ in $H$. We show that there is a vertex $u_1 \in U_1$ with $p(u_1) \ge 4rk\eta^2$.
    To this end, note that \begin{align*}
        \sum_{u_1 \in U_1} p(u_1) &= \sum_{u_1 \in U_1} \sum_{u_2 \in N_H(u_1)} \deg_H(u_2) \\
        &= \sum_{u_1 \in U_1} \sum_{u_2 \in U_2} \mathds{1}(u_1 \sim_H u_2) \deg_H(u_2)\\ 
        &= \sum_{u_1 \in U_1} \sum_{u_2 \in U_2} \sum_{u_1' \in U_1} \mathds{1}(u_1 \sim_H u_2) \mathds{1}(u_2 \sim_H u_1')\\
        &=\sum_{u_2 \in U_2} \deg_H(u_2)^2.
    \end{align*}
    Now, applying Cauchy-Schwarz, \begin{align*}
        \left( \sum_{u_2 \in U_2} \deg_H(u_2)^2 \right) \cdot |U_2| \ge \left( \sum_{u_2 \in U_2} \deg_H(u_2) \right)^2,
    \end{align*} and since $\sum_{u_2 \in U_2} \deg_H(u_2) = \eta s$ and $|U_2| \le s $ because we assumed that the total support of $\cP$ is at most $s$, we get \begin{align*}
         \sum_{u_1 \in U_1} p(u_1) = \sum_{u_2 \in U_2} \deg_H(u)^2 \ge \frac{1}{s} \left( \sum_{u \in U'} \deg_H(u) \right)^2 = \eta^2 s.
    \end{align*} This in turn implies that the average of $p(u_1)$ over $u_1 \in U_1$ is \begin{align*}
        \frac{1}{|U_1|} \sum_{u_1 \in U_1} p(u_1) \ge 4rk \eta^2,
    \end{align*} because $|U_1| \le s/(4 rk)$. We conclude that there exists some $u \in U_1$ such that $p(u) \ge 4rk \eta^2$.

    Note that the vertices corresponding to $N_H(u)$ have a pairwise Euclidean distance within the $dk$--dimensional grid associated to layer $U$ of at most $2rk$ since all of them are reachable from $u$ by following a critical direction, each of which has total Euclidean norm $\le rk$. We claim that following all critical paths incident to the vertices in $N_H(u)$ for $> 2rk$ steps, we reach $p(u)$ distinct vertices. Indeed, if there were a vertex $v$ reachable from two distinct critical paths starting at two vertices $u_1, u_2 \in N_H(u)$ after $> 2rk$ steps, then $\|\vec{u_1} - \vec{u_2}\|_2 > 2rk$  contradicting the fact that the Euclidean distance between $u_1, u_2$ is at most $2rk$ as required by the assumption that $u_1, u_2$ are both neighbors of $u$ in $H$.\footnote{For any $u \in L_i$, let $\vec{u}$ denote the position of $u$ in the $dk$-dimensional integer grid associated with layer $L_i$.} This uses that the critical directions are distinct vectors with integer coordinates. Accordingly, if $|i - j| \ge 4rk$, then \begin{align*}
        4rk\eta^2 \le s
    \end{align*} as otherwise, we obtain a contradiction to the assumption that all paths are supported by at most $s$ edges. 
    Rearranging the above yields that that whenever $\eta > \frac{\sqrt{s}}{2\sqrt{rk}}$, there is some layer on which all critical paths connecting $A$ to $B$ are supported by at least $s$ edges. 
\end{proof}

With the above, $(k\ell\sigma \log(n))^{\frac{3}{2 }}$ critical paths between $A$ and $B$ yield a support size of at least $k \ell \sigma \log(n)$. Using that each edge is deleted independently with probability $p = 1/(k\ell)$ then implies that, for every such set, at least one critical path fails.

\begin{lemma}\label{lem:efficiencybound}
    Consider any $G \in G_k$ and let $i, j$ be such that $|i - j| \ge 4rk$. Let $A \subseteq L_i, B \subseteq L_j$ and define $\sigma \coloneqq \max\{|A|, |B|\}$. Then with probability $1 - 1/\poly(n)$, for all such $i, j, A, B$ in $\Gtil$ with at least $(100 \sigma \log(n) \ell k)^{3/2}$ critical paths connecting a vertex in $A$ to a vertex in $B$, there is at least one such critical path that fails when passing from $G$ to the noised version $\widetilde{G}$.
\end{lemma}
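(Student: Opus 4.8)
The plan is a union bound driven by \Cref{lem:supportsize}. First I would fix a single configuration: layers $i,j$ with $|i-j|\ge 4rk$ and subsets $A\subseteq L_i$, $B\subseteq L_j$, and set $\sigma=\max\{|A|,|B|\}$. Let $\cP_{A,B}$ denote the set of critical paths of $G$ that pass through a vertex of $A$ in layer $L_i$ and through a vertex of $B$ in layer $L_j$; crucially, this set is a fixed function of $G$, independent of the noise. If $|\cP_{A,B}|<(100\sigma\log(n)\ell k)^{3/2}$ there is nothing to prove for this configuration, so assume otherwise. Setting $s:=100\sigma\log(n)\ell k$, we have $|\cP_{A,B}|\ge s^{3/2}\ge s^{3/2}/(2\sqrt{rk})$, so the hypothesis of \Cref{lem:supportsize} is met, and it tells us that the edges of $G$ carried by $\cP_{A,B}$ between layers $L_i$ and $L_j$ number at least $s$.

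Next I would estimate the probability that this particular configuration is a counterexample to the lemma, i.e.\ that \emph{every} critical path in $\cP_{A,B}$ survives in $\widetilde{G}$. That event forces all $\ge s$ of the supporting edges to survive, and since each edge is deleted independently with probability $p=1/(k\ell)$,
\[
\Pr\!\big[\text{every path of }\cP_{A,B}\text{ survives}\big]\ \le\ (1-p)^{s}\ \le\ e^{-ps}\ =\ e^{-100\sigma\log n}\ \le\ n^{-100\sigma}.
\]
The crucial feature here is that $ps=100\sigma\log n$: the chosen noise level $p=1/(k\ell)$ exactly cancels the $\ell k$ factor in $s$, so the failure probability decays like $n^{-\Theta(\sigma)}$ in the ``size'' $\sigma$ of the configuration.

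Finally I would union-bound over all configurations. There are at most $(k\ell+1)^2\le n^2$ choices of $(i,j)$, and for each fixed $\sigma$ the number of pairs $(A,B)$ with $\max\{|A|,|B|\}=\sigma$ is at most $\binom{N}{\le\sigma}^2\le n^{O(\sigma)}$ (using $N\le n$). Hence
\[
\Pr[\text{some configuration is a counterexample}]\ \le\ n^2\sum_{\sigma\ge1}n^{O(\sigma)}\cdot n^{-100\sigma}\ =\ n^2\sum_{\sigma\ge1}n^{-\Omega(\sigma)}\ =\ 1/\poly(n),
\]
and on the complementary event the conclusion of the lemma holds for all $i,j,A,B$ simultaneously. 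The one genuinely delicate point, and the place I would be most careful, is exactly this last step: there are doubly-exponentially many subsets $A,B$, so the union bound is viable only because the per-configuration failure probability $n^{-100\sigma}$ beats the $n^{O(\sigma)}$ count of configurations of a given size $\sigma$ — which in turn relies on \Cref{lem:supportsize} producing a support bound linear in $\sigma$ and on the noise level being inversely proportional to the number of layers.
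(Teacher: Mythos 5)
Your proposal is correct and takes essentially the same route as the paper: invoke \Cref{lem:supportsize} to translate the $(100\sigma\log(n)\ell k)^{3/2}$ path-count threshold into a support of at least $s=100\sigma\log(n)\ell k$ edges, use $p=1/(k\ell)$ to get a per-configuration survival probability of at most $n^{-100\sigma}$, and union bound over the $\le n^{O(\sigma)}$ configurations of each size $\sigma$ and the $O((k\ell)^2)$ layer pairs. You even make explicit the harmless $2\sqrt{rk}$ slack in applying \Cref{lem:supportsize}, which the paper elides.
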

\begin{proof}    
    Let $\gamma := 100 \log n$. We show that for every choice of $A \subseteq L_i, B \subseteq L_j$ with more than  $(\gamma \sigma k \ell)^{3/2}$ critical paths, at least one path will fail after applying noise. Invoking~\Cref{lem:supportsize}, the support size of these paths between layers $L_i$ and $L_j$ is at least $\gamma \sigma k \ell$. Furthermore, the number of all $A, B$ with $\max\{|A|, |B|\} = \sigma$ is at most $n^{2\sigma}$, and the probability that for fixed sets $A, B$ no critical path connecting $A$ to $B$ fails is at most \begin{align*}
        (1-p)^{\gamma \sigma k \ell} = \left( 1 - \frac{1}{k \ell}\right)^{\gamma \sigma k \ell} \le \exp(-\gamma \sigma).
    \end{align*} because we have chosen $p = \frac{1}{k\ell}$. Applying a union bound yields that the probability that there is some $A \subseteq L_i, B \subseteq L_j$ with $|i-j| \ge 4rk$ and at least $(\gamma \sigma k \ell)^{3/2}$ critical paths connecting $A$ and $B$ is at most \begin{align*}
        \sum_{\sigma = 1}^{\infty} (\ell k)^2 n^{2\sigma} \left(1 - \frac{1}{\ell k}\right)^{\gamma \sigma k \ell} &\le (\ell k)^2 \sum_{\sigma = 1}^{\infty} \exp\left(2\sigma \log(n) - \gamma \sigma \right)\\
        &\le (\ell k)^2 \sum_{\sigma = 1}^{\infty} \exp\left(2 \sigma \log(n) - 100 \sigma \log(n)\right)\\
        &\le (\ell k)^2 \sum_{\sigma = 1}^{\infty} n^{-98 \sigma} \le n^2 n^{-97} \le n^{-95}.
    \end{align*} This concludes that with probability $1 - 1/\poly(n)$, there is no pair of sets $A \subseteq L_i, B \subseteq L_j$ such that $|i-j| \ge 4rk$ and such that there are more than $(100 s \log(n) \ell k)^{3/2}$ critical paths connecting $A$ and $B$.

\end{proof}

\begin{proof}[Proof of \Cref{thm:improvedlowerboundundernoise}]
    We show that there is no Steiner shortcut with diameter $m^{\epsilon^2/k}$ while using at most $m^{1 + \epsilon^2/k}$ edges and using thickness 
    $t$ such that $1.01 t \le \frac{8}{7}k$. Here, $\epsilon > 0$ is a sufficiently small constant fixed later.

    To this end, we suppose that there is a Steiner shortcut $E'$ with the desired properties and then deduce a contradiction. We consider the family of random graphs $\cGtil_{k}$ and set $p = 1/(\ell k)$, and $c = 3$ such that $\ell = r^3$. 
    Now, using that $k = O(r)$, the number of critical pairs is \begin{align*}
       \Theta\left(\frac{N\Delta^k r^{dk}}{(r+1)^{dk}}\right) = \Theta(N\Delta) \cdot \Delta^{k-1} = \Theta(m/(k\ell)) \cdot \Delta^{k-1}.
    \end{align*} Recalling that $\Delta = \Theta(r^{d\frac{d-1}{d+1}})$ allows us to set $d = 2/\epsilon$ such that $r = O(\Delta^{2\epsilon})$ implying that the number of critical pairs becomes $\Omega(m) \Delta^{k-1-3\epsilon}$. Furthermore, the number of edges in our graph is at most \begin{align*}
        m = O(Nk\ell \Delta) = O((2(r+1)r\ell)^{kd} k\ell r^{2/\epsilon} )  = O(r^{7k/\epsilon}).
    \end{align*}
    
    Now, we convert the shortcut $E'$ into the union $E''$ of $O((\ell k)^2)$ normalized shortcuts according to \Cref{lem:normalizing}, each with at most $m^{1 + \epsilon^2/k}$ edges. According to the definition of thickness and by splitting edges by Steiner vertices, we can further assume that all components of $E''$ consist of $t-1$ layers out of which only the first and the last is adjacent to $G$. Under these assumptions, we can still assume that $|E''| = O(m) k^3\ell^2 m^{\epsilon^2/k}$ implying that the efficiency of $|E''|$ is at least 
    \begin{align*}
        \eta \coloneqq \frac{\Omega(m) \cdot \Delta^{k-1 - 3\epsilon}}{k^3 \ell^2 m^{1 + \epsilon^2/k}} &= \Omega(\Delta^{k-1-8\epsilon}) m^{-\epsilon^2/k}\\ 
        &= \Omega(\Delta^{k-1-9\epsilon}),
    \end{align*}
    using that $k = O(r), \ell = O(r^3)$, $r = O(\Delta^{2\epsilon})$, and $m = O(r^{7k/\epsilon})$.
   The diameter associated to $E''$ is at most $k m^{\epsilon^2/k} = kO(r^{7\epsilon}) < k\ell$, so, for every critical path, there is still a critical shortest path that uses at least one edge in $E''$. 

    In fact, for each critical shortest path $P$, there is a component $C$ of $E''$ such that $P$ is $C$-respecting while $C$ shortcuts two layers $L_i, L_j$ with $|i - j| \ge 4rk$. Otherwise, the diameter we would achieve for the critical paths would be at least $\frac{\ell k}{4rk} = \Omega(r^2)$, which contradicts our assumption that the achieved diameter is $k m^{\epsilon^2/k} = O(r^{1 + 7\epsilon})$. We therefore delete all components from $E''$ that introduce a shortcut between layers $L_i, L_j$ with $|i - j| < 4rk$ and call the result $E'''$. By the preceding discussion, we still know that every critical shortest path uses at least one edge in $E'''$. Due to this fact and since $|E'''| \le |E''|$, the efficiency of $E'''$ does not decrease as compared to $E''$. 

    We define $\epsilon' \coloneqq 9\epsilon$ and delete all Steiner vertices with degree at least $\Delta^{1+\epsilon'}$ from $E'''$ and call the resulting shortcut $E^*$. According to \Cref{lem:volumnbound} the efficiency of such vertices is at most $\Delta^{k-1-\epsilon'}$, so according to \Cref{lem:efficiencynotdecrease} removing them does not decrease the efficiency of the shortcut. 
    We recall that each component of $E^*$ consists of $t-1$ consecutive layers $S_1, \ldots, S_{t-1}$ such that every $C$-respecting critical shortest path contains exactly one vertex in each layer $S_i$. Denote by $M$ the union of all Steiner vertices in the middle layers $S_{\lfloor t/2 \rfloor}$ across all components $C$ in $E^*$. 
    
    We claim that there exists at least one vertex in $M$ which also has efficiency at least $\eta$. To see this, denote by $E_M$ the set of edges incident to the vertices in $M$ and by $\cP$ the set of $E^*$-respecting critical shortest paths. By definition of efficiency, we know that $|\cP|/|E^*| \geq \eta$, and since $|E_M| \le |E^*|$, we get that $|\cP|/|E_M| \ge \eta$. Now, denote the efficiency of a vertex $v \in M$ by $\eta_v$ and assume for the sake of contradiction that $\eta_v < \eta$ for all $v \in M$. Then, noting that $|\cP| \le \sum_{v\in M} \eta_v \deg(v)$ and $|E_M| = \sum_{v\in M} \deg(v)$ allows us to infer that \begin{align*}
        \frac{|\cP|}{|E_M|} \le \frac{\sum_{v\in M} \eta_v \deg(v)}{\sum_{v\in M} \deg(v)} < \eta,
    \end{align*} which contradicts the fact that $\cP/|E_M| \ge \eta$. 

    We have thus proved that there is some $v \in M$ with efficiency at least $\eta = \Omega(\Delta^{k - 1 - 9\epsilon})$. Denote the number of $E^*$-respecting shortest paths passing through $v$ by $p(v)$ and note that $p(v) \ge \eta \deg(v)$ by definition of efficiency. We show that $p(v)$ is large. To this end, we note that $p(v) \le \deg(v)^2 \Delta^{(1+\epsilon')(t-2)}$. The upper bound is the maximum number of pairs of vertices in $G$, one at the starting layer of $C$ and one at the final layer, both reachable from $v$ using only edges in the component $C$ of $E^*$ containing $v$. This bound uses our upper bound on the degree of remaining Steiner vertices. By uniqueness of critical paths (the third property of~\Cref{lem:criticalpairbasic}), the number of critical shortest paths passing through $v$ is at most as large as this number of such pairs. Using further that $\deg(v) \le p(v)/\eta$ then implies that \begin{align*}
        p(v) &\le \deg(v)^2 \Delta^{(1 + \epsilon')(t-2)} \le \frac{p(v)^2}{\eta^2} \Delta^{(1 + \epsilon')(t-2)},
    \end{align*} which implies that \begin{align*}
        p(v) &\ge \frac{\eta^2}{\Delta^{(t-2)(1 + \epsilon')}} = \Omega(\Delta^{2k - 2 - (t-2)(1 + \epsilon') - 18\epsilon}) = \Omega(\Delta^{2k -t(1+3\epsilon')}).
    \end{align*}
    On the other hand, we know that the critical $E^*$-respecting shortest paths passing through $v$ correspond to critical paths passing through some $A \subseteq L_i$ and some $B \subseteq L_j$ where $L_i, L_j$ are the layers shortcut by the component $C$ of $E^*$ that $v$ belongs to. Due to the maximum degree restriction in $E^*$, and the fact that $v \in M$, we know that $\sigma :=  \max\{|A|, |B|\} \le \Delta^{(\frac{t}{2} + 1)(1+\epsilon')}$. Applying \Cref{lem:efficiencybound}, we get that the number of critical paths with one vertex in $A$ and the other in $B$ is $O((k\ell \sigma \log (n))^{3/2}) = O( \Delta^{\frac{3t}{4}(1 + \epsilon') + \epsilon' + 6\epsilon +  1}) = O( \Delta^{\frac{3t}{4}(1 + 2\epsilon')})$ since $\ell = O(r^3)$ and $r = O(\Delta^{2\epsilon})$. In total, we get that the following inequality must be true \begin{align*}
        \Delta^{2k -t(1+3\epsilon')} \le p(v) \le \Delta^{\frac{3t}{4}(1 + 2\epsilon')}.
    \end{align*}
    This implies that 
    \begin{align*} 2k -t(1+3\epsilon') &\le \frac{3t}{4}(1 + 2\epsilon').
    \end{align*} 
    Rearranging yields that 
    \begin{align*}
        \frac{7}{4} (1 + 3\epsilon') t \ge 2k \Leftrightarrow (1 + 27\epsilon) t \le \frac{8}{7}k.
    \end{align*} Using $\epsilon = 0.0001$ and assuming that \begin{align*}
        1.01t \le \frac{8}{7}k
    \end{align*} would imply that the above inequality is violated. Hence, we have proved that no Steiner shortcut of thickness $t$, diameter $m^{\epsilon^2/k}$, and with $m^{1 + \epsilon^2/k}$ edges can exist if $1.01\cdot t \le \frac{8}{7}k$ and $\epsilon = 0.0001$.\qedhere

\end{proof}

\section{Generalized Shortcuts for Congestion and Length}
\label{sec:generalized}

Let us refer to shortcuts in previous sections as \emph{reachability shortcut}s. In this section, we formalize the generalization of reachability shortcut that also preserves congestion and length in \Cref{subsec:formal}. Then, we explain why the construction of these shortcuts would imply algorithmic breakthroughs in parallel algorithms for shortest paths and maximum flow in \Cref{subsec:implication}. For convenience, we assume lengths and capacities are polynomially bounded integers in this section.

\subsection{Formalization}

\label{subsec:formal}

Here, we define a notion of \emph{length-constrained flow shortcuts} (LC-flow shortcuts). This is a shortcut added to the graph so that every multi-commodity flow can be routed with approximately the same congestion and length, but all flow paths only require a small number of edges. 

To formalize this, we need notation for multi-commodity flows. Let $G=(V,E)$ be a graph with edge capacity $c:E\rightarrow\mathbb{R}_{\ge0}$ and edge length $\ell:E\rightarrow\mathbb{R}_{\ge0}$. A \emph{(multi-commodity) flow} $F$ in $G$ assigns value $F(P)\ge0$ for each directed path $P$ in $G$. We say that $P$ is a flow path of $F$ if $F(P)>0$. The \emph{congestion} on edge $e$ of $F$ is $\congest_{F}(e)=\frac{\sum_{e\in P}F(P)}{c(e)}$ and the congestion of $F$ is $\congest_{F}=\max_{e\in E}\congest_{F}(e)$. The \emph{length} of $F$ is the maximum total length overall flow paths of $F$, i.e., $\max_{P:F(P)>0}\ell(P)$. The \emph{hopbound} of $F$ is the maximum number of edges overall flow paths of $F$, i.e., $\max_{P:F(P)>0}|E(P)|$. For any vertex set $U$, a \emph{demand $D$ on $U$} assigns a value $D(u,v)\ge0$ for each vertex pair $(u,v)\in U$. The demand $D_{F}$ routed by $F$ is defined as $D_{F}(u,v):=\sum_{(u,v)\text{-path }P}F(P)$. We say that $D$ is routable with length $\ell$, congestion $\kappa$, and hopbound $h$ if there exists a flow $F$ with length $\ell$, congestion $\kappa$, and hopbound $h$ that routes $D$. When $D$ is routable in $G$ with congestion $1$, we sometimes just say that $D$ is routable in $G$. With this, we are ready to define LC-flow shortcuts.

\begin{defn}
[Lengh-Constrained Flow Shortcuts]\label{def:flow shortcut}
A shortcut $G'=(V\cup V', E\cup E')$ is a \emph{length-constrained flow shortcut of $G$} with hopbound $h$,\footnote{The hopbound in this section has the same purpose as the diameter in previous sections. We change to our terminology to hopbound to avoid confusion, because the graph here has edge lengths.} congestion slack $\kappa$, and length slack $s$ if:
\begin{enumerate}
\item Every demand on $V$ routable in $G$ with length $\lambda$ and congestion $1$ is routable in $G'$ with length $\lambda s$, congestion $1$, and hopbound $h$, and 
\item Every demand on $V$ routable in $G'$ with length $\lambda$ and congestion $1$ is routable in $G$ with length $\lambda$ and congestion $\kappa$.
\end{enumerate}
\end{defn}

The concept of LC-flow shortcut above was recently defined in undirected graphs \cite{haeupler2024low}. They show that, for any $n$-vertex undirected graph $G$ with capacity and length at most $\poly(n)$ and a parameter $k$, there exists a flow shortcut $G'$ with hopbound $h=O(k)$, length slack $s=O(k)$, congestion slack $\poly(k\log n)n^{O(1/\sqrt{k})}$, and size $|E'|\le n{}^{1+O(1/\sqrt{k})}\poly(\log n)$. It is an interesting open problem if an LC-flow shortcut with non-trivial guarantee exists.

\paragraph{Reachability Shortcuts, Distance Shortcuts, and Congestion Shortcuts.}

Observe that a reachability shortcut is precisely an LC-flow shortcut when the constraints related to length and congestion in \Cref{def:flow shortcut} are removed. %

{} 

We will use the terms \emph{distance shortcut} and \emph{congestion shortcut} to refer to an LC-flow shortcut when the respective constraint related to congestion and length in the definition of \Cref{def:flow shortcut} is removed.

More specifically, when focusing only on length, \Cref{def:flow shortcut} simplifies to distance shortcuts as follows. Let $\dist_{G}(s,t)$ denote the distance from $s$ to $t$ in $G$. Let $\dist_{G}^{h}(s,t)$ denote the minimum length of paths with hopbound $h$ from $s$ to $t$ in $G$. 

\begin{defn}[Distance Shortcuts]\label{def:distanceshortcut}
A shortcut $G'=(V\cup V',E\cup E')$ is a \emph{distance shortcut of $G$} with hopbound $h$ and length slack $s$ if, 
for every vertex pair $(s,t)\in V\times V$, 
\begin{enumerate}
\item If $\dist_{G}(s,t)\le\lambda$, then $\dist_{G'}^{h}(s,t)\le\lambda s$ and 
\item If $\dist_{G'}(s,t)\le\lambda$, then $\dist_{G}(s,t)\le\lambda$.
\end{enumerate}
\end{defn}

When focusing only on congestion, \Cref{def:flow shortcut} simplifies to congestion shortcuts as follows:
\begin{defn}[Congestion Shortcuts]\label{def:congestionshortcut}
A shortcut $G=(V\cup V', E\cup E')$ is a \emph{congestion shortcut of $G$} with hop bound $h$ and congestion slack $\kappa$ if 
\begin{enumerate}
\item Every demand on $V$ routable in $G$ is routable in $G'$ with hopbound $h$, and 
\item Every demand on $V$ routable in $G'$ is routable in $G$ with congestion $\kappa$.
\end{enumerate}
\end{defn}

\subsection{Exact Shortest Paths via $(1+\epsilon)$-Approximate Distance Shortcuts}

\label{subsec:implication}

In this section, we observe that a near-optimal parallel algorithm for constructing a distance shortcut with length slack $(1+o(1/\log n))$ would imply a near-optimal parallel algorithm for \emph{exact} single-source shortest paths with $\tilde{O}(m)$ work and $\polylog(n)$ depth.

In the SSSP problem, given a graph $G$ and a vertex $s$, we need to output the exact distance $d(u)$ from $s$ to every other vertex $u$ in $G$. For $\alpha$-approximate SSSP, we are given a graph $G$ and a vertex $s$, output an approximate distance $\tilde{d}(u)$ from $s$ to every other vertex $u$ in $G$, such that $d(u)\le\tilde{d}(u)\le\alpha d(u)$. 

The reduction from parallel exact SSSP to distance shortcut construction follows from combining tools from \cite{KleinS97} and \cite{RozhonHMGZ23}. 
\begin{thm}
\label{thm:reduc sssp}For any $n,m,m',h$, suppose there is an algorithm $\cA$ which takes an $O(m)$-edge $O(n)$-vertex directed graph and computes a distance shortcut of hopbound $O(h)$, length slack $(1+o(1/\log n))$ and size $O(m')$. Then, there is an algorithm computing exact SSSP of an $m$-edges $n$-vertex directed graph with $\tO 1$ calls to $\cA$, with additional $\tO{m+m'}$ work and $\tO h$ depth. 
\end{thm}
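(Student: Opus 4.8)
The plan is to reduce exact SSSP to approximate-distance-shortcut construction by the classical \emph{scaling} technique of Klein and Subramanian~\cite{KleinS97}, combined with the observation that a distance shortcut of hopbound $O(h)$ turns any $(1+o(1/\log n))$-approximate shortest-path computation into a \emph{low-depth} one. Concretely, given the input graph $G=(V,E)$ with integer lengths in $[0,\poly(n)]$ and source $s$, I would proceed in $O(\log n)$ scaling phases $i=0,1,\dots,\lceil \log_2(\text{max length})\rceil$. In phase $i$ we have, inductively, a $2$-approximation $\tilde d_i(u)$ of the true distance restricted to the top $i$ bits of the lengths; we want to refine it to the top $i+1$ bits. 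Rounding the one-more-bit lengths and subtracting the potential $\tilde d_i$ (Johnson-style reweighting, which keeps all edge lengths nonnegative since $\tilde d_i$ is a valid $2$-approximation of the coarser instance) yields a graph $G_i$ in which the relevant distances from $s$ are all $O(n)$ — in fact, after dividing out the common factor, $\polylog$-bounded are not needed; what matters is that an \emph{exact} shortest path in $G_i$ has at most $O(n)$ hops but we only need its \emph{length} up to additive error $o(1/\log n)$ times itself, which after $\log n$ phases accumulates to a genuine exact answer once we clear denominators.

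The key step is computing, in each phase, a $(1+o(1/\log n))$-approximate single-source distance in $G_i$ with $\tilde O(m+m')$ work and $\tilde O(h)$ depth. For this I call the hypothesized algorithm $\cA$ on $G_i$ to obtain a distance shortcut $G_i' = (V\cup V_i', E_i \cup E_i')$ with hopbound $O(h)$, length slack $(1+o(1/\log n))$, and size $O(m')$; this call costs whatever $\cA$ costs (folded into the ``$\tilde O(1)$ calls to $\cA$'' accounting). Then I run an $h$-hop-bounded Bellman–Ford / scaling-BFS from $s$ in $G_i'$: because every $s$–$u$ distance in $G_i$ is $(1+o(1/\log n))$-approximated by an $O(h)$-hop path in $G_i'$ (property 1 of Definition~\ref{def:distanceshortcut}), and conversely $G_i'$ never underestimates $G_i$-distances (property 2), running Bellman–Ford for $O(h)$ rounds on $G_i'$ yields distance estimates on $V$ that are sandwiched between the true $G_i$-distance and $(1+o(1/\log n))$ times it. Each Bellman–Ford round is one round of edge relaxations — $\tilde O(m+m')$ work and $\tilde O(1)$ depth — so $O(h)$ rounds give $\tilde O((m+m')h)$ work and $\tilde O(h)$ depth; the work can be shaved to $\tilde O(m+m')$ total by the standard trick of only scaling (this is where the RozhoňHMGZ machinery~\cite{RozhonHMGZ23} enters: their framework shows how to implement the hop-bounded exploration and the reweighting bookkeeping within near-linear total work across all phases, and how to upgrade a multiplicative $(1+o(1/\log n))$ per-phase guarantee into an exact final answer).

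Finally, I would verify the error budget: each phase introduces multiplicative slack $(1+o(1/\log n))$, and there are $O(\log n)$ phases, so the cumulative distortion is $(1+o(1/\log n))^{O(\log n)} = 1+o(1)$; since all intermediate distances live on an integer scale where the gap between distinct achievable values is $1$, a final rounding recovers the exact distances. Summing over phases, the total work is $\tilde O(1)$ calls to $\cA$ plus $\tilde O(m+m')$, and the depth is $\tilde O(h)$ (the $O(\log n)$ phases are sequential but contribute only a $\log n$ factor, absorbed into $\tilde O$).

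The main obstacle I anticipate is not the scaling skeleton — which is standard — but the \emph{directed, non-metric} nature of distance shortcuts: property 2 only promises that $G_i'$-distances upper-bound $G_i$-distances \emph{on the original vertex set $V$}, and the Steiner vertices $V_i'$ may sit at fractional or otherwise inconvenient ``positions,'' so the hop-bounded Bellman–Ford must be run on all of $V\cup V_i'$ while guaranteeing the $O(h)$-hop reach only between original vertices; one must check that relaxing through Steiner vertices never produces an underestimate of a true $G_i$-distance, which is exactly property 2, but making this interact cleanly with Johnson reweighting (the potentials $\tilde d_i$ are defined only on $V$, not on the Steiner vertices introduced freshly each phase) requires care — essentially one reweights $G_i$ first and only then feeds it to $\cA$, so the Steiner vertices are created in the already-reweighted instance and inherit nonnegative incident lengths automatically. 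Getting the total work down to near-linear rather than $\tilde O((m+m')h)$ is the other delicate point, and it is precisely here that one invokes the exact-SSSP scaling framework of~\cite{RozhonHMGZ23} as a black box rather than re-deriving it.
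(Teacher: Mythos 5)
Your scaling skeleton is sound, but there is a concrete misattribution in the work accounting that leaves a real gap. You propose to run $O(h)$ rounds of Bellman--Ford on the shortcut graph $G'$ in each scaling phase, which costs $\tilde{O}((m+m')h)$ work per approximate-SSSP computation, and you then claim the extra factor of $h$ can be shaved by invoking the framework of \cite{RozhonHMGZ23}. That is not what \cite{RozhonHMGZ23} provides. Their Lemma~3.1 / Theorem~1.7 reduces \emph{exact} SSSP to $\tilde{O}(1)$ calls to a black-box $(1+O(1/\log n))$-approximate distance oracle plus $\tilde{O}(m)$ extra work; it says nothing about making each oracle call run in near-linear work. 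The near-linear per-call work comes from a different place: the hop-limited $(1+\epsilon)$-approximate SSSP algorithm of Klein and Subramanian (their Lemma~3.2, applied to the shortcut graph $G'$ with the parameter tuned so that $O(h)$ rounds cost $\tilde{O}(m+m')$ total work rather than $\tilde{O}((m+m')h)$). You cite Klein--Subramanian only for the scaling idea and never for this work-efficient hop-limited routine, so as written your per-phase work is $\tilde{O}((m+m')h)$ and the theorem's $\tilde{O}(m+m')$ bound is not met.

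Beyond that, the paper's proof is cleaner and more modular than what you sketch, and you are doing extra, error-prone work by re-deriving the scaling loop yourself. The paper simply (i) builds a single approximate-SSSP oracle: call $\cA$ to get the shortcut $G'$, then run the Klein--Subramanian hop-limited $(1+o(1/\log n))$-approximation on $G'$ (using both properties of Definition~\ref{def:distanceshortcut} to transfer the guarantee from $G'$ back to $G$), giving $\tilde{O}(m+m')$ work and $\tilde{O}(h)$ depth per oracle call; and (ii) hands this oracle to \cite{RozhonHMGZ23} as a black box, which yields exact SSSP with $\tilde{O}(1)$ oracle calls. This sidesteps the issues you flagged as delicate: there is no need to manage Johnson potentials on freshly introduced Steiner vertices, because the reweighting lives entirely inside the black box of \cite{RozhonHMGZ23}, and the distance shortcut is recomputed inside each oracle call on whatever reweighted instance the black box presents. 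Also, your assertion that a $2$-approximation of single-source distances is automatically a valid Johnson potential is not true in general (it need not satisfy $\tilde{d}(v)-\tilde{d}(u)\le \ell(u,v)$ for all edges); \cite{RozhonHMGZ23} is careful about exactly this, which is another reason to defer to them outright rather than half re-derive the scaling and half defer.
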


In undirected graphs, there exist distance shortcuts of hopbound $n^{o(1)}$, length slack $(1+1/\log^{2}n)$ and size $n^{1+o(1)}$ (e.g. \cite{cohen2000polylog,Huang2019-thorupzwick}). Moreover, \cite{elkin2019linear} showed how to construct them in $m^{1+o(1)}$ work and $n^{o(1)}$ depth. 

If this can be generalized to directed graphs even when the shortcut size is $m^{1+o(1)}$, then \Cref{thm:reduc sssp} would imply an exact SSSP algorithm with $m^{1+o(1)}$ work and $n^{o(1)}$ depth. 

\begin{cor}
Suppose there is an algorithm that, given $O(m)$-edge $O(n)$-vertex directed graph,  computes a distance shortcut of hopbound $n^{o(1)}$, length slack $(1+o(1/\log n))$, and size $m^{1+o(1)}$. Then, there is an algorithm for computing exact SSSP in an $m$-edge $n$-vertex directed graphs in $m^{1+o(1)}$ work and $n^{o(1)}$ depth.
\end{cor}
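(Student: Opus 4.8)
The plan is to derive the corollary by a direct instantiation of \Cref{thm:reduc sssp}. The hypothesized construction plays the role of the algorithm $\cA$ there: it takes an $O(m)$-edge, $O(n)$-vertex directed graph and returns a distance shortcut whose hopbound is $h=n^{o(1)}$, whose length slack is $(1+o(1/\log n))$ --- exactly the slack \Cref{thm:reduc sssp} requires --- and whose size is $m'=m^{1+o(1)}$; in the parallel setting of this section we take this construction itself to run in $m^{1+o(1)}$ work and $n^{o(1)}$ depth. Nothing else needs to be argued from scratch: all of the substantive content (the Klein--Subramanian distance scaling of \cite{KleinS97} composed with the near-optimal shortest-path machinery of \cite{RozhonHMGZ23}) is already packaged inside \Cref{thm:reduc sssp}, which we are permitted to assume.

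Concretely, I would first apply \Cref{thm:reduc sssp} with these parameters to obtain an exact-SSSP algorithm for an $m$-edge, $n$-vertex directed graph that makes $\tilde{O}(1)$ calls to $\cA$ together with $\tilde{O}(m+m')$ additional work at $\tilde{O}(h)$ additional depth. I would then add up the costs. The additional work is $\tilde{O}(m+m')=\tilde{O}(m+m^{1+o(1)})=m^{1+o(1)}$ and the additional depth is $\tilde{O}(h)=\tilde{O}(n^{o(1)})=n^{o(1)}$. Since $\tilde{O}(1)=m^{o(1)}$ and the calls to $\cA$ can be scheduled sequentially, their combined contribution amounts to $m^{o(1)}$ copies of an $m^{1+o(1)}$-work, $n^{o(1)}$-depth subroutine, i.e.\ $m^{o(1)}\cdot m^{1+o(1)}=m^{1+o(1)}$ work and $m^{o(1)}\cdot n^{o(1)}=n^{o(1)}$ depth, where I use $m=\poly(n)$ so that the $o(1)$ exponents base $m$ and base $n$ are interchangeable. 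Summing everything, the total is $m^{1+o(1)}$ work and $n^{o(1)}$ depth, as claimed.

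I do not expect a genuine obstacle in the corollary itself --- it is a substitution into \Cref{thm:reduc sssp}, and whatever difficulty exists lives inside that theorem. The only bookkeeping that needs care is (i) tracking the $o(1)$ exponents so that composing $m^{o(1)}$ copies of an $m^{1+o(1)}$-work, $n^{o(1)}$-depth routine stays within those same bounds, and (ii) confirming that the length-slack hypothesis of \Cref{thm:reduc sssp}, namely $(1+o(1/\log n))$, is met verbatim by the hypothesized distance shortcut. Both checks are immediate, so the corollary follows.
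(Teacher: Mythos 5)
Your proposal is correct and takes essentially the same approach as the paper: the corollary is a direct instantiation of \Cref{thm:reduc sssp} with $h=n^{o(1)}$ and $m'=m^{1+o(1)}$, plus the routine bookkeeping you carried out (including the reasonable reading that the hypothesized shortcut-construction algorithm itself runs in $m^{1+o(1)}$ work and $n^{o(1)}$ depth, which is the intent given the surrounding discussion of \cite{elkin2019linear}). The paper gives no separate argument for this corollary, treating it as immediate from the theorem; your accounting of the $\tilde{O}(1)$ sequential calls and the $\polylog = n^{o(1)}$ absorption is exactly the unstated step.
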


Indeed, in directed graphs, the state-of-the-art is still far from this goal, because the problem is even harder than reachability shortcut. There exist distance shortcuts of hopbound $\tilde{O}(n^{1/3})$, length slack $(1+\eps)$ and size $\tilde{O}(n/\eps^{2})$ \cite{bernstein2023closing}, but they require large polynomial time to construct. \cite{cao2020efficient} showed a parallel algorithm with $\tilde{O}(m/\eps^{2})$ work and $n^{0.5+o(1)}$ depth to construct a distance shortcut of hopbound $n^{1/2+o(1)}$, length slack $(1+\eps)$ and size $\tilde{O}(n/\eps^{2})$. 

Note that, since \Cref{thm:reduc sssp} is a \emph{work-efficient }reduction, any improvement to to the $n^{1/2+o(1)}$ hopbound of \cite{cao2020efficient} would immediately improve the start-of-the-art for parallel exact SSSP problem which currently requires $\tilde{O}(m)$ work and $n^{1/2+o(1)}$ depth \cite{cao2020efficient,RozhonHMGZ23}.

\subsubsection*{Proof of \Cref{thm:reduc sssp}}

\begin{proof}
We first present an algorithm for computing $(1+o(1/\log n))$-approximate SSSP in an $O(m)$-edge, $O(n)$-vertex directed graph with $1$ call to $\cA$ and additional $\tO{m+m'}$ time and 
$\tO h$ depth.  We then show how to solve exact SSSP of an $m$-edge $n$-vertex graph using $\tO{1}$ calls to the approximation algorithm with additional $\tO m$ work and $\tO h$ depth. Combining the two reductions finishes the proof.

\paragraph{Approximate SSSP.}
Given a directed $G$ with $O(m)$ edges and $O(n)$ vertices, we first compute a distance shortcut of size $O(m')$, denoted by $G'=(V\cup V',E\cup E')$. 
We will apply \cite[Lemma 3.2]{KleinS97}, 
to $G'$, which we state as follows.

\begin{lemma}[{\cite[Lemma 3.2]{KleinS97}}]\label{lem:lowstepsssp}
    Given a directed graph $G$, a vertex $u$ and parameters $h,\eps$, there is an algorithm computing a \emph{$h$-limited $(1+\eps)$-approximate shortest path} $\tilde{d}_{h}$ defined as $\forall v\in V,d_h(v)\le \tilde{d}_h(v)\le (1+\eps)\cdot d_h(v)$ where $d_h(v)$ is the shortest path distance restricted to only using $h$ edges, in $\tO{mk\eps^{-1}}$ work and $\tO{k\eps^{-1}}$ depth.
\end{lemma}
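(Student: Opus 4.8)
The plan is to realize $\tilde d_h$ by a truncated, fully parallel Bellman--Ford, preceded (only if one wants the general-weight version) by a weight-bucketing step. First I would run exactly $h$ relaxation rounds from $u$, where round $i$ updates, simultaneously for all $v$, $\operatorname{dist}(v)\leftarrow\min\bigl(\operatorname{dist}(v),\ \min_{(x,v)\in E}\operatorname{dist}(x)+w(x,v)\bigr)$. Since lengths are non-negative and each round retains the previous label, after $h$ rounds $\operatorname{dist}(v)=d_h(v)$ exactly, so setting $\tilde d_h:=\operatorname{dist}$ trivially satisfies $d_h(v)\le\tilde d_h(v)\le(1+\eps)d_h(v)$. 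Each round is one scan of the edges followed by, at each $v$, a parallel min-reduction over the incoming edges, costing $O(m)$ work and $O(\log n)$ depth; hence $h$ rounds cost $O(mh)$ work and $O(h\log n)$ depth. Because this section assumes polynomially bounded integer lengths, every intermediate value is a $\poly(n)$-bounded integer, so the arithmetic is exact and unit-cost and we are already done; the $(1+\eps)$ slack and the $\eps^{-1}$ factor in the claimed bounds are simply not needed in this regime. (I read the ``$k$'' appearing in the work/depth bounds as the hop bound $h$.)

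\textbf{Recovering the $(1+\eps)$ statement for arbitrary weights.} To match the original lemma of \cite{KleinS97} verbatim --- i.e.\ for arbitrary non-negative real lengths with maximum value $W$ --- I would add a standard scaling layer. For each scale $\ell=0,1,\dots,\lceil\log_2(nW)\rceil$ put $\delta_\ell:=\frac{\eps}{2h}2^\ell$, round every edge up to a multiple of $\delta_\ell$ via $\hat w_\ell(e):=\delta_\ell\lceil w(e)/\delta_\ell\rceil$, and run the $h$-round parallel Bellman--Ford above on weights $\hat w_\ell$ (optionally capping tentative distances at $2^{\ell+2}$, so that after dividing by $\delta_\ell$ one only manipulates integers in $[0,O(h/\eps)]$); call the output $g_\ell(\cdot)$. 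I would then combine the scales by $\tilde d_h(v):=\min\{\,g_\ell(v)\ :\ g_\ell(v)\in[2^{\ell-1},2^{\ell+1})\,\}$, with trivial conventions for $d_h(v)\in\{0,\infty\}$. Rounding up adds at most $h\delta_\ell=\eps 2^{\ell-1}$ to any walk of at most $h$ edges, which yields the two facts one needs: every $g_\ell$ that is admitted (lands in its window) over-estimates $d_h(v)$, so $\tilde d_h(v)\ge d_h(v)$; and for the scale $\ell^\star=\lceil\log_2 d_h(v)\rceil$ the additive error is at most $\eps 2^{\ell^\star-1}\le\eps\, d_h(v)$ and $g_{\ell^\star}(v)$ does fall in $[2^{\ell^\star-1},2^{\ell^\star+1})$, so $\tilde d_h(v)\le(1+\eps)d_h(v)$. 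Under polynomially bounded lengths $W=\poly(n)$, so there are $O(\log n)$ scales, run in parallel, giving total cost $O(mh\log n/\eps)=\tilde O(mh\eps^{-1})$ work and $O(h\log n)=\tilde O(h\eps^{-1})$ depth.

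\textbf{Main obstacle.} Parallel Bellman--Ford and its work/depth accounting are routine, and under the paper's polynomial-weight assumption the first paragraph already suffices on its own. The only genuinely delicate part is the scale-combination bookkeeping in the general-weight version: one must choose the trust windows and the cap so that (a) no vertex is ever handed an \emph{under}-estimate originating from a scale that is too coarse or too fine, while (b) its correct scale $\ell^\star$ is always admitted and is within the $(1+\eps)$ factor. I would discharge this via the three-way case split (scales $\ell<\ell^\star$, $\ell=\ell^\star$, $\ell>\ell^\star$), in each case comparing the $+h\delta_\ell$ rounding slack against $2^\ell$ and against $d_h(v)$; every constant that surfaces is absorbable into a constant-factor redefinition of $\eps$.
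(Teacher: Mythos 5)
Your proof is correct for the lemma as literally stated (reading the $k$ in the bounds as the hop bound $h$), but note that the paper does not prove this lemma at all: it is imported as a black box from Klein and Subramanian \cite{KleinS97}, so your self-contained argument is a genuinely different route. In fact your first paragraph alone already suffices: an exact $h$-round parallel Bellman--Ford has $O(mh)$ work and $O(h\log n)$ depth, which sits inside $\widetilde{O}(mh\eps^{-1})$ work and $\widetilde{O}(h\eps^{-1})$ depth, so the $(1+\eps)$ slack is superfluous at this level of bounds; and in your second paragraph the window/cap bookkeeping you single out as the main obstacle is not actually needed for correctness, since every $g_\ell$ is an overestimate of $d_h$ and the scale $\ell^\star=\lceil\log_2 d_h(v)\rceil$ is within a $(1+\eps)$ factor, so the plain minimum over all scales already satisfies the sandwich (the windows only help keep the scaled integers small, and small positive distances below the coarsest $\delta_\ell$ would need negative scales in the general-real-weight setting, though this is moot under the paper's polynomially bounded integer lengths). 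The caveat is that your elementary argument proves less than what \cite{KleinS97} actually provide and what the paper later uses: in the proof of \Cref{thm:reduc sssp} the lemma is invoked with $\eps=o(1/\log n)$ to get $\widetilde{O}(m+m')$ work and $\widetilde{O}(h)$ depth ``by setting the parameter $p=m/k$,'' i.e., work that is near-linear and essentially decoupled from the hop bound and accuracy; a truncated Bellman--Ford inherently pays $\Theta(mh)$ work, so your proof certifies the stated bounds but could not replace the citation for that downstream application, which relies on the stronger work--depth tradeoff of the Klein--Subramanian algorithm.
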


We can compute an $h$-limited $(1+o(1/\log n))$-approximate shortest path distances $\{\tilde{d}^{G'}(u)\}_{u\in V}$ in $\tO{m+m'}$ work and $\tO h$ depth (by setting the parameter $p=m/k$ in their paper). We use $\{\tilde{d}^{G'}(u)\}_{u\in V}$ as the output. This finishes the algorithm description, which uses $1$ call to $\cA$ with additional $\tO m$ work and $\tO 1$ depth.

In their definition, a $h$-limited $(1+o(1/\log n)$-approximate shortest path distances $\{\tilde{d}^{G'}_h(u)\}_{u\in V}$ satisfies that $d_{h}^{G'}(u)\le\tilde{d}^{G'}_h(u)\le(1+o(1/\log n))d_{h}^{G'}(u)$, where $d_{h}^{G'}(u)$ is the shortest path distance restricted to only using $h$ edges. According to~\cref{def:distanceshortcut}, if we write $d^{G}(u)$ as the shortest path distance on $G$, then we have $d_{h}^{G'}(u)\le(1+o(1/\log n))\cdot d^{G}(u)$ and $d^{G}(u)\le d_{h}^{G'}(u)$. Thus, we have $d^{G}(u)\le\tilde{d}^{G'}_h(u)\le(1+o(1/\log n))(1+o(1/\log n))d^{G}(u)$, which means $\tilde{d}^{G'}_h(u)$ is a $(1+o(1/\log n))$-approximate distance.

\paragraph{Exact SSSP.} The following lemma is a direct implication of Theorem 1.7 and Lemma 3.1 in \cite{RozhonHMGZ23},

\begin{lemma}[{\cite[Theorem 1.7 and Lemma 3.1]{RozhonHMGZ23}}]\label{lem:exacttoapproxsssp}
    Given a directed graph $G$, a source $s\in V(G)$, there is an algorithm computing the exact distance from $s$ to every node in $G$, using $\tO{1}$ calls to a $(1 + O(1/ \log n))$-approximate distance oracle $\cO$ on directed graphs and an additional $\tO{m}$ work and $\tO{1}$ depth.\footnote{One might notice that Lemma 4.1 in \cite{KleinS97} is also about reducing exact shortcut path to approximate shortest path oracles. However, they require a stronger approximate shortest path oracle: The approximate distance must be an exact distant on the shortcut $G'$. Thus, it cannot be used here.}
\end{lemma}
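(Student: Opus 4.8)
The plan is to bootstrap from the $(1+O(1/\log n))$-approximate oracle to \emph{exact} distances by a weight-scaling argument run over $O(\log n)$ geometric ``scales.'' Since edge lengths are polynomially bounded positive integers, every finite distance $\dist_{G}(s,v)$ is a nonnegative integer below $nW$, where $W=\poly(n)$ is the maximum length; set $L:=\lceil\log_{2}(nW)\rceil=O(\log n)$. For $i=0,1,\dots,L$, round $i$ will guarantee that $\dist_{G}(s,v)$ has been computed exactly for every $v$ with $\dist_{G}(s,v)<2^{i}$, so that round $L$ finishes the job. Each round will use exactly one oracle call plus $\tilde{O}(m)$ additional work and $\tilde{O}(1)$ additional depth; since the $L+1$ rounds run sequentially this totals $\tilde{O}(1)$ oracle calls, $\tilde{O}(m)$ work, and $\tilde{O}(1)$ depth, as claimed. (Importantly, the oracle is only ever invoked on ordinary derived graphs with poly-bounded integer weights; unlike \cite[Lemma 4.1]{KleinS97}, this reduction does not need the oracle to be exact on any auxiliary structure.)

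In round $i$, let $\hat d$ hold the already-correct values ($\hat d(v)=\dist_{G}(s,v)$ if $\dist_{G}(s,v)<2^{i-1}$, else $+\infty$). I would build an auxiliary graph $G_{i}$ as follows: first add a fresh source $s_{i}$ with an edge $(s_{i},v)$ of length $\hat d(v)$ for every ``done'' vertex $v$, so that any walk may restart from a done vertex at no extra cost and $\dist_{G_{i}}(s_{i},v)=\dist_{G}(s,v)$ for all $v$; next truncate every original length to $\min(\ell(e),2^{i})$, which changes no distance that is below $2^{i}$; finally rescale all lengths by $\alpha_{i}:=\Theta(\log n/2^{i})$ and round each rescaled length up to an integer. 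After rescaling, for every $v$ with $\dist_{G}(s,v)<2^{i}$ the scaled distance lies in an interval of width $O(\log n)$ about $\alpha_{i}\dist_{G}(s,v)$, while distinct integral original distances are separated by $\alpha_{i}\ge 1$ scaled units; moreover every scaled-and-rounded length is $O(\log n)$, so a standard length-uniformization keeps $|E(G_i)|=\tilde{O}(m)$. Running the oracle on $(G_{i},s_{i})$ returns $\tilde d(v)\in[\dist_{G_{i}}(s_{i},v),(1+O(1/\log n))\dist_{G_{i}}(s_{i},v)]$; for each $v$ whose returned value falls in the expected $O(\log n)$-window, extract $\dist_{G}(s,v)$ by dividing by $\alpha_{i}$ and rounding, and defer every other vertex to a later round. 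Constructing $G_{i}$, rescaling, uniformizing, and the read-off are all $\tilde{O}(m)$ work and $\tilde{O}(1)$ depth.

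The crux — and what I expect to be the main obstacle — is controlling rounding error: rounding each rescaled length up to an integer adds up to one unit per edge, hence as much as the number of hops along a shortest path, which for a long path would overwhelm the $o(1)$ ``slack'' the $(1+O(1/\log n))$ multiplicative error leaves on an $O(\log n)$-length scaled distance. Reconciling these is exactly what the machinery of \cite{RozhonHMGZ23} (their Theorem~1.7 and Lemma~3.1) accomplishes, and I would adopt their argument: the essential device is a Klein--Subramanian-style hop reduction — the same tool behind \Cref{lem:lowstepsssp}, applied with hop parameter $\polylog(n)$ inside each scale (and iterated $O(\log n/\log\log n)=\tilde O(1)$ times to ``compress'' $\polylog$-hop segments) — which reduces the per-scale subproblem to one whose relevant shortest paths use only $\polylog(n)$ hops, so the accumulated rounding error is $o(\alpha_{i})$ and taking $\lfloor\tilde d(v)/\alpha_i\rfloor$ returns $\dist_{G}(s,v)$ exactly. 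Their scaling framework then handles composing the $L=O(\log n)$ scales in sequence and checking that every auxiliary graph keeps polynomially bounded integer weights so that the oracle remains applicable. Putting the $L+1$ rounds together yields exact SSSP with $\tilde{O}(1)$ oracle calls, $\tilde{O}(m)$ additional work, and $\tilde{O}(1)$ additional depth.
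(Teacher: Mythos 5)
The paper does not actually prove this lemma: it invokes it as a black box from \cite{RozhonHMGZ23} (their Theorem 1.7 and Lemma 3.1). Your proposal ultimately does the same---at the decisive moment you write that you would ``adopt their argument''---so the citation is carrying the real weight in both cases, which is acceptable. The problem is that the scaffolding you wrap around the citation contains a concrete gap, and it misidentifies what the cited machinery fixes. In your per-scale read-off you call the oracle on a graph whose scaled source-to-$v$ distance is $\Theta(\log n)$ and then recover $\dist_G(s,v)$ by dividing by $\alpha_i=\Theta(\log n/2^{i})$ and rounding. But a $(1+O(1/\log n))$ multiplicative guarantee on a scaled distance of $\Theta(\log n)$ is an additive error of $\Theta(1)$ scaled units, i.e.\ $\Theta(2^{i}/\log n)$ original units---far larger than the integrality gap of $1$---and this is true no matter how well you control the rounding error or the hop count of the relevant paths. (Relatedly, your claim that distinct integral distances remain separated by $\alpha_i\ge 1$ scaled units already fails once $2^{i}\gg\log n$.) So ``one oracle call per scale, then round'' cannot return exact distances: the quantity being approximated must first be made small, which in scaling frameworks is achieved via potentials/reduced costs, not by truncating and rescaling the raw distances as your $G_i$ does.

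This is also why attributing the crux to ``rounding error, handled by Klein--Subramanian-style hop reduction'' points at the wrong difficulty. The hard part---and the actual content of \cite{RozhonHMGZ23}---is that generic $(1+\eps)$-approximate distances do not directly furnish valid (nonnegative reduced-cost) potentials for the next stage of a scaling loop; their contribution is showing how to upgrade an arbitrary approximate distance oracle into estimates strong enough to play that role, using $\tilde{O}(1)$ calls. Klein and Subramanian could only close this loop with a strictly stronger oracle whose answers are exact on the shortcut graph, which is exactly the caveat recorded in the paper's footnote to this lemma, and which your sketch implicitly runs into. If you want a self-contained argument rather than a citation, the missing ingredient is this potential-forming/boosting step; hop reduction inside each scale does not substitute for it.
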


Given an $m$ edges $n$ vertices graph $G$, we apply \Cref{lem:exacttoapproxsssp} to get an exact distance function by $O(\log n)$ calls to the $(1+o(1/\log n))$-approximate distance oracle presented in the last paragraph, and an additional $\tO{m}$ work and $\tO{1}$ depth. The lemma follows.
\end{proof}

\subsection{Parallel Exact Max Flow via $\tilde{O(}1)$-Approximate Congestion Shortcuts}

In this section, we observe that a near-optimal parallel algorithm for constructing a congestion shortcut with congestion slack $\polylog(n)$ (together with an embedding algorithm defined below) would imply a near-optimal parallel algorithm for \emph{exact} maximum flow with $\tilde{O}(m)$ work and $\polylog(n)$ depth.

\begin{defn}
[Backward Mapping]\label{def:embedding} For a shortcut algorithm $\cA$ which outputs a shortcut $G'$ of the input graph $G=(V,E)$, a \emph{backward mapping algorithm} for $\cA$ takes input $G$ and $G'$ generated by $\cA$, along with a flow $F'$ in $G'$ 
with congestion $1$, 
outputs a flow $F$ in $G$ routing the 
same demand as $F'$ with congestion $\kappa$. 
\end{defn}

The reduction from parallel exact max flow to congestion shortcut construction follows quite straightforwardly by repeatedly solving approximate max flow on the residual graph.
\begin{thm}
\label{thm:reduc maxflow}Given $n,m,m',h,\kappa$, suppose there is an algorithm $\cA$ which takes an $O(m)$-edge $O(n)$-vertex directed graph and computes a congestion shortcut of hopbound $O(h)$, congestion slackness $\kappa$ and size $O(m')$, along with a corresponding embedding algorithm $\cA'$ for $\cA$, then there is an algorithm computing exact max flow with $\tO{\kappa}$ calls to $\cA$ and $\cA'$ and additional $\tO{\kappa\cdot\poly(h)\cdot(m+m')}$ work and $\tO{\kappa\cdot\poly(h)}$ depth. 
\end{thm}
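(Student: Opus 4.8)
The plan is to implement the classical residual‑augmentation reduction from exact maximum flow to approximate maximum flow, using the congestion shortcut algorithm $\cA$ together with its backward mapping $\cA'$ as an $O(\kappa)$‑approximate $s$‑$t$ max‑flow oracle whose $O(h)$ hopbound makes each invocation parallelizable. Throughout, maintain a feasible $s$‑$t$ flow $f$ in $G$, initially zero, and repeatedly augment it along flows discovered in the residual graph $G_f$, which has the same vertex set and at most $2m$ arcs. Since capacities are polynomially bounded integers, $\mathrm{maxflow}(G)\le\poly(n)$, so it suffices to drive the value of the residual instance below $1$ and then integralize. One iteration goes as follows. Run $\cA$ on $G_f$ to obtain a congestion shortcut $G_f'=(V\cup V',E_f\cup E_f')$ with hopbound $O(h)$, congestion slack $\kappa$, and $O(m')$ added edges, hence $O(m+m')$ arcs in total. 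By the first condition of \Cref{def:congestionshortcut}, a maximum $s$‑$t$ flow of $G_f$, viewed as a demand, is routable in $G_f'$ with congestion $1$ along $O(h)$‑hop paths; hence the maximum value of a congestion‑$1$ flow of $G_f'$ all of whose paths use $O(h)$ arcs is at least $\mathrm{maxflow}(G_f)$. Compute a constant‑factor approximation $F'$ to this $O(h)$‑hop‑bounded maximum flow of $G_f'$ (the one substantive subroutine, discussed next), normalized to congestion $1$, and apply the backward mapping $\cA'$ of \Cref{def:embedding} to get a flow $F$ in $G_f$ routing the same demand with congestion at most $\kappa$. Then $F/\kappa$ is feasible in $G_f$ and has value $\Omega(\mathrm{maxflow}(G_f)/\kappa)$, so augmenting $f$ by $F/\kappa$ shrinks the residual value by a factor $1-\Omega(1/\kappa)$. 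Hence $\Theta(\kappa\log\poly(n))=\tO{\kappa}$ iterations bring the residual value below $1$, and the whole process uses $\tO{\kappa}$ calls to $\cA$ and to $\cA'$, plus $\tO{\kappa\cdot\poly(h)\cdot(m+m')}$ additional work and $\tO{\kappa\cdot\poly(h)}$ additional depth.

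The main obstacle is the subroutine it calls: a parallel, constant‑factor, $O(h)$‑hop‑bounded maximum flow algorithm on an $O(m+m')$‑arc graph running in $\poly(h)$ depth and $\tO{\poly(h)\cdot(m+m')}$ work. I would obtain this from the multiplicative‑weights framework for packing linear programs: maintain exponential arc weights and, in each of $\tO{1}$ rounds, route a small amount of flow along an approximately shortest $s$‑$t$ walk of at most $O(h)$ arcs under the current weights, where such a bounded‑hop walk is found by $O(h)$ rounds of Bellman--Ford relaxation in $\tO{h}$ depth and $\tO{m+m'}$ work. Two things need to be verified: that restricting to $O(h)$‑hop walks loses only a constant factor against the $O(h)$‑hop‑constrained optimum (immediate, at the cost of a constant factor in the hop budget), and that the standard width and convergence analysis of multiplicative weights survives replacing an exact shortest‑walk oracle by a hop‑bounded approximate one (routine). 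A $(1+\eps)$‑accurate version costs an additional $\poly(1/\eps)$ factor, which is not needed here.

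A secondary but delicate point is exactness in the parallel model. The backward‑mapped flows are fractional, so $f$ must be periodically re‑integralized by a parallel flow‑rounding step, and the termination argument must be arranged so that the final integral flow has value exactly $\mathrm{maxflow}(G)$ and not $\mathrm{maxflow}(G)-1$: iterate until the residual value is comfortably below $1$, apply flow rounding, and, should a single augmenting unit remain, certify and route it with one further shortcut call (a residual value at least $1$ guarantees an $O(h)$‑hop unit of flow in $G_f'$). These are standard scaling and rounding arguments and do not change the asymptotic work or depth, which completes the reduction.
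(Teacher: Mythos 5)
Your outer reduction is exactly the paper's: compute a congestion shortcut of the residual graph, solve a bounded-hop approximate max flow on the shortcut, use the backward mapping $\cA'$ to pull the flow back with congestion $\kappa$, scale down, augment, and repeat $\tO{\kappa}$ times until the residual value drops below $1$; then round. That structure is correct and matches the paper, which likewise iterates a $\Theta(1/\kappa)$-approximate flow on the residual graph.

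The gap is in the subroutine you treat as ``the one substantive'' piece: a parallel, constant-factor, $O(h)$-hop-bounded maximum flow on an $O(m+m')$-arc directed graph in $\poly(h)$ depth. You propose to get it from multiplicative weights, routing along an approximately shortest $O(h)$-hop $s$-$t$ walk found by Bellman--Ford in each of ``$\tO{1}$ rounds.'' This does not work as stated. In a path-based MWU (Garg--K\"onemann/Fleischer/Young style) each round routes flow along one shortest path and the number of rounds needed for a constant-factor approximation is $\tilde\Omega(m)$ (one per saturated edge), not $\tO{1}$; a maximum flow can genuinely require a path decomposition with $\Omega(m)$ paths, so $\tO{1}$ single-walk routings cannot even come close. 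If you instead mean to route along a shortest-path forest or a blocking-flow-like batch in each round, the ``standard width and convergence analysis of multiplicative weights'' you invoke no longer applies off the shelf, and establishing that such a batched scheme converges in $\poly(h)\cdot\polylog(n)$ parallel rounds in a \emph{directed} graph is itself a nontrivial result, not a routine verification. This is precisely the step the paper does not try to prove from scratch: it invokes Theorem~3.1~(1) of \cite{HaeuplerHS23} (their \Cref{lem:lowstepapxflow}), which gives a $(1-\eps)$-approximate maximum $h$-length flow in $\tO{m\cdot h^{17}\cdot\eps^{-9}}$ work and $\tO{h^{17}\cdot\eps^{-9}}$ depth. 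You should either cite such a result or substantially expand the MWU sketch into an actual parallel algorithm with a correct iteration-count analysis; as written, the depth claim for the subroutine is unsupported and, taken literally, false.

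A minor further note: your careful handling of re-integralization is a reasonable addition, and the paper handles it by the same standard flow-rounding argument (cf.~\cite[\S 5]{Cohen95}); this part of your argument is fine and does not diverge from the paper.
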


In undirected graphs, congestion shortcuts with near-optimal quality follows from \emph{tree flow sparsifiers }defined as follows.
\begin{defn}
A \emph{tree flow sparsifier} $T$ of \emph{$G$} with congestion slack $\kappa$ is such that 
\begin{enumerate}
\item The set of leaves of $T$ corresponds to $V(G)$. 
\item Every demand on $V(G)$ routable in $G$ is routable in $T$ 
\item Every demand on $V(G)$ routable in $T$ is routable in $G$ with congestion $\kappa$. 
\end{enumerate}
\end{defn}

Clearly from the definition, a tree flow sparsifier with depth $d$ and congestion slack $\kappa$ is a congestion shortcut with congestion slack $\kappa$, hopbound $2d$, and size $O(n)$. There exists a tree flow sparsifier with depth $O(\log n)$ and congestion slack $O(\log^{2}n\log\log n)$ \cite{harrelson2003polynomial}. Moreover, there is a parallel algorithm that construct a tree flow sparsifier with depth $O(\log n)$ and congestion slack $O(\log^{9}n)$ in $\tilde{O}(m)$ work and $\polylog(n)$ depth \cite{agarwal2024parallel}. 

If this can be generalized to directed graphs even when the shortcut size is $\tilde{O}(m)$, then \Cref{thm:reduc maxflow} would imply an exact max flow algorithm with $\tilde{O}(m)$ work and $\polylog(n)$ depth.

\begin{cor}
Suppose there is an algorithm that, given $O(m)$-edge $O(n)$-vertex directed graph,  computes a congestion shortcut of hopbound $\polylog(n)$, length slack $\polylog(n)$, and size $\tilde{O}(m)$. Then, there is an algorithm for computing exact max flow in an $m$-edge $n$-vertex directed graphs in $\tilde{O}(m)$ work and $\polylog(n)$ depth.
\end{cor}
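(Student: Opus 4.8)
The plan is to apply \Cref{thm:reduc maxflow} essentially verbatim, instantiating its parameters with the hypothesized shortcut. Concretely, let $\cA$ be the assumed near-optimal parallel algorithm that, on an $O(m)$-edge, $O(n)$-vertex directed graph, produces a congestion shortcut with hopbound $h=\polylog(n)$, congestion slack $\kappa=\polylog(n)$, and size $m'=\tilde{O}(m)$, together with the accompanying backward mapping algorithm $\cA'$, both running in $\tilde{O}(m)$ work and $\polylog(n)$ depth. Feeding $\cA$ and $\cA'$ into \Cref{thm:reduc maxflow} yields an exact max-flow algorithm that makes $\tilde{O}(\kappa)=\polylog(n)$ calls to $\cA$ and $\cA'$ and incurs an additional $\tilde{O}(\kappa\cdot\poly(h)\cdot(m+m'))$ work and $\tilde{O}(\kappa\cdot\poly(h))$ depth.

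It then remains to simplify these bounds. Since $h=\polylog(n)$, we have $\poly(h)=\polylog(n)$; combined with $\kappa=\polylog(n)$ and $m'=\tilde{O}(m)$, the additional work is $\tilde{O}(\polylog(n)\cdot\polylog(n)\cdot\tilde{O}(m))=\tilde{O}(m)$ and the additional depth is $\tilde{O}(\polylog(n)\cdot\polylog(n))=\polylog(n)$. Each of the $\polylog(n)$ invocations of $\cA$ and $\cA'$ costs $\tilde{O}(m)$ work and $\polylog(n)$ depth by hypothesis, contributing $\polylog(n)\cdot\tilde{O}(m)=\tilde{O}(m)$ work and, since the calls are performed sequentially, $\polylog(n)\cdot\polylog(n)=\polylog(n)$ depth in aggregate. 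Summing all contributions gives an exact max-flow algorithm running in $\tilde{O}(m)$ work and $\polylog(n)$ depth, as claimed.

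There is no genuine obstacle here: the statement is a direct corollary of \Cref{thm:reduc maxflow}, and the only thing to verify is that repeatedly multiplying and adding polylogarithmic factors---over the $\tilde{O}(\kappa)$ iterations and through the $\poly(h)$ overhead---keeps the work at $\tilde{O}(m)$ and the depth at $\polylog(n)$, which holds because a bounded product of polylogarithmic quantities is again polylogarithmic. The one modeling point worth flagging is that \Cref{thm:reduc maxflow} requires the backward mapping algorithm $\cA'$ in addition to the shortcut itself, and that both $\cA$ and $\cA'$ must themselves be near-optimal (i.e.\ $\tilde{O}(m)$ work and $\polylog(n)$ depth) for the conclusion to follow; this is exactly the hypothesis intended by ``near-optimal parallel algorithm for constructing a congestion shortcut (together with an embedding algorithm)'' in the paragraph preceding the corollary.
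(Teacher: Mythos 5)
Your proof is correct and takes the approach the paper intends: the corollary is stated as a direct instantiation of \Cref{thm:reduc maxflow}, and the paper itself gives no separate proof beyond the surrounding prose. Your parameter substitution ($h=\polylog(n)$, $\kappa=\polylog(n)$, $m'=\tilde{O}(m)$) and the observation that $\poly(\polylog(n))=\polylog(n)$ are exactly what is needed, and you correctly flag the two hypotheses that the corollary's statement leaves implicit but the theorem requires, namely that $\cA$ itself runs in $\tilde{O}(m)$ work and $\polylog(n)$ depth and that a backward mapping algorithm $\cA'$ with the same complexity is also supplied. One small point: the corollary as written says ``length slack $\polylog(n)$,'' but \Cref{def:congestionshortcut} has no length parameter, so this is evidently a typo for ``congestion slack''; you read it that way, which is the right interpretation.
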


Unfortunately, there is no prior work on this and we leave it an interesting open problem.

\subsubsection*{Proof of \Cref{thm:reduc maxflow}}

\begin{proof}
We first give an algorithm computing approximate max flow which routes a feasible flow whose value is at least $1/2\kappa$ times the optimal value on the input graph. Then show how to turn it into an exact max flow algorithm.

\paragraph{Approximate max flow.}

We first use $\cA$ to compute a congestion shortcut on $G$, denoted by $E'$. Let $G'=G\cup E'$. We will use Theorem 3.1 (1) of \cite{HaeuplerHS23}, stated as follows.

\begin{lemma}[Theorem 3.1 (1) of \cite{HaeuplerHS23}]\label{lem:lowstepapxflow}
    Given a directed graph $G$ with edge capacities and unit edge length, length bound $h\ge 1$, approximation parameter $\eps>0$, and source and sink $s,t\in V$, there is an algorithm computing a $1-\eps$ approximate maximum $h$-length flow, in the sense that the flow value is at least $1-\eps$ times the maximum $h$-length flow from $s$ to $t$ in $G$. The algorithm takes $\tO{m\cdot h^{17}\cdot \frac{1}{\eps^{9}}}$ work and $\tO{h^{17}\cdot \frac{1}{\eps^{9}}}$ depth.
\end{lemma}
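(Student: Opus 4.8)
The plan is to prove this via the classical multiplicative-weights / fractional packing-LP framework, instantiated with a hop-bounded shortest-path oracle as the column-generation subroutine.

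First I would cast the $h$-length $s$-$t$ max-flow problem as a pure packing LP: maximize $\sum_{P} f_P$ over all directed $s$-$t$ paths $P$ with at most $h$ edges, subject to $\sum_{P\ni e} f_P \le c(e)$ for every edge $e$. This LP has exponentially many variables but a cheap, parallel column-generation oracle: given nonnegative edge lengths $w_e$, the minimum-$w$-weight $s$-$t$ path among those of at most $h$ hops is computed by $h$ rounds of Bellman--Ford-style relaxation $d_i(v)=\min_{(u,v)\in E}\bigl(d_{i-1}(u)+w_{uv}\bigr)$, where each round is a min-plus matrix--vector product of depth $O(\log n)$; this is $\widetilde{O}(m)$ work and $\widetilde{O}(h)$ depth per oracle call. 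Next I would run a width-independent parallel packing solver (in the style of Luby--Nisan / Plotkin--Shmoys--Tardos / Young, with capacity bucketing to remove width dependence) that calls this oracle once per iteration. After $T=\poly(h,1/\eps)\cdot\polylog(n)$ iterations it returns a $(1-\eps)$-optimal fractional solution, together with dual weights certifying optimality; crucially the returned $f$ is an explicit nonnegative combination of the $\le T$ paths the oracle produced, each with at most $h$ edges, so $f$ is \emph{already} an $h$-length flow and no post-hoc rounding onto short paths is needed. Multiplying the per-iteration cost by $T$ gives work $\widetilde{O}(m)\cdot\poly(h,1/\eps)$ and depth $\widetilde{O}(h)\cdot\poly(h,1/\eps)$, and tracking the polynomial degrees through the width-independent iteration count gives the stated $\widetilde{O}(m\,h^{17}/\eps^{9})$ work and $\widetilde{O}(h^{17}/\eps^{9})$ depth.

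The step I expect to be the main obstacle is obtaining the \emph{sharp} polynomial dependence rather than merely ``some polynomial'': a naive MWU has width controlled by both the disparity of edge capacities and by $h$, which can force $\Omega(m h/\eps^2)$ iterations, and one must control how much flow is pushed per step so that recombining the fractional flow never inflates the hop bound. The way I would recover the precise exponents is the length-constrained flow ``boosting'' strategy underlying \cite{HaeuplerHS23}: first build a crude $\polylog(n)$- or $n^{o(1)}$-approximate $h$-length flow oracle by routing a blocking-type flow across the layered DAG produced by the $h$-hop relaxation; then iteratively augment by adding scaled copies of an approximate $h$-length flow found in the residual graph, always using hop bound $(1+o(1))h$ per augmentation so the accumulated flow still has hop bound $O(h)$. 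The number of boosting rounds is $\widetilde{O}(\alpha/\eps)$ for crude ratio $\alpha$, and propagating $\alpha$, the residual-graph blow-ups, and the $\eps$-dependence through this recursion is exactly where the $h^{17}$ and $\eps^{-9}$ arise. Alongside this I would also handle the reduction from optimization to feasibility (binary search on the flow value, or optimizing the Lagrangian directly) and check that maintaining the residual graph and the flow decomposition in each round parallelizes with only $\polylog$ overhead --- routine but necessary for the depth bound.
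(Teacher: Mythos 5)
This lemma is not proved in the paper at all: it is imported verbatim as Theorem 3.1\,(1) of \cite{HaeuplerHS23} and used as a black box inside the proof of \Cref{thm:reduc maxflow}. So the relevant comparison is between your sketch and the cited work, and read as a standalone proof your sketch has a genuine gap. The first paragraph claims that a width-independent packing solver, calling a single $h$-hop shortest-path oracle once per iteration, converges to a $(1-\eps)$-approximation in $T=\poly(h,1/\eps)\cdot\polylog(n)$ iterations. That is not something the standard frameworks give you: path-at-a-time column generation (Garg--K\"onemann style) needs on the order of $m\log n/\eps^{2}$ augmentations, since each iteration routes flow along one path and the optimum may decompose into $\Theta(m)$ paths of incomparable capacities, which would make the depth depend polynomially on $m$; the width-independent solvers you invoke (Luby--Nisan, Young) update all variables simultaneously and require explicit (or at least compactly structured) access to the variable set, which the exponentially many $h$-hop paths do not provide without further ideas. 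Getting depth $\poly(h,1/\eps)$ \emph{independent of $m$} is exactly the nontrivial content of \cite{HaeuplerHS23}, and their algorithm routes many paths per round via blocking-flow-type subroutines on the $h$-layered structure rather than one oracle path per MWU step.

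The second paragraph implicitly concedes this by saying the sharp exponents are recovered by ``the length-constrained flow boosting strategy underlying \cite{HaeuplerHS23}'' --- but at that point you are no longer proving the lemma, you are citing the paper the lemma comes from, and the specific exponents $h^{17}$ and $\eps^{-9}$ are never derived, only asserted to fall out of bookkeeping. As a standalone argument this is circular; as a use of the literature it is fine, and indeed matches what the paper does, which is to quote the result and move on. If your goal was to justify the lemma within this paper, the correct ``proof'' is simply the citation; if your goal was to reprove it, the missing piece is a parallel routine that extracts a constant (or $1/\poly(h,1/\eps)$) fraction of the residual $h$-length flow per round in $\poly(h,1/\eps)\cdot\polylog(n)$ depth, which your single-path-oracle MWU does not supply.
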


We use \Cref{lem:lowstepapxflow} with parameter $\eps=1/2$, $h$ the same as our $h$, and the digraph the same as $G'$ with unit length. The guarantee for the output is a flow $F'$, which is a $1-\eps$ approximation for maximum $h$-length flow, which, in our case, is the maximum flow with hop bound $h$ since the graph has unit length. Then, we call the backward mapping algorithm $\cA$ to turn $F'$ into a flow $F$ on $G$ with congestion $\kappa$. To make $F$ feasible, we scale the flow on each edge down by a factor of $\kappa$, and denote the final flow as $F/\kappa$. According to~\cref{def:flow shortcut}, every flow on $G$ corresponds to a feasible flow on $G'$ with hop bound $h$ and the same value. Thus, $F'$ has a value of at least $1/2$ times the max flow value on $G$, which implies that $F/\kappa$ is a $1/2\kappa$ approximate max flow on $G$.

The algorithm takes an input graph of $m$ edges and $n$ vertices, uses $1$ call to $\cA$ and $\cA'$, and additional $\tO{\poly(h)\cdot(m+m')}$ work and $\poly(h)$ depth using \Cref{lem:lowstepapxflow}.

\paragraph{Exact max flow.}

If we have an algorithm computing $\eps$-approximate max flow (here $\eps=1/2\kappa$), then we can use a folklore reduction to get the exact max flow: iteratively compute approximate max flows on the residual graph. 
In $\tO{1/\eps}$ iterations the optimal flow value of the residual graph will be decreased to $1/\poly(n)$. Notice that we can also apply the standard rounding algorithm (for example, see~\cite[\S 5]{Cohen95}) 
if we want to get an integer flow of the max flow value.

The algorithm uses $\tO{\kappa}$ calls to the approximate max flow algorithm, which corresponds to $\tO{\kappa}$ calls to $\cA,\cA'$, and additional $\tO{\kappa\cdot\poly(h)\cdot(m+m')}$ work and $\tO{\kappa\cdot\poly(h)}$ depth.
\end{proof}
\section*{Acknowledgments}
AB is supported by Sloan Fellowship, Google Research Fellowship,  NSF Grant 1942010, and Charles S. Baylis endowment at NYU.

HF is supported by the National Science Foundation Graduate
Research Fellowship Program under Grant No.\ DGE2140739. 

BH is partially funded by the Ministry of Education and Science of Bulgaria's support for INSAIT as part of the Bulgarian National Roadmap for Research Infrastructure and through the European Research Council (ERC) under the European Union's Horizon 2020 research and innovation program (ERC grant agreement 949272).

GH is supported in part by National Science Foundation grant 
NSF CCF-2153680.

GL is supported by the National Science Foundation Graduate
Research Fellowship Program under Grant No.\ DGE2140739.

MP is supported by grant no. 200021 204787 of the Swiss National Science Foundation.

SP is supported by NSF Grant CCF-2221980.

TS is supported by Sloan Fellowship, NSF Grant CCF-2238138, and partially funded by the Ministry of Education and Science of Bulgaria's support for INSAIT, Sofia University ``St.~Kliment Ohridski'' as part of the Bulgarian National Roadmap for Research Infrastructure.

\bibliographystyle{alpha}
\bibliography{refs}

\appendix

\section{Lower Bounds for $m^{1-\epsilon}$-Sized Steiner Shortcuts}
\label{sec:lower bound sublinear m}

In this section, we  present a simple lower bound ruling out the existence of strongly sublinear Steiner reachability shortcuts with subpolynomial diameter. 

\begin{restatable}{theorem}{mepsLB} \label{thm:mepsLB}
For every sufficiently small constant $\eps > 0$, there exists an infinite family of $n$-node, $m$-edge directed graphs $G$, where $m \in \left[n, n^{2-4\eps^{1/2}+\eps}\right]$, such that  every Steiner reachability shortcut of $G$ has size at least  $\tilde{\Omega}(m^{1-\eps})$ or diameter at least $\Omega(n^{\eps})$. 
\end{restatable}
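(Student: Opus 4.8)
To prove \Cref{thm:mepsLB} I would build a layered graph whose reachable pairs are ``rigid'' enough that, even with Steiner vertices, a small shortcut cannot serve many of them. Concretely, take a graph $G$ with $\ell+1=\Theta(n^{\eps})+1$ layers $L_0,\dots,L_\ell$ in which consecutive layers are joined by a fixed bipartite pattern $H$; all layers have the same size, and $H$ is chosen to have three properties simultaneously: (1) enough edges to realize the target density, so that tuning $H$ and the layer size lets $m$ range over $[n,\,n^{2-4\sqrt\eps+\eps}]$; (2) a rigidity property ensuring that between $L_0$ and $L_\ell$ each reachable pair $(s,t)$ is joined by an essentially unique walk of length exactly $\ell$; and (3) small local branching, so that the set of layer-$0$ vertices able to reach a given vertex within $h<\ell$ steps is tiny. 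The cleanest instantiation of (2)--(3) is Hesse's ``straight line'' idea: identify each layer with a box in $\mathbb{Z}^d$ and let $H$ be translation by one of $\Delta$ vectors that are extreme points of a convex body, forcing walks to be straight lines; here $\Delta$ and $d$ are the free parameters that trade off against $\eps$, and the balance needed to keep both the density high and the diameter at $n^{\eps}$ is what produces the exponent $2-4\sqrt\eps+\eps$. Let $P$ be the set of reachable first-layer/last-layer pairs; by construction $|P|$ is polynomially related to $m$ and $P$ uses $\Theta(m)$ of the edges.

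Given this, the argument mirrors \Cref{thm:hesse} and \cref{lem:volumnbound}. Fix a Steiner shortcut $G_+=(V\cup V_+,E\cup E_+)$ of diameter $h<\ell$ (otherwise the diameter alternative $\Omega(n^\eps)$ already holds). Since the unique $s$-$t$ walk in $G$ has length $\ell>h$, every $(s,t)\in P$ has a length-$\le h$ path in $G_+$ using at least one $E_+$-edge; I would first subdivide each direct shortcut edge $(x,y)$ with $x,y\in V$ into $x\to w_{xy}\to y$ through a fresh Steiner vertex (doubling $|E_+|$ and $h$), fix one shortest such path per pair, and charge $(s,t)$ to the first $E_+$-edge on it. The key step is a \emph{coverage bound}: a single edge $e=(u,u')\in E_+$ is charged by only $B\ll|P|$ pairs. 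This has two ingredients, both exploiting the rigidity of $H$: (a) if $(s,t)$ charges $e$ then $s$ reaches $u$ \emph{inside $G$} (the path uses no shortcut before $e$), and property (3) forces the source of essentially every pair charging $e$ to be the same $s$; (b) among pairs with a common source charging $e$, property (2) --- exactly the convexity argument of \cref{lem:volumnbound}, where two charging pairs with the same ``direction'' would create a reachable pair whose displacement is $2\ell\vec v$ --- bounds the number of admissible $t$. Summing over $E_+$ gives $|E_+|\ge |P|/(2B)$, and the parameter choices make the right-hand side $\tilde\Omega(m^{1-\eps})$.

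The hard part will be two things. First, \Cref{thm:mepsLB} allows \emph{arbitrary} thickness, so a critical shortest path may pass through a long chain of Steiner vertices; the argument above is written to use only the first $E_+$-edge and a single-edge coverage bound, so it is insensitive to the chain length, but one must verify that ingredient (b) depends only on $h$ and on $H$ and not on how many Steiner hops occur --- this is precisely why $H$ must have genuine rigidity (straight-line / convex-position structure) rather than mere sparsity. Second, the quantitative balancing is delicate: the diameter constraint $\ell=\Theta(n^\eps)$, the rigidity and branching requirements on $H$ (which cap its density), and the demand that $P$ still account for $\Theta(m)$ edges must all hold at once, and it is carrying out this optimization --- not any idea beyond \Cref{thm:hesse} and \cref{lem:volumnbound} --- that pins down the precise range $m\in[n,\,n^{2-4\sqrt\eps+\eps}]$ and the exponent $1-\eps$ on the size lower bound.
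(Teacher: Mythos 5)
Your proposal takes a fundamentally different (and, as written, broken) route from the paper's proof, so a careful comparison is warranted.

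The paper's proof of \Cref{thm:mepsLB} does not build a Hesse-style layered graph and does not use a charging argument at all. Instead it imports, as a black box, the reachability-preserver lower bound graphs of Abboud et al.\ (\Cref{lem:rp_lb}), which supply $|\mathcal P|\ge n^{2-4\sqrt\eps}$ pairwise edge-disjoint, unique paths each of length $n^\eps$, and then runs an \emph{information-theoretic compression} argument: for each subset $S\subseteq\mathcal P$ it builds $G^S$ by deleting the edges of $\mathcal P\setminus S$; it fixes, for each $S$, a minimum-size Steiner shortcut $G^S_+$ with diameter $\le n^\eps/2$, and shows that for $S\ne S'$ the Steiner-only graphs $H^S_+ = G^S_+ - E^S$ must differ (a path $P\in S\setminus S'$ forces $H^S_+$ to contain an $s$--$t$ walk for $P$'s endpoints that $H^{S'}_+$ cannot contain, by uniqueness of $P$ and its removal from $G^{S'}$). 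Counting vertex-labeled graphs with $\le x$ edges then yields $(x+n)^{2x}\ge 2^{|\mathcal P|}$, i.e.\ $x\ge |\mathcal P|/\log|\mathcal P|=\tilde\Omega(m^{1-\eps})$. The crucial feature of this argument is that it says nothing about \emph{how} the shortcut routes paths --- it is completely agnostic to thickness, fan-out through Steiner vertices, or the structure of the shortcut graph. It only needs that the shortcut \emph{distinguishes} $2^{|\mathcal P|}$ inputs, which is automatic from uniqueness plus the diameter budget being below $n^\eps$.

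Your proposal instead tries to run the charging argument of \Cref{thm:hesse}/\Cref{lem:volumnbound} directly against unbounded-thickness Steiner shortcuts, and this has a genuine gap exactly where you flag one. Ingredient (a) as stated is an overclaim: the number of $L_0$-sources reaching a fixed $u\in L_i$ within $G$ is on the order of $\Delta^i$, not $1$, so a single shortcut edge $e=(u,u')$ can be charged by many sources. More fundamentally, ingredient (b) is the part the paper cannot prove without thickness constraints, and neither can you: once the critical shortest path enters $V_+$ at $u'$, it may wander through an arbitrarily long chain of Steiner vertices before re-entering $V$, and \Cref{lem:volumnbound} only bounds critical shortest paths through a single Steiner \emph{vertex} by $\Delta^k$; it does not give a per-\emph{edge} coverage bound $B$ strong enough to yield $|P|/B=\tilde\Omega(m^{1-\eps})$ across the full range $m\in[n,n^{2-4\sqrt\eps+\eps}]$, and even the per-vertex $\Delta^k$ bound, when combined with $|P|=\Theta(N\Delta^k)$ and $m=\Theta(Nk\ell\Delta)$, only yields $|V_+|\gtrsim N= m/(k\ell\Delta)$, which falls to $\tilde\Omega(m^{1-\eps})$ only at the very sparse end. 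Indeed \Cref{thm:Steinerattack} shows that for $\mathcal G_k$ the charging argument is \emph{false} as a universal statement: a $k$-thick Steiner shortcut of size $m/n^{\Omega(1/k)}$ and diameter $k$ exists, so any coverage bound you derive must secretly depend on the thickness being bounded, which contradicts the goal of \Cref{thm:mepsLB}. The paper sidesteps the entire difficulty by counting graphs rather than counting charges, which is why the theorem does not need a thickness hypothesis at all. If you want a charging-style proof of a thickness-free statement, you would need a new idea beyond \Cref{lem:volumnbound} --- precisely the open problem the paper leaves open (\Cref{qu:Gcdp-shortcut}) for the stronger $m^{1+o(1)}$ regime.

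One smaller remark: the range $m\in[n,n^{2-4\sqrt\eps+\eps}]$ and the size exponent $1-\eps$ in the statement are inherited verbatim from the Abboud et al.\ preserver bound, not from balancing Hesse-style parameters, so your second ``hard part'' (re-deriving the exponents by optimizing $\Delta,d,\ell$) is also not how the paper pins them down.
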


Our lower bound argument will require the following lemma, which can be attributed to existing lower bounds for reachability preservers in~\cite{abboud2024reachability}.

\begin{lemma}[cf. Theorem 13 of \cite{abboud2024reachability}]
\label{lem:rp_lb}
For every sufficiently small $\eps > 0$, there exists an infinite family of $n$-node directed graphs $G$,  each with an associated collection of directed paths $\mathcal{P}$ in $G$ with the following properties:
\begin{enumerate}
    \item $|\mathcal{P}| \geq n^{2-4\eps^{1/2}}$,
    \item each path $P \in \mathcal{P}$ has $|P| = n^{\eps}$ edges, 
    \item paths in $\mathcal{P}$ are pairwise edge-disjoint, and
    \item each path $P \in \mathcal{P}$ is the unique directed path between its endpoints in graph $G$. 
\end{enumerate}
\end{lemma}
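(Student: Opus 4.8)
The plan is to derive \Cref{thm:mepsLB} directly from \Cref{lem:rp_lb}. Let $G$ be the graph it produces, together with the family $\cP$ of pairwise edge-disjoint, internally-unique directed paths, each having $n^\eps$ edges; we may take $G$ to be a DAG (as it is in the construction of \cite{abboud2024reachability}). Since $\eps$ is small, $(2-4\eps^{1/2}+\eps)(1-\eps)\le 2-4\eps^{1/2}$, so from $|\cP|\ge n^{2-4\eps^{1/2}}$ and $m\le n^{2-4\eps^{1/2}+\eps}$ we get $m^{1-\eps}\le n^{2-4\eps^{1/2}}\le|\cP|$. It therefore suffices to show that every Steiner reachability shortcut $G_+=(V\cup V_+,E\cup E_+)$ of $G$ with diameter $<n^\eps$ satisfies $|E_+|\ge\widetilde\Omega(|\cP|)$; contrapositively, once $|E_+|$ drops below $\widetilde\Omega(m^{1-\eps})$ the diameter must be $\ge n^\eps=\Omega(n^\eps)$. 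Finally, by contracting the strongly connected components of $G_+$ (which only decreases the diameter, and produces no component containing two vertices of $V$, since two such vertices would be mutually reachable in $G_+$ hence in $G$) we may assume $G_+$ is itself a DAG.

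The heart of the argument is an ``un-shortcutting'' step. For each $P\in\cP$ with endpoints $s_P,t_P$, fix a shortest $s_P$--$t_P$ path $Q_P$ in $G_+$; by hypothesis $|Q_P|<n^\eps=|P|$. Cut $Q_P$ at its vertices in $V$ into \emph{legs}: maximal subpaths between consecutive original vertices $a,b\in V$ all of whose interior vertices lie in $V_+$. Each leg is either a single edge of $E$, or it traverses at least one edge of $E_+$; in the latter case, since $a,b\in V$ and $a$ reaches $b$ along the leg in $G_+$, reachability-preservation gives $a\leadsto_G b$, and we may substitute a fixed $a$--$b$ path in $G$ for the leg. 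Performing all these substitutions turns $Q_P$ into an $s_P$--$t_P$ walk in $G$; this walk is a simple path because $G$ is acyclic, and it must equal $P$ by the uniqueness property~(4) of \Cref{lem:rp_lb}. Hence the legs of $Q_P$ partition $P$ into subpaths; in particular the $a$--$b$ portion of $P$ is the \emph{unique} $a$--$b$ path in $G$ (any other such path would, splicing in $P[s_P..a]$ and $P[b..t_P]$, give a second $s_P$--$t_P$ path), and because $|Q_P|<|P|$ at least one leg is not a single edge of $E$, i.e.\ every $Q_P$ uses at least one edge of $E_+$.

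It remains to charge the critical pairs to the shortcut edges, and this is where the difficulty lies. Collecting the endpoint pair $(a,b)$ of every leg used by some $Q_P$ yields a set $E^\circ$ of virtual edges with $E^\circ\subseteq E^*$; then $G^\circ=(V,E\cup E^\circ)$ is an ordinary (non-Steiner) reachability shortcut of $G$, and $Q_P$ projects to an $s_P$--$t_P$ path of length at most $|Q_P|<n^\eps$ in $G^\circ$, so $G^\circ$ has diameter $<n^\eps$. Re-running the un-shortcutting inside the DAG $G^\circ$ shows each virtual edge $(a,b)$ is created by a unique critical pair (its witness path is the unique $a$--$b$ path in $G$, which is a nonempty subpath of $P$, and of no other $P'$ by edge-disjointness), whence $|E^\circ|\ge|\cP|$. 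The remaining — and main — obstacle is to bound $|E^\circ|$ in terms of $|E_+|$. Mapping each virtual edge to the first and last $E_+$-edge of its creating leg (which pin down $a$ and $b$) only gives $|E^\circ|\le|E_+|^2$, hence the far-too-weak $|E_+|\ge\sqrt{|\cP|}$; indeed a single $E_+$-edge at a high-degree Steiner vertex can a priori lie on many legs, and for other graph families (e.g.\ $\cG_k$, cf.\ \Cref{thm:Steinerattack}) Steiner shortcuts genuinely are polynomially cheaper than non-Steiner ones. The plan to close the gap is to exploit the rigidity of the construction of \Cref{lem:rp_lb} beyond the four stated properties — that the $a$--$b$ portion of a critical path realized by a Steiner leg is forced and pairwise distinct across critical pairs, and that on the shortest path $Q_P$ the Steiner segment realizing a leg is a \emph{simple} path of length $<n^\eps$ — so as to bound the number of distinct legs incident to any single edge of $E_+$ by $n^{o(\eps)}$, yielding $|E^\circ|\le |E_+|\cdot n^{o(\eps)}$ and therefore $|E_+|\ge|\cP|/n^{o(\eps)}\ge\widetilde\Omega(m^{1-\eps})$.
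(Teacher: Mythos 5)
You have proved (or rather, sketched) the wrong statement. The target is \Cref{lem:rp_lb} itself: the \emph{existence} of an infinite family of $n$-node directed graphs, each equipped with $\ge n^{2-4\eps^{1/2}}$ pairwise edge-disjoint paths of exactly $n^{\eps}$ edges, each of which is the unique path between its endpoints. A proof of this lemma must exhibit such a construction --- in the paper it is simply imported from the reachability-preserver lower bounds of \cite{abboud2024reachability} (Theorem~13 there), and a self-contained proof would have to build the graph (e.g.\ a layered/grid-type construction with convex-position step vectors in the spirit of \Cref{subsec:warmup-construction}) and verify the four properties. Your proposal never does any of this: it takes \Cref{lem:rp_lb} as given and attempts to derive \Cref{thm:mepsLB} from it, which is the opposite direction of dependence.

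Even judged as a proof of \Cref{thm:mepsLB}, the argument has a gap that you yourself flag as the ``main obstacle'': after projecting the Steiner legs to virtual edges $E^{\circ}\subseteq E^*$, your charging only yields $|E^{\circ}|\le|E_+|^2$, hence $|E_+|\ge\sqrt{|\cP|}$, far short of $\widetilde{\Omega}(m^{1-\eps})$. Closing the gap by bounding the number of distinct legs per $E_+$-edge by $n^{o(\eps)}$ requires structural properties of the \cite{abboud2024reachability} construction beyond the four stated in \Cref{lem:rp_lb}, and you give no argument that such properties hold (indeed, as you note, for $\cG_k$ a single Steiner edge can serve polynomially many critical pairs, so this cannot follow from edge-disjointness and uniqueness alone). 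The paper avoids this issue entirely: its proof of \Cref{thm:mepsLB} is an information-theoretic counting argument --- it forms $2^{|\cP|}$ subgraphs $G^S$ by keeping or deleting each critical path, shows via uniqueness and the diameter bound $n^{\eps}/2$ that the Steiner parts $H^S_+$ of any valid shortcuts must be pairwise distinct labeled graphs, and concludes from $(x+n)^{2x}\ge 2^{|\cP|}$ that some shortcut has size $x=\widetilde{\Omega}(m^{1-\eps})$, using only the four properties of \Cref{lem:rp_lb}. So both the target statement and the derivation you attempt in its place are off: the former is unaddressed, and the latter is incomplete where it matters.
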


Using \Cref{lem:rp_lb}, we can directly prove \Cref{thm:mepsLB}. 
\begin{proof}[Proof of \Cref{thm:mepsLB}]
Let $\eps > 0$ be a sufficiently small constant and let $m$ be some function of $n$ such that $m \in \left[n, n^{2-4\eps^{1/2}+\eps}\right]$. By \Cref{lem:rp_lb}, there exists an infinite family of $n$-node directed graphs $G$, each with an associated collection of directed paths $\mathcal{P}$ in $G$ satisfying the properties in \Cref{lem:rp_lb}. Since paths in $\mathcal{P}$ are pairwise edge-disjoint and each has length at least $n^{\eps}$, we observe that
$$
|E(G)| \geq |\mathcal{P}|n^{\eps} \geq n^{2-4\eps^{1/2}+\eps}.
$$
We can ensure that $|E(G)| = \Theta(m)$ by repeatedly removing paths in $\mathcal{P}$ from graph $G$ until we achieve the desired graph density. After performing this procedure, $|E(G)| = m$ and $|\mathcal{P}| = mn^{-\eps} \geq m^{1-\eps}$, by Property 2 of \Cref{lem:rp_lb}. 

We will achieve our Steiner shortcut lower bound via an information theoretic argument. More specifically, we use graph $G$ to construct a graph family $\mathcal{G}$ of size $|\mathcal{G}| = 2^{|\mathcal{P}|}$. For each subset $S \subseteq \mathcal{P}$, we define graph $G^S$ as follows:
\begin{enumerate}
    \item Initially, let $G^S := G$.
    \item For each path $P \in \mathcal{P} - S$, remove every edge in $P$ from graph $G_S$, so that $G^S \gets G^S - E(P)$. 
\end{enumerate}
Since paths in $\mathcal{P}$ are pairwise edge-disjoint in $G$ by \cref{lem:rp_lb}, a path $P \in \mathcal{P}$ is contained in graph $G^S$ if and only if $P \in S$. 

For each graph $G^S  = (V^S, E^S)$ where $S \subseteq \mathcal{P}$, let $G^S_+ = (V^S \cup V^S_+, E^S \cup E^S_+)$ be a Steiner shortcut graph of $G^S$ with diameter at most $n^{\eps} / 2$ that minimizes $|V^S_+| + |E^S_+|$.  (Recall that the size of a Steiner shortcut graph is precisely $|V^S_+| + |E^S_+|$.) We claim that one of these graphs $G_+^S$ has size at least $|V^S_+| + |E^S_+| \geq \tilde{\Omega}(m^{1-\eps})$.

For each $S \subseteq \mathcal{P}$, let $H^S_+ = G^S_+ - E^S$. This graph contains exactly the Steiner edges $E^S_+$ in graph $G^S_+$. 
We define the graph family $\mathcal{G}$ to be $\mathcal{G} = \{H^S_+ \}_{S \subseteq \mathcal{P}}$.  
Let $x = \max_{S \subseteq \mathcal{P}}|V_+^S| + |E_+^S|$. We observe that each graph  $H_+^S \in \mathcal{G}$ has at most $x+n$ vertices and at most $x$ edges. 
Now note that without loss of generality, we can interpret each directed graph $H^S_+ \in \mathcal{G}$ as a subgraph of a vertex-labeled,  complete, directed graph with $x$ vertices. We denote this universal graph as $\hat{G}$.  Under this interpretation, $H^S_+ \subseteq \hat{G}$ for each $H^S_+ \in \mathcal{G}$.

Now we claim that for distinct sets of paths $S, S' \subseteq \mathcal{P}$, we have that $H^S_+ \neq H^{S'}_+$ (we say that $H^S_+ = H^{S'}_+$ if both graphs correspond to the same subgraph of $\hat{G}$, and we say that $H^S_+ \neq H^{S'}_+$ otherwise).  Since $S \neq S'$, we may assume without loss of generality that there exists a path $P \in S - S'$. Since $G^S_+$ is a Steiner shortcut graph of $G^S$ with diameter $\leq n^{\eps}/2$, and  path $P$ has length $|P| = n^{\eps}$,  and path $P$ is a unique path in $G^S$ by Property 4 of \Cref{lem:rp_lb}, we conclude that there must exist a path in graph $H^S_+ = G^S_+ - E^S$ between a pair of vertices $(s, t)$ in the transitive closure of path $P$. On the other hand, no edge in path $P$ is contained in graph $G^{S'}$, i.e.,  $E(P) \cap E^{S'} = \emptyset$, so by Property 4 of \Cref{lem:rp_lb}, there does not exist a path in graph $G^{S'}_+$ between a pair of vertices $(s, t)$ in the transitive closure of path $P$. We conclude that $G_+^S \neq G_+^{S'}$, as claimed. This implies that there are $2^{|\mathcal{P}|}$ distinct, vertex-labeled graphs in graph family $\mathcal{G}$. 

Finally, we observe that there are at most $(x+n)^{2x}$ distinct subgraphs of the universal graph $\hat{G}$ with at most $x+n$ vertices and at most $x$ edges, so $(x+n)^{2x} \geq 2^{|\mathcal{P}|}$. Taking the log of both sides, we get $x \log (x+n) \geq |\mathcal{P}|$, which implies that $x \geq |\mathcal{P}| / \log(|\mathcal{P}|) = \tilde{\Omega}(m^{1-\eps})$. Then by the definition of $x$, one of the Steiner shortcut graphs $G^S_+$, where $S \subseteq \mathcal{P}$, has size at least $x = \tilde{\Omega}(m^{1-\eps})$. This completes the proof. 
\end{proof}

\section{Reachability Shortcuts as OR-circuit Depth Reduction}
In this section we will show that the existence of Steiner shortcuts for reachability is equivalent to the existence of a depth reduction for a certain class of Boolean circuits. 

\begin{restatable}{theorem}{conj}
    \Cref{conj:Steiner} is equivalent to \Cref{conj:circuit}. \label{thm:equiv_conj}
\end{restatable}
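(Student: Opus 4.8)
The plan is to exploit the elementary dictionary between OR-circuits and directed reachability and push a graph through the assumed conjecture in whichever form is convenient. Concretely: in any OR-circuit, a gate outputs $1$ on a given assignment precisely when, in the underlying wiring DAG (edges pointing from each gate to every gate it feeds), that gate is reachable from some input node set to $1$. Hence an OR-circuit $C$ with wiring DAG $H$, input nodes $V_{\mathrm{in}}$ and output nodes $V_{\mathrm{out}}$ ``is'' the map that, given the set of active inputs, returns at each $o\in V_{\mathrm{out}}$ the OR over active inputs reaching $o$ in $H$; its depth is the longest $V_{\mathrm{in}}$-to-$V_{\mathrm{out}}$ path in $H$ and its size is $|E(H)|$. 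Both implications translate one object into the other, invoke the conjecture, and translate back. Throughout I would keep $n=|V|=O(m)$ (isolated vertices are handled trivially) and use $m\le n^2$, so that ``$\widetilde O(\cdot)$'' and ``$\polylog(\cdot)$'' measured in the wire/edge count agree with the vertex-count versions.

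For Conjecture~\ref{conj:Steiner}$\Rightarrow$Conjecture~\ref{conj:circuit}, I would take an OR-circuit $C$ with $m$ wires, let $G$ be its (acyclic) wiring DAG, and apply Conjecture~\ref{conj:Steiner} to get $G_+=(V\cup V_+,E\cup E_+)$ with $|E_+|=\widetilde O(m)$ (hence $|V_+|=\widetilde O(m)$ after discarding isolated Steiner vertices), reachability on $V\times V$ preserved, and $\dist_{G_+}\le h:=\polylog(m)$ on the transitive closure of $G$. Then I would build the \emph{layered} circuit $C_+$ with an OR gate $g_{v,i}$ for each $v\in V\cup V_+$ and $i\in\{1,\dots,h\}$, set $g_{v,i}=\big(\bigvee_{(u,v)\in E\cup E_+}g_{u,i-1}\big)\vee g_{v,i-1}$, with $g_{v,0}:=x_v$ for $v\in V_{\mathrm{in}}$ and $g_{v,0}$ omitted otherwise, and declare $\{g_{o,h}\}_{o\in V_{\mathrm{out}}}$ the outputs. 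A one-line induction on $i$ gives $g_{v,i}=1$ iff an active input reaches $v$ within $i$ hops in $G_+$; using the diameter bound to argue that reachability implies reachability within $h$ hops, and that reachability on $V\times V$ is preserved, one gets $g_{o,h}=\mathrm{val}_C(o)$ for each output $o$. This $C_+$ is acyclic by construction (wires go only from layer $i-1$ to layer $i$, independently of any cycles in $G_+$), has depth $O(h)=\polylog(m)$, and at most $h\cdot(|V\cup V_+|+|E\cup E_+|)=\widetilde O(m)$ wires.

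For Conjecture~\ref{conj:circuit}$\Rightarrow$Conjecture~\ref{conj:Steiner}, I would start from a directed graph $G=(V,E)$, condense it to its SCCs $S_1,\dots,S_k$ with DAG condensation $\widehat G$, and form the acyclic OR-circuit $C_G$ with inputs $\{x_v\}_{v\in V}$, outputs $\{y_v\}_{v\in V}$, an OR gate $G_j=\big(\bigvee_{w\in S_j}x_w\big)\vee\big(\bigvee_{(S_i,S_j)\in\widehat E}G_i\big)$ per SCC, and $y_v:=G_{j(v)}$; this has $O(n+|\widehat E|)=O(m)$ wires and computes $y_v=\bigvee_{u\leadsto v}x_u$. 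Applying Conjecture~\ref{conj:circuit} yields an equivalent OR-circuit $C_+$ with $\widetilde O(m)$ wires and depth $\polylog(m)$; after a routine normalization (adding $O(n)$ fresh gates/wires and one layer) I may assume each $x_v$ is a source and each $y_v$ a sink of its wiring DAG $H_+$. Activating each input alone shows $H_+$ has an $x_u$-to-$y_v$ path iff $u\leadsto v$ in $G$, and any such path has length $\le\mathrm{depth}(C_+)=\polylog(m)$. I would then define $G_+$ from $H_+$ by identifying $x_v$ with $y_v$ into a single vertex $v$ for each $v\in V$, relabeling the remaining gates as Steiner vertices, and adding back $E$. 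Because the $x_v$ are sources and the $y_v$ sinks, every $G_+$-path between two original vertices splits at the identified vertices into a chain of $G$-reachabilities, so reachability on $V\times V$ in $G_+$ is exactly that of $G$; the size and diameter bounds transfer directly, and adding $E$ only shrinks distances and adds no reachability.

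The main obstacle I anticipate is the \emph{asymmetry of cycles}: OR-circuits are acyclic but $G$ need not be, so the backward direction cannot simply ``make each vertex a gate'' — it must pass through the SCC condensation and then argue that a depth-reduced circuit for the condensation-equivalent function converts back into a shortcut of $G$ itself, which is exactly what the ``merge $x_v$ with $y_v$'' step accomplishes; its correctness (no spurious reachability between original vertices) is precisely why one first normalizes $C_+$ so that inputs are sources and outputs are sinks. The remaining items — the layered blow-up staying $\widetilde O(m)$ wires and $\polylog$ depth in the forward direction, the $O(m)$ wire count of $C_G$, and keeping $n=O(m)$ — are routine bookkeeping.
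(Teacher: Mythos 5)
Your proof is correct and follows essentially the same route as the paper: the paper factors the argument through an intermediate notion of circuit \emph{diameter} (Lemma~\ref{lem:or_circuit}) and then layers a diameter-bounded circuit to bound its depth inside the proof of Theorem~\ref{thm:equiv_conj}, whereas you inline the layering directly into the forward direction; likewise, the paper attaches fresh source/sink gates $s_i,t_i$ per vertex and later reconnects them to $v_i$ with edges $(v_i,s_i),(t_i,v_i)$, while you identify $x_v$ with $y_v$---both devices serve the same role of preventing spurious reachability, since those gates are sources/sinks of the reduced circuit's wiring DAG.
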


\paragraph{Preliminaries.} A Boolean circuit is a directed acyclic graph with a set of source nodes $S$, called input gates, and a set of sink nodes $T$, called output gates. The input gates are labeled with binary variables $x_1, \dots, x_n$, and the output gates are labeled with binary variables $y_1, \dots, y_n$. 

Each non-source/sink node in the circuit computes a Boolean function (e.g., AND, OR, NOT). We will restrict our attention to circuits with only OR gates, which we refer to as OR-circuits. Our OR gates will have unbounded fan-in. We measure the size of a Boolean circuit by the number of edges it has. We define the depth of a circuit as the length of the longest path in the directed acyclic graph. We define the \textit{diameter} of a circuit as the largest distance between an input gate and an output gate in the directed acyclic graph.

The connection between reachability and OR-circuits can be summarized by the following observation.

\begin{observation}
Let $C$  be an OR-circuit with input gates $S \subseteq V(C)$ and output gates $T \subseteq V(C)$.  
Let $t \in T$ be an output gate labeled with binary variable $y \in \{0, 1\}$. 
Then output gate $t$ returns the value $y = 1$ if and only if there exists an input gate $s \in S$ such that $s$ can reach $t$ in $C$, and input gate $s$ is labeled with a binary variable $x \in \{0, 1\}$ that has the value $x = 1$.
\label{obs:or_reach}
\end{observation}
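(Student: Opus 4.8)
The statement is the special case (take $g=t$) of the following strengthened claim, which is what I would actually prove: for \emph{every} gate $g$ of $C$, the value computed at $g$ equals $1$ if and only if there is an input gate $s$ that reaches $g$ in $C$ and whose label is $1$. I would prove this by induction on the gates of $C$ in topological order (which exists because $C$ is acyclic), so that when a gate $g$ is processed all of its in-neighbors have already been handled.

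\emph{Base case.} If $g$ is a source, then by definition it is an input gate carrying an externally supplied label, so its computed value is exactly that label. On the other hand, a source has no incoming edges, so the only gate that reaches $g$ is $g$ itself; hence ``some input gate with label $1$ reaches $g$'' is equivalent to ``$g$'s own label is $1$'', which matches the computed value.

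\emph{Inductive step.} If $g$ is not a source, it is an OR gate with in-neighbors $g_1,\dots,g_d$ for some $d\ge 1$, and its computed value is $\bigvee_{i=1}^d \mathrm{val}(g_i)$. By the induction hypothesis, $\mathrm{val}(g_i)=1$ iff some input gate with label $1$ reaches $g_i$; hence $\mathrm{val}(g)=1$ iff some input gate with label $1$ reaches some $g_i$. Finally, since $g$ is not itself an input gate, any gate reaching $g$ does so via a path of length $\ge 1$ and therefore through one of the $g_i$, and conversely any gate reaching some $g_i$ reaches $g$. So ``some input gate with label $1$ reaches $g$'' is equivalent to ``some input gate with label $1$ reaches some $g_i$'', which closes the induction. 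Specializing to $g=t\in T$ yields the observation.

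\textbf{Main obstacle.} There is essentially no obstacle here: this is a routine structural induction over a monotone (OR-only) circuit. The only two points that need a little care are (i) strengthening the statement from output gates to \emph{all} gates, which is what makes the induction go through, and (ii) the reachability bookkeeping in the inductive step, where one uses that a nontrivial path into $g$ passes through an in-neighbor of $g$ while trivial self-reachability is relevant only at sources --- which are exactly the input gates. This observation is then the workhorse behind \Cref{thm:equiv_conj}, as it is precisely what allows one to translate, in both directions, between the Boolean function an OR-circuit computes and the input-to-output reachability relation of its underlying DAG.
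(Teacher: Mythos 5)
The paper states this result as an \emph{observation} and gives no proof at all, so there is no argument of the paper's to compare against. Your structural induction in topological order, strengthened to all gates of the circuit, is the standard and correct way to establish it: it uses the fact that the sources of the DAG are exactly the input gates (as the paper's definition of OR-circuit stipulates), handles the trivial self-reachability of sources in the base case, and in the inductive step uses that any nontrivial path into an internal OR gate must pass through one of its in-neighbors. Nothing is missing; this is precisely the justification the authors leave implicit.
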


We can now prove an equivalence between diameter-reducing Steiner shortcuts and the existence of diameter-reductions for OR-circuits.

\begin{lemma}
Let $s(m)$ and $d(m)$ be functions such that $s(m), d(m) = O(\text{\normalfont poly}(m))$. Then the following two statements are equivalent:
  \begin{enumerate}[label=(\arabic*),ref=(\arabic*)]
   \item Every directed graph on $n$ nodes and $m$ edges admits a Steiner reachability shortcut with $s(m) + O(m+n)$ additional edges and diameter $d(m)+O(1)$.
   \item Let  $f:\{0, 1\}^n \mapsto \{0, 1\}^n$ be a multi-output Boolean function  that can be computed by a Boolean \text{\normalfont OR}-circuit with $m$ edges. Then $f$ can be computed by an  \text{\normalfont OR}-circuit with $s(m) + O(m+n)$  edges and diameter $d(m)+O(1)$. 
  \end{enumerate}
    \label{lem:or_circuit}
\end{lemma}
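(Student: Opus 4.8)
The plan is to prove the two implications separately, in each case giving an explicit, almost-syntactic translation between directed graphs and OR-circuits, and then verifying that the translation preserves (up to additive $O(m+n)$ edges and $O(1)$ diameter) the three quantities of interest: edge count, diameter, and the reachability/function behavior. The key conceptual point, already recorded in \Cref{obs:or_reach}, is that an OR-circuit $C$ computes, at each output gate $t$, the OR over all input gates $s$ that reach $t$ of the variable at $s$; so "the function computed by $C$" is nothing but a renaming of "the reachability relation between inputs and outputs of the DAG underlying $C$." This lets me move freely between the two worlds.

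\textbf{From (1) to (2).} Given an OR-circuit $C$ with $m$ edges computing $f:\{0,1\}^n\to\{0,1\}^n$, I first view the underlying DAG as a directed graph $G$ on the gate set. I would like to apply statement (1) to $G$, but a subtlety is that (1) produces a graph $G_+$ that merely preserves the transitive closure on $V\times V$; I only need reachability between the input gates $S$ and output gates $T$, so this is more than enough. Apply (1) to $G$ to obtain a Steiner shortcut $G_+=(V\cup V_+,E\cup E_+)$ with $|E_+|\le s(m)+O(m+n)$ additional edges and diameter $d(m)+O(1)$. Now reinterpret $G_+$ as an OR-circuit $C_+$: keep the original input gates $S$ (labeled $x_1,\dots,x_n$) as sources and output gates $T$ (labeled $y_1,\dots,y_n$) as sinks, and make every internal vertex — including every Steiner vertex of $V_+$ — an unbounded fan-in OR gate. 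Two small technical fixes are needed. First, a Steiner vertex could be a source or sink of $G_+$ with no analogue in $C$; but since $G_+$ preserves transitive closure on $V$, any such vertex is irrelevant to $S$-$T$ reachability and can be deleted, only decreasing size and diameter. Second, the diameter of a circuit is defined as the largest $S$-to-$T$ distance, which is at most the graph diameter of $G_+$, so the bound $d(m)+O(1)$ carries over. By \Cref{obs:or_reach} applied to both $C$ and $C_+$, and the fact that $G$ and $G_+$ have the same $S$-$T$ reachability relation, $C_+$ computes the same $f$. Edge count is $|E|+|E_+|=m+s(m)+O(m+n)=s(m)+O(m+n)$, as required.

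\textbf{From (2) to (1).} Given a directed graph $G=(V,E)$ on $n$ vertices and $m$ edges, build an OR-circuit $C$ as follows: introduce a fresh input gate $x_v$ and a fresh output gate $y_v$ for each $v\in V$, keep an internal OR gate $g_v$ for each $v\in V$, add the edge $x_v\to g_v$ and $g_v\to y_v$, and for each edge $(u,v)\in E$ add the wire $g_u\to g_v$. This circuit has $m+O(n)$ edges, and by \Cref{obs:or_reach} the output $y_v$ equals the OR of $x_u$ over all $u$ with $u$ reaching $v$ in $G$ (including $u=v$); hence the function $f_G$ computed by $C$ encodes exactly the transitive closure of $G$ (with the diagonal). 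Apply statement (2) to get an OR-circuit $C_+$ computing $f_G$ with $s(m)+O(m+n)$ edges and diameter $d(m)+O(1)$. Now read $C_+$ back as a directed graph: contract each input gate $x_v$ and output gate $y_v$ back onto the corresponding original vertex $v$ (identifying $x_v$, $g_v$ — if it still exists — and $y_v$), keep all internal OR gates of $C_+$ that are not of this form as Steiner vertices $V_+$, and keep all wires as directed edges. The fact that $C_+$ computes $f_G$ means precisely that, after this identification, for all $(u,v)\in V\times V$, $u$ reaches $v$ in the resulting graph iff $u$ reaches $v$ in $G$ — i.e., it is a valid Steiner reachability shortcut of $G$. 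The added-edge count is $s(m)+O(m+n)$, and since the circuit diameter bounds the $S$-$T$ distance, every original pair $(u,v)$ in the transitive closure is connected by a path of length at most $d(m)+O(1)$ after accounting for the two contracted edges. A minor check is that contracting $x_v,g_v,y_v$ does not create problems: this just merges a source, a middle vertex, and a sink of $C_+$ into one vertex, which can only shorten distances and cannot change $S$-$T$ reachability.

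\textbf{Main obstacle.} The conceptual content is routine once \Cref{obs:or_reach} is in hand; the delicate part is bookkeeping the boundary conditions so that the additive $O(m+n)$ and $O(1)$ slacks genuinely suffice. Concretely, I expect the fiddliest step to be handling "extraneous" Steiner vertices/gates that arise in one model but have no counterpart in the other (Steiner sources/sinks when going circuit$\to$graph, and the need to add private input/output gates $x_v,y_v$ when going graph$\to$circuit so that every graph vertex is a legitimate circuit terminal). Each such fix must be argued to (i) not increase the size by more than $O(m+n)$, (ii) not increase the diameter by more than $O(1)$, and (iii) not alter the relevant reachability/function behavior. None of these is hard, but they are exactly the places where a sloppy argument would break, so the write-up should treat them carefully rather than waving them away.
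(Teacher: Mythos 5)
Your proposal follows the same high-level strategy as the paper's proof: translate between graphs and OR-circuits by treating gates as vertices and wires as edges, use \Cref{obs:or_reach} as the dictionary between reachability and computed function, and account for the additive $O(m+n)$ size and $O(1)$ diameter overhead from boundary handling. The $(2)\Rightarrow(1)$ direction is a mild variant of the paper's: you contract the per-vertex input/output gates onto the original vertex, whereas the paper keeps a disjoint copy of $V(G)$ and adds bridge edges $(v_i,s_i)$ and $(t_i,v_i)$; both work, and your contraction is in fact slightly cheaper on edge count.

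However, there is a genuine gap: you never ensure that the object you want to interpret as a circuit is actually acyclic, and a Boolean circuit must be a DAG. This bites in both directions. In $(1)\Rightarrow(2)$, the Steiner shortcut $G_+$ is only required to preserve reachability on $V\times V$; nothing prevents a directed cycle among Steiner vertices in $V_+$, and such a cycle can lie on an $S$--$T$ walk, so it cannot simply be pruned away. Your fix — iteratively deleting Steiner sources and sinks — never removes a cycle, because a cycle contains no sources or sinks. In $(2)\Rightarrow(1)$, the input $G$ is an arbitrary directed graph; if it has cycles, the circuit $C$ you build on the gate set $\{g_v\}$ with wires $g_u\to g_v$ for each $(u,v)\in E$ is not a valid OR-circuit at all. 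The paper handles both via strongly-connected-component contraction (noting that no $V$-vertex can be inside a nontrivial SCC of $G_+$ because $G$ is already a DAG, so $S$ and $T$ are untouched; and in the other direction giving a short reduction from cyclic $G$ to its condensation). You should add this step explicitly in both directions. Separately, your assertion that contracting $x_v$ and $y_v$ onto $v$ ``cannot change $S$-$T$ reachability'' is not automatic from contraction being a graph homomorphism — contraction can only \emph{add} reachability, so you need an argument that it doesn't add too much. The needed argument is the paper's path-decomposition: break any $u\leadsto v$ walk in the contracted graph at its visits to $V$-vertices, observe that each segment lifts to an $x_{w_{i-1}}\leadsto y_{w_i}$ path in $C'$ because $x_w$ has no incoming and $y_w$ has no outgoing wires, invoke that $C'$ computes $f_G$ to conclude $w_{i-1}$ reaches $w_i$ in $G$, and finish by transitivity. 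You correctly flagged boundary bookkeeping as the delicate part, but the acyclicity issue is the one place where the proposal as written is not merely under-detailed but would actually break.
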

\begin{proof}
We will prove each direction separately.

\paragraph{$(1) \implies (2)$.} Let  $f:\{0, 1\}^n \mapsto \{0, 1\}^n$ be a multi-output Boolean function  that can be computed by a Boolean \text{\normalfont OR}-circuit with $m$ edges. Let $G$ be the DAG of the Boolean circuit computing $f$, where $|E(G)| = m$. Graph $G$ has source nodes $S \subseteq V(G)$ and sink nodes $T \subseteq V(G)$, corresponding to input gates and output gates respectively, such that $|S| = |T| = n$.  

Let $H$ be a Steiner reachability shortcut of graph $G$, with $V(G) \subseteq V(H)$ and $E(G) \subseteq E(H)$. We may assume without loss of generality that $H$ is a DAG, by contracting all strongly connected components (SCCs) in $H$. This operation will not contract any nodes in $S$ or $T$. Additionally, we can also assume wlog that every source node in $H$ is contained in $S$, and every sink node in $H$ is contained in $T$. Consequently, we can interpret $H$ as an OR-circuit with input gates $S$ and output gates $T$. We claim that circuit $H$ computes function $f$ and  satisfies Statement (2).

By Statement (1), $|E(H)| = s(m) + O(m+ n)$ and $H$ has depth $d(m) + O(1)$, as desired. Since $H$ is a Steiner reachability shortcut of $G$, for all $(s, t) \in S \times T$, node $s$ can reach node $t$ in $G$ if and only if node $s$ can reach node $t$ in $H$. Then by Observation \ref{obs:or_reach}, circuit $H$ must compute the same Boolean function as circuit $G$. Then circuit $H$ computes function $f$, as claimed.  

\paragraph{$(2) \implies (1)$.} Let $G$ be an $n$-node, $m$-edge directed graph. We may assume wlog that $m \geq n-1$. Likewise, we may assume wlog that $G$ is a directed acyclic graph, via the following procedure. Contract every SCC in $G$ to obtain a DAG $G'$. Then a simple argument will show that $G$ has a Steiner reachability shortcut with $s(m) + O(m + n)$ additional edges and depth $d(m) + O(1)$ if and only if $G'$ has one as well. 

We will convert DAG $G$ into a circuit $C$ as follows.
Let $v_1, \dots, v_n \in V(G)$ be the nodes of graph $G$. 
We will define sets of new nodes $S = \{s_1, \dots, s_n\}$ and $T = \{t_1, \dots, t_n\}$. Nodes $S$ and $T$ will become the input gates and output gates, respectively, of $C$. Let the vertex set of $C$ be $V(C) = V(G) \cup S \cup T$. We define the edge set $E(C)$ of $C$ to be
$$
E(C) =  E(G) \cup \bigcup_{i \in [1, n]} \{(s_i, v_i), (v_i, t_i) \}.
$$
Notice that $C$ is a directed acyclic graph with source nodes $S$ and sink nodes $T$. Consequently, we can interpret $C$ as an OR-circuit with input gates $S$ and output gates $T$. We label node $s_i$ with binary variable $x_i$, and we label node $t_i$ with binary variable $y_i$, for $i \in [1, n]$. 
Let $f:\{0, 1\}^n \mapsto \{0, 1\}^n$ denote the Boolean function that circuit $C$ computes. 

Let $m' = |E(C)| = m + 2n$. 
By Statement (2), there exists an OR-circuit $C'$ with sources $S$ and sinks $T$, such that
\begin{itemize}
    \item circuit $C'$ computes function $f$,
    \item circuit $C'$ has $s(m') + O(m' + n) = s(m) + O(m + n)$ edges, and 
    \item circuit $C'$ has depth $d(m') + O(1) = d(m) + O(1)$. 
\end{itemize}  
We convert circuit $C'$ into a Steiner reachability shortcut $H$ of graph $G$ as follows. 
Let $V(H)$ be the disjoint union of $V(C')$ and the vertex set $V(G)$ of $G$. We define the edge set $E(H)$ of $H$ to be
$$
E(H) = E(C') \bigcup_{i \in [1, n]} \{ (v_i, s_i), (t_i, v_i) \}.
$$
We claim that graph $H$ is a Steiner reachability shortcut of $G$ and satisfies Statement $(1)$.

We will need the following observation about circuit $C'$. Fix an $i \in [1, n]$, and let $x_i = 1$ and $x_j = 0$ for $j \neq i$. Let $f(x_1, \dots, x_n) = (y_1, \dots, y_n)$. Then by Observation \ref{obs:or_reach}, for any $j \in [1, n]$, we have that $y_j = 1$ if and only if $v_i$ can reach $v_j$ in $G$.  Since circuit $C'$ computes function $f$, we conclude that for all $(s_i, t_j) \in S \times T$, we have that $s_i$ can reach $t_j$ in $C'$ if and only if $v_i$ can reach $v_j$ in graph $G$. 

We will first show that graph $H$ preserves the reachability relation of $G$ between vertices in $V(G) \times V(G)$.  Suppose that node $v_i$ can reach node $v_j$ in graph $G$. Then since we  added edges $(v_i, s_i)$ and $(t_j, v_j)$ to graph $H$, the above observation immediately implies that  $v_i$ can reach $v_j$ in graph $H$. 

Now suppose that $v_i$ can reach $v_j$ in graph $H$; we claim that $v_i$ can reach $v_j$ in graph $G$ as well. Let $P$ be a $v_i \leadsto v_j$ path in graph $H$. We can write path $P$ as a concatenation of paths
$$
P = P_1 \circ \dots \circ P_k, 
$$
where, for each $\ell \in [1, k]$,  subpath $P_{\ell}$ is internally vertex-disjoint from $V(G)$. 
For each $\ell \in [1, k]$, let subpath $P_{\ell}$ be an $x_{\ell} \leadsto x_{\ell+1}$ path, where $x_{\ell}, x_{\ell+1} \in V(G)$,  $x_1 = v_i$, and $x_{k+1} = v_j$.  
Notice that since path $P_{\ell}$ is internally vertex-disjoint from $V(G)$, path $P_{\ell}$ is a path in circuit $C'$ as well, for all $\ell \in [1, k]$. Then by our earlier observation about circuit $C'$, we must have that $x_{\ell}$ can reach $x_{\ell+1}$ in $G$. We have shown that $x_{\ell}$ can reach $x_{\ell+1}$ in $G$ for all $\ell \in [1, k]$, so by the transitivity of the reachability relation, we must have that $x_1 = v_i$ can reach $x_{k+1} = v_j$ in $G$. We have shown that $H$ preserves the reachability relation of $G$ between vertices in $V(G) \times V(G)$. 

Finally, we quickly observe that $|E(H)| = |E(C')| + 2n = s(m) + O(m + n)$, and $H$ has diameter $d(m) + O(1)$, as claimed in Statement (2).
\end{proof}

With \Cref{lem:or_circuit} in hand, we can now prove that our conjecture for  low-depth Steiner shortcuts  (\Cref{conj:Steiner}) is equivalent to the conjecture for low-depth OR-circuits (\Cref{conj:circuit}).

\conj*
\begin{proof}
    First, notice that if a Boolean OR-circuit has depth $d$, then it also has diameter $d$. Therefore, by \Cref{lem:or_circuit}, if \Cref{conj:circuit} is true, then every directed graph on $n$ nodes and $m$ edges admits a Steiner reachability shortcut with $\widetilde{O}(m)$ additional edges and diameter $\text{polylog}(n)$, so \Cref{conj:Steiner} is true as well.

    What remains is to prove that \Cref{conj:Steiner} implies \Cref{conj:circuit}. If \Cref{conj:Steiner} is true, then by \Cref{lem:or_circuit} every Boolean function $f:\{0, 1\}^n \mapsto \{0, 1\}^n$ that can be computed by a Boolean OR-circuit $C$ with size $s_1$ can also be computed by an OR-circuit with size  $s_2 = \widetilde{O}(s_1)$  and diameter $d = O(\text{polylog}(n))$. We will now convert circuit $C$ into a circuit $C^*$ with \textit{depth} $O(d)$ by a simple layering operation. Let $S$ denote the set of input gates of $C$, let $T$ denote the output gates of $C$, and let $V$ denote the set of OR gates in $C$. We will make $d$ copies of set $V$, which we denote $V_1, \dots, V_d$. Sets $V_1, \dots, V_d$ will be the new OR gates in our new OR-circuit $C^*$. For each $v \in V$, we let $v_i \in V_i$ denote the copy of $v$ in $V_i$.    We define the edge set of circuit $C^*$ as follows:
    \begin{itemize}
        \item For each edge $(x, y) \in S \times V$ in circuit $C$, add edge $(x, y_1) \in S \times V_1$ to $C^*$.
        \item For each edge $(x, y) \in V \times T$ in circuit $C$, add edge $(x_d, y) \in V_d \times T$ to $C^*$.
        \item For each edge $(x, y) \in S \times T$ in circuit $C$, add edge $(x, y) \in S \times T$ to $C^*$.
        \item For each edge $(x, y) \in V \times V$, add  edge $(x_i, y_{i+1}) \in V_i \times V_{i+1}$ to $C^*$ for every $i \in [1, d-1]$. 
        \item For each $v \in V$ and each $i \in [1, d-1]$, add the edge $(v_i, v_{i+1}) \in V_i \times V_{i+1}$ to $C^*$. 
    \end{itemize}
This completes the definition of the new circuit $C^*$. 

We now quickly verify that circuit $C^*$ computes the same function as $C$. Notice that an input gate $s \in S$ can reach an output gate $t \in T$ in circuit $C$ if and only if there is an $(s, t)$-path $P = (s, v^1, v^2, \dots, v^k, t)$ in $C$ of length at most $d$. If $P$ exists in $C$, then circuit $C^*$ must contain the $(s, t)$-path $P^* = (s, v^1_1, v^2_2, \dots, v^k_k, v^k_{k+1}, \dots, v^k_d, t)$. Likewise, if there exists an $(s, t)$-path $P^*$ in $C^*$, then there exists an $(s, t)$-path $P$ in $C$.
Then by \Cref{obs:or_reach}, we conclude that circuit $C$ and circuit $C^*$ compute the same binary function. 

Finally, observe that circuit $C^*$ has depth $O(d) = O(\text{polylog}(n))$ since every path in $C^*$ can pass through at most one vertex in each layer $V_i$ for $i \in [1, d]$. Likewise, circuit $C^*$ has size at most $s_2 \le s_1d = \widetilde{O}(s_1)$, since every edge in circuit $C$ is copied at most $d$ times in $C^*$. 
\end{proof}

\section{Steiner Shortcut for the Lower Bound of \cite{WilliamsXX24}}
\label{sec:app:breaking}

\subsection{Construction of \cite{WilliamsXX24} Lower Bound} 

Before presenting our Steiner shortcut for the lower bound construction of \cite{WilliamsXX24}, we first give a sketch of its construction and relevant properties.

The lower bound of \cite{WilliamsXX24} consists of a graph $G$ and an associated collection of critical paths $\mathcal{P}$.  Graph $G$ is parameterized by an integer $r$. Integer $r$ will be chosen to be a power of two. 
\begin{itemize} 
    \item \textbf{Vertices $V(G)$.} Graph $G$ has $2r+1$ layers, $L_0, \dots, L_{2r}$. The vertices of each layer can be written as a collection of points on a three-dimensional grid. Formally, for $i \in [0, 2r]$,
    $$
    L_i = \{i\} \times [1, 4r] \times [1, 4r] \times [1, 4r^2].
    $$
    \item \textbf{Edges $E(G)$.} Before constructing the edge set of $G$, we define a permutation $\sigma$ of $[0, r-1]$ as follows. For any  $i \in [0, r-1]$, we define the $i$th element $\sigma_i$ of $\sigma$ as the integer whose corresponding $\lg(r)$-bit binary representation can be reversed to obtain the $\lg(r)$-bit binary representation of $i$. Let $v_0, \dots, v_{2r-1}$ be a sequence of vectors where
    $$
    v_i = (1, 0, \sigma_{i/2 })
    $$
    for every even $i \in [0, 2r-1]$, and $$v_i = (0, 1, \sigma_{(i-1)/2})$$ for every odd $i \in [0, 2r-1]$.

    For every $i \in [0, 2r-1]$ and every point $x \in [4r] \times [4r] \times [4r^2]$, we add the edge $(i, x) \rightarrow (i+1, x)$ to edge set $E(G)$, and we add the edge $(i, x) \rightarrow (i+1, x+v_i)$ to $E(G)$ if $(i+1, x+v_i) \in L_{i+1}$ is a valid point in layer $L_{i+1}$.\footnote{The edge set $E(G)$ is constructed using permutation $\sigma$ to ensure Property 3 of Lemma \ref{lem:properties_williams}.}

    \item \textbf{Critical Paths $\mathcal{P}$.} We define a collection $\mathcal{P}$ of paths in graph $G$. 
    Towards this goal, we first define a set of source nodes $S \subseteq L_0$. Specifically, we choose $S$ to be
    $$
    S = \{0\} \times [2r] \times [2r] \times [2r^2] \subseteq L_0.
    $$

    For each node  $s \in S$ and each choice of tuple $(d_1, d_2) \in [r] \times [r]$, we will define a critical path $P^s_{d_1, d_2} \in \mathcal{P}$ that starts at node $s \in L_0$ and ends at a node in $L_{2r}$. Before we can define these critical paths, we will need to define a function $f_{d}:[0, r-1] \mapsto [0, r]$ for every $d \in [r]$. We define $f_{d}$ as follows.
\[  f_{d}(i) =  \begin{cases} 
      \sigma_{i} & \sigma_{i} < d  \\
      0 & \text{else}
   \end{cases}
\]

We are now ready to define our critical paths $\mathcal{P}$. Fix a node $s \in S$ and a tuple $(d_1, d_2) \in [r] \times [r]$. We iteratively define a path $P^s_{d_1, d_2}$ starting at $s$ and ending at a node in $L_{2r}$. We now specify the edge that path $P^s_{d_1, d_2}$ takes between layers $L_i$ and $L_{i+1}$, where $i \in [0, 2r-1]$. 
\begin{itemize}
    \item If $i \equiv 0 \mod 2$, then path $P^s_{d_1, d_2}$ takes an edge of the form 
    $$
    (i, x, y, z) \rightarrow (i+1, x+1, y, z + f_{d_1}(i/2)).
    $$
    \item If $i \equiv 1 \mod 2$, then path $P^s_{d_1, d_2}$ takes an edge of the form
    $$
    (i, x, y, z) \rightarrow (i+1, x, y+1, z+f_{d_2}((i-1)/2).
    $$
\end{itemize}

    We have established that path $P^s_{d_1, d_2}$ starts at node $s \in L_0$, and we have established which type of edge $P^s_{d_1, d_2}$ takes between layers $L_i$ and $L_{i+1}$ where $i \in [0, 2r-1]$, so we have uniquely specified path $P$. We define $\mathcal{P}$ to be the collection of paths $\{P^s_{d_1, d_2}\}_{s, d_1, d_2}$  defined over all choices of $s \in S$ and $d_1, d_2 \in [1, r]$.%
\end{itemize}
We state without proof several properties of this graph construction.
\begin{lemma}[cf. Proposition 5.1 of \cite{WilliamsXX24}] Graph $G$ and paths $\mathcal{P}$ have the following properties:
\begin{enumerate}
    \item $|V(G)|=\Theta(r^5)$, $|E(G)| = \Theta(r^5)$, and $|\mathcal{P}|=\Theta(r^6)$.
    \item (Uniqueness.) Every critical path in $\mathcal{P}$ is the unique path between its endpoints in $G$.
    \item (Low Overlap.) Any path $\sigma$ in $G$ of length at most $s \leq 2r$ is contained in at most $O((r/s)^2)$ distinct critical paths in $\mathcal{P}$.
\end{enumerate}
\label{lem:properties_williams}
\end{lemma}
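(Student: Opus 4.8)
The plan is to establish the three items in turn, each time unwinding the layered structure of $G$ together with the arithmetic of the step vectors $v_i$. \emph{Item 1} is a direct count: each of the $2r+1$ layers has $4r\cdot 4r\cdot 4r^2=\Theta(r^4)$ vertices, so $|V(G)|=\Theta(r^5)$; every non‑final vertex has out‑degree $1$ or $2$ (the edge $(i,x)\to(i+1,x)$ always exists, since the coordinate box is the same in every layer, and there is at most one further edge $(i,x)\to(i+1,x+v_i)$), so $|E(G)|=\Theta(r\cdot r^4)=\Theta(r^5)$. For $|\mathcal P|$ I would check that $(s,d_1,d_2)\mapsto P^s_{d_1,d_2}$ from $S\times[r]\times[r]$ is well defined — along $P^s_{d_1,d_2}$ the first two coordinates grow by at most $r$ from a starting value $\le 2r$, and the third grows by $\binom{d_1}{2}+\binom{d_2}{2}\le r^2$ from a starting value $\le 2r^2$, so everything stays in range — and injective, since the starting vertex is $s$ and $d_1,d_2$ can be read off from the increments of the third coordinate along the path (using that $\sigma$ is a permutation of $[0,r-1]$ and that $f_d$ is supported exactly on $\{j:\sigma_j<d\}$). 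Since $|S|=2r\cdot 2r\cdot 2r^2=\Theta(r^4)$, this gives $|\mathcal P|=\Theta(r^6)$.

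For \emph{item 2}, note $G$ is a layered DAG with all edges going $L_i\to L_{i+1}$, so any path between $s\in L_0$ and $t\in L_{2r}$ uses exactly one edge per slab, each of which is either the identity edge or the $+v_i$ edge; such a path is thus encoded by a bit vector $b\in\{0,1\}^{2r}$ and has endpoint $s+\sum_i b_iv_i$. Writing $A=\{j:b_{2j}=1\}$, $B=\{j:b_{2j+1}=1\}$ and using $v_{2j}=(1,0,\sigma_j)$, $v_{2j+1}=(0,1,\sigma_j)$, one gets $\sum_i b_iv_i=\bigl(|A|,\,|B|,\,\sum_{j\in A}\sigma_j+\sum_{j\in B}\sigma_j\bigr)$. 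The critical path $P^s_{d_1,d_2}$ corresponds to $A=\{j:\sigma_j<d_1\}$ and $B=\{j:\sigma_j<d_2\}$, so $|A|=d_1$, $|B|=d_2$, and (since $\sigma$ is a bijection onto $[0,r-1]$) $\sum_{j\in A}\sigma_j=0+1+\cdots+(d_1-1)=\binom{d_1}{2}$ and $\sum_{j\in B}\sigma_j=\binom{d_2}{2}$. Any other $s$–$t$ path has a bit vector $b'$ with the same endpoint, hence $|A'|=d_1$, $|B'|=d_2$, and $\sum_{j\in A'}\sigma_j+\sum_{j\in B'}\sigma_j=\binom{d_1}{2}+\binom{d_2}{2}$. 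But over $d_1$‑subsets $A'\subseteq[0,r-1]$ the sum $\sum_{j\in A'}\sigma_j$ is minimized (at value $\binom{d_1}{2}$) uniquely by $\{j:\sigma_j<d_1\}$, and likewise for $B'$; so both sums must be exactly minimal, forcing $A'=A$ and $B'=B$, i.e.\ $b'=b$.

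For \emph{item 3}, let $\pi$ be a path of length $s$ starting at $u\in L_a$, encoded by bits $b_a,\dots,b_{a+s-1}$. A critical path $P^{s_0}_{d_1,d_2}$ contains $\pi$ only if it passes through $u$ and its own bits, $b_{2j}=\mathds{1}[\sigma_j<d_1]$ and $b_{2j+1}=\mathds{1}[\sigma_j<d_2]$, agree with $\pi$ on $[a,a+s-1]$. On the even slabs this forces $d_1>\sigma_{i/2}$ whenever $b_i=1$ and $d_1\le\sigma_{i/2}$ whenever $b_i=0$. These even slabs correspond to $\Theta(s)$ consecutive indices $j$; since $r$ is a power of $2$, this window contains an aligned dyadic block $B$ of size $2^{t'}=\Theta(s)$, and the bit‑reversal permutation maps $B$ to an arithmetic progression $\{\sigma_j:j\in B\}$ of common difference $r/2^{t'}=\Theta(r/s)$ covering $[0,r)$ up to additive error $<r/2^{t'}$. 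For $\pi$ to be consistent with any $d_1$, its bits on this progression must form a threshold pattern, and then $d_1$ is confined to the half‑open interval between two consecutive progression values (or to a boundary interval of the same length $O(r/s)$); so there are $O(r/s)$ admissible $d_1$, and symmetrically $O(r/s)$ admissible $d_2$. Finally, a fixed pair $(d_1,d_2)$ and the fixed vertex $u$ determine at most one critical path (walk backward from $u$: in each slab the incoming edge is forced by the slab's parity and by $(d_1,d_2)$, reaching a unique $s_0\in L_0$). Hence at most $O((r/s)^2)$ critical paths contain $\pi$; for $s=O(1)$ this reads $O(r^2)$, which also follows directly since $(d_1,d_2)$ and $u$ already pin the critical path.

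The main obstacle is item 3, and in particular the role of the bit‑reversal permutation. For a generic permutation $\sigma$ the values $\{\sigma_j\}$ over a window of $\Theta(s)$ consecutive indices could be tightly clustered; then a subpath $\pi$ could be consistent with every $d_1$ in an interval of length $\Theta(r)$ (e.g.\ if all of the relevant bits are $1$ and $\max_i\sigma_{i/2}$ is small), and the bound would collapse. The content of item 3 is exactly the spreading property — every dyadic window of consecutive indices is mapped to an evenly spread arithmetic progression across $[0,r)$ — which pins $d_1$ to an interval of length $O(r/s)$. Items 1 and 2, by contrast, use only that $\sigma$ is a bijection and are routine once the correspondence $b\mapsto s+\sum_i b_iv_i$ and the minimality of $\binom{d}{2}$ are in hand.
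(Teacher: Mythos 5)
The paper states this lemma explicitly without proof (it is a restatement of \cite[Proposition 5.1]{WilliamsXX24}), so there is no in-paper argument to compare against; your proof is correct and uses what is evidently the intended approach of the cited work: encode $L_0$-to-$L_{2r}$ paths as bit vectors so that the endpoint difference is $(|A|,|B|,\sum_{A}\sigma+\sum_B\sigma)$, derive uniqueness from the fact that $\sum_{j\in A'}\sigma_j\ge\binom{d_1}{2}$ over all $d_1$-subsets with equality forced for both $A'$ and $B'$ since each is individually bounded below, and derive low overlap from the bit-reversal permutation's property that any aligned dyadic block of size $2^{t'}$ maps to an arithmetic progression of step $r/2^{t'}$ spread across $[0,r)$. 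One remark worth flagging: the paper's own sketch of the construction is internally inconsistent — its edge set has identity edges $(i,x)\to(i+1,x)$ and shift edges $(i,x)\to(i+1,x+v_i)$, but its description of the critical paths increments the first coordinate on \emph{every} even step, which would use edges not present in $E(G)$ and, more importantly, would destroy the information carried by the first two coordinates that your uniqueness argument relies on. Your (implicit) reading — the critical path takes the $+v_{2j}$ edge exactly when $\sigma_j<d_1$ and the identity edge otherwise, so the first two coordinates accumulate to $d_1$ and $d_2$ — is the one under which the lemma is actually true, and it agrees with \cite{WilliamsXX24}. So you not only reconstructed the missing proof but also silently corrected a typo in the paper's exposition.
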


This graph construction of \cite{WilliamsXX24} implies a polynomial lower bound on the diameter of $O(m)$-size shortcuts. We summarize this property of $G, \mathcal{P}$ in the following theorem. 

\begin{theorem}[cf. Theorem 1.3 of \cite{WilliamsXX24}]
\label{thm:w_lb_thm} Fix a constant $c>0$. 
Let $H \subseteq TC(G)$ be a shortcut set of $G$ of size $|H| \leq c \cdot |E(G)|$. If construction parameter $r$ of $G, \mathcal{P}$ is chosen to be sufficiently large, then there must exist an $s \leadsto t$ path $P \in \mathcal{P}$ such that $\text{\normalfont dist}_{G \cup H}(s, t) = \Omega(r) =  \Omega(|V(G)|^{1/5})$. 
Consequently, for any $O(m)$-size shortcut set $H \subseteq TC(G)$, graph $G\cup H$ has diameter $\Omega(n^{1/5})$. 
\end{theorem}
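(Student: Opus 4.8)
The plan is to argue by contradiction. Fix a shortcut set $H \subseteq TC(G)$ with $|H| \le c\,|E(G)| = \Theta(c\,r^5)$, and suppose that $\dist_{G\cup H}(s,t) < h$ for \emph{every} critical pair $(s,t)\in\mathcal{P}$, where $h = \delta r$ and $\delta$ is a small constant (depending on $c$) fixed at the end. For each critical pair choose a shortest $s$-$t$ path $\pi_{s,t}$ in $G\cup H$, which has fewer than $h$ edges. The first step is a structural observation: replacing each shortcut edge of $\pi_{s,t}$ by an arbitrary underlying path in $G$ turns $\pi_{s,t}$ into an $s$-$t$ walk in the DAG $G$, hence an $s$-$t$ path, which by the Uniqueness property of \Cref{lem:properties_williams} is $P_{s,t}$ itself. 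Therefore both endpoints of every shortcut edge $e\in\pi_{s,t}\cap H$ lie on $P_{s,t}$, and — since two distinct paths between those endpoints would give two distinct $s$-$t$ paths — the subpath of $P_{s,t}$ between them is the \emph{unique} path in $G$ joining them; write $\ell_e$ for its length (equivalently, the layer gap of $e$). The shortcut edges of $\pi_{s,t}$ carve $P_{s,t}$ into pairwise edge-disjoint pieces, the remaining $P_{s,t}$-edges being traversed verbatim, so $\sum_{e\in\pi_{s,t}\cap H}\ell_e = 2r - \#\{\text{original edges of }\pi_{s,t}\} > 2r-h > r$.

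Next I would set $c(e) := \#\{(s,t)\in\mathcal{P}: e\in\pi_{s,t}\}$ and double count: $\sum_{e\in H}\ell_e\,c(e) = \sum_{(s,t)}\sum_{e\in\pi_{s,t}\cap H}\ell_e > r\,|\mathcal{P}|$. The core of the proof is to bound this same quantity from above via two complementary estimates. Globally, $\sum_{e\in H}c(e) = \sum_{(s,t)}|\pi_{s,t}\cap H| < h\,|\mathcal{P}|$. Locally, each critical pair counted by a fixed $e$ contributes a distinct critical path through the unique length-$\ell_e$ path joining $e$'s endpoints, so the Low Overlap property gives $c(e) = O\!\big((r/\ell_e)^2\big)$. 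Splitting at a threshold $T$:
\[
r\,|\mathcal{P}| < \sum_{e\in H}\ell_e\,c(e) = \sum_{\ell_e\le T}\ell_e\,c(e) + \sum_{\ell_e> T}\ell_e\,c(e) \le T\,h\,|\mathcal{P}| + |H|\cdot O(r^2/T),
\]
where the first term uses $\sum_e c(e) < h|\mathcal{P}|$ and the second uses $c(e)=O((r/\ell_e)^2)$ together with $\ell_e>T$. Substituting $|H|=\Theta(c\,r^5)$, $|\mathcal{P}|=\Theta(r^6)$ and dividing by $|\mathcal{P}|$ yields $r < T h + O(c\,r/T)$; taking $T = \Theta(\sqrt{c\,r/h})$ to balance the terms gives $r < O\!\big(\sqrt{c\,r\,h}\big)$, i.e.\ $h = \Omega(r/c)$. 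Plugging $h=\delta r$ forces $\delta = \Omega(1/c)$, contradicting the choice of $\delta$ once it is taken sufficiently small relative to $1/c$. Hence some critical pair satisfies $\dist_{G\cup H}(s,t)\ge \delta r = \Omega(r) = \Omega(|V(G)|^{1/5})$, and since $|E(G)|=\Theta(r^5)$ this shows every $O(m)$-size shortcut leaves $G\cup H$ with diameter $\Omega(n^{1/5})$.

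The real obstacle is not in the counting above but is already packaged into \Cref{lem:properties_williams}: the \emph{quadratic} exponent in the Low Overlap bound $O((r/s)^2)$ is essential — a merely linear bound would make the threshold split vacuous — and it is exactly what the bit-reversal permutation $\sigma$ is designed to guarantee, since a length-$s$ subpath of a critical path pins down each of the two interleaved ``direction phases'' only up to a window of size $\Theta(r/s)$. Granting that lemma, the single delicate point in the proof proper is the choice of threshold: a one-threshold argument that extracts only one ``long'' shortcut edge per critical pair loses a polynomial factor and yields just $\Omega(\sqrt r)$, whereas accounting for the \emph{total} covered length of $P_{s,t}$ and balancing via $T=\Theta(\sqrt{c\,r/h})$ is what recovers the full $\Omega(r)$ bound. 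One also treats the trivial boundary case $h\ge r$ separately, where some critical pair has distance $\ge r$ simply because $P_{s,t}$ has $2r$ edges.
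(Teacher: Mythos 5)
The paper does not actually prove \Cref{thm:w_lb_thm}: it attributes it to \cite{WilliamsXX24} and offers only a one-sentence description (``a potential function argument which shows that every shortcut edge in $H$ can only decrease the lengths of paths in $\mathcal{P}$ by a small amount''). Your argument is a correct, self-contained realization of exactly that idea, and I did not find a gap. In particular, the two structural observations you need both hold: (i) because $G\cup H$ is a layered DAG on $V(G)$, any $\le h$-hop $s$-$t$ path visits $V(G)$ in increasing layer order, and replacing each shortcut edge by any underlying $G$-path produces an $s$-$t$ path in $G$, which must equal $P_{s,t}$ by Uniqueness; this pins the endpoints of each shortcut edge to $P_{s,t}$ and, by the same uniqueness, forces a \emph{unique} $u$-$v$ path $\sigma_e$ in $G$ whose length equals the layer gap $\ell_e$, so the Low Overlap bound $c(e)=O((r/\ell_e)^2)$ applies uniformly across all critical pairs using $e$; and (ii) the covered length identity $\sum_{e\in\pi_{s,t}\cap H}\ell_e = 2r-\#\{\text{original edges}\}>2r-h$ is exact because the underlying subpaths are pairwise edge-disjoint and partition $P_{s,t}$ together with the traversed original edges. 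The threshold split then reads $r\,|\mathcal{P}| < T\,h\,|\mathcal{P}| + O(r^2/T)\,|H|$; substituting $|\mathcal{P}|=\Theta(r^6)$, $|H|=O(c\,r^5)$ and balancing at $T=\Theta(\sqrt{cr/h})$ gives $h=\Omega(r/c)$, and choosing $\delta$ below that constant yields the contradiction.

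Two small remarks on your presentation. First, your aside that one must ``treat the trivial boundary case $h\ge r$ separately'' is unnecessary: you fix $h=\delta r$ with $\delta$ a small constant chosen below $\min(1, \Omega(1/c))$, so $h<r$ automatically and $2r-h>r$ always. Second, your observation that a single-threshold (``one long shortcut edge per pair'') argument only gives $\Omega(\sqrt{r})$, and that accounting for the \emph{total} covered length is what upgrades this to $\Omega(r)$, is the right way to see why the double-counting with a balanced threshold is essential; this is a faithful reflection of why \cite{WilliamsXX24} improves on the earlier $\tilde\Omega(n^{1/8})$ bound of \cite{LuWWX22} by charging all shortcut edges along a critical pair rather than just one.
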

The proof of Theorem \ref{thm:w_lb_thm} uses a potential function argument which shows that every shortcut edge in $H$ can only decrease the lengths of paths in $\mathcal{P}$ by a small amount; intuitively, this follows from the low overlap property of paths in $\mathcal{P}$ stated in Lemma \ref{lem:properties_williams}.

\subsection{Steiner Shortcut to \cite{WilliamsXX24}}
We will now give a Steiner shortcut for the lower bound construction of \cite{WilliamsXX24}. This Steiner shortcut rules out the possibility of extending Theorem \ref{thm:w_lb_thm} to Steiner reachability shortcut sets via the argument used in \cite{WilliamsXX24}. Formally, we will prove the following theorem about the graph $G$ and set of paths $\mathcal{P}$ corresponding to the lower bound construction of \cite{WilliamsXX24}.

\begin{theorem}[Steiner Shortcut for \cite{WilliamsXX24} Construction]
Graph $G$ and set of paths $\mathcal{P}$ admit a Steiner reachability shortcut set $H$ such that
\begin{enumerate}
    \item $H$ has $O(r^4 ) = O(|V(G)|/r)$ Steiner nodes and $O(r^5) = O(|E(G)|)$ edges, and 
    \item For every $s \leadsto t$ path $P \in \mathcal{P}$, there exists an $s \leadsto t$ path of length two in $G \cup H$.
\end{enumerate}
\label{thm:attack2}
\end{theorem}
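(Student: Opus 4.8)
The plan is to compute the endpoints of the critical paths in closed form, observe that they depend on very few parameters, and then mimic the ``batching'' trick behind the Hesse attack (\Cref{thm:Steinerattack}) for two components, with one extra idea to push the number of Steiner vertices down to $O(r^4)$. Fix $P^s_{d_1,d_2}\in\mathcal P$ with $s=(0,x,y,z)\in S$. Unrolling the definition of $P^s_{d_1,d_2}$, each of its $r$ even steps increments the first coordinate and adds an $f_{d_1}(\cdot)$ term to the third, while each of its $r$ odd steps increments the second coordinate and adds an $f_{d_2}(\cdot)$ term to the third; hence
\[
t(s,d_1,d_2)\;=\;\bigl(2r,\;x+r,\;y+r,\;z+A(d_1)+A(d_2)\bigr),
\qquad A(d):=\textstyle\sum_{j=0}^{r-1}f_d(j).
\]
Since $\sigma$ is a permutation of $[0,r-1]$, the indices $j$ with $\sigma_j<d$ have $\sigma$-images exactly $\{0,\dots,d-1\}$, so $A(d)=\binom d2$; in particular $A$ is injective on $[r]$ and $A(d)<r^2$, which also re-confirms $t(s,d_1,d_2)\in L_{2r}$.

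I would then build $H$ as follows. For each pair $(s,d_1)\in S\times[r]$, writing $s=(0,x,y,z)$, introduce a fresh Steiner vertex $w$ identified with the triple $\bigl(x+r,\;y,\;z+A(d_1)\bigr)$ — so two pairs collapse onto the same Steiner vertex exactly when they produce the same triple — add the edge $s\to w$, and for every $d_2\in[r]$ add the edge $w\to t(s,d_1,d_2)$. Equivalently, the Steiner vertex at ``position'' $(a,b,c)$ receives an edge from each source $(0,a-r,b,c-A(d))$ and sends an edge to each $(2r,a,b+r,c+A(d))$, over $d\in[r]$.

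The two claims of the theorem should now follow. For the length-two claim, each $P^s_{d_1,d_2}$ is witnessed by the walk $s\to w\to t(s,d_1,d_2)$ of length $2$. For the size claim, the Steiner vertices range over triples $(a,b,c)$ with $a\in[r+1,3r]$, $b\in[1,2r]$, $c\in[1,3r^2]$, so there are $O(r^4)$ of them; each source emits $r$ edges (one per $d_1$, to distinct vertices since $A$ is injective) and each Steiner vertex emits at most $r$ edges, giving $|E(H)|=O(|S|\,r)=O(r^5)$. It remains to argue $H$ is a genuine Steiner reachability shortcut, i.e.\ it preserves reachability among the vertices of $G$: every Steiner vertex has in-neighbours only in $S\subseteq L_0$ and out-neighbours only in $L_{2r}$, and no vertex of $L_{2r}$ has an out-edge in $G\cup H$, so any walk in $G\cup H$ between two vertices of $G$ that touches a Steiner vertex must have the form $u\leadsto s\to w\to t$ with $u\leadsto s$ a walk in $G$, $s\in S$, and $t\in L_{2r}$. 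Thus it suffices that $(s,t)\in TC(G)$; and if $s\to w$ was added for $d_1$ and $w\to t$ for $d_2$, then $t=t(s,d_1,d_2)$ is precisely the endpoint of the honest path $P^s_{d_1,d_2}$ in $G$, so $(s,t)$ — and hence $(u,t)$ — lies in $TC(G)$. No reachability is destroyed because $E(G)\subseteq E(H)$.

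The step deserving the most care is this last one, because collapsing many pairs $(s,d_1)$ onto one Steiner vertex means a ``cross'' walk — one source entering a shared $w$ and then leaving toward an endpoint that came from a different source — could in principle manufacture reachability absent from $G$. It does not: $t(s,d_1,d_2)$ depends on $(d_1,d_2)$ only through $z+A(d_1)+A(d_2)$, which is exactly the quantity left invariant by the identification of Steiner vertices, so every $(s,t)$ pair created through a shared $w$ is realized by a genuine critical path $P^s_{d_1,d_2}$ with matching thresholds. The rest is bookkeeping; in particular, only the fact that the two components act by addition on the coordinate pairs $\{1,3\}$ and $\{2,3\}$ respectively is used from the construction of $\mathcal P$, not its finer low-overlap property (which powered the original lower bound).
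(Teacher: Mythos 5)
Your construction is correct and is essentially the same as the paper's: both identify Steiner vertices by the triple $(x+r,\,y,\,z+A(d_1))$, add $s\to w$ edges parameterized by $d_1$, and $w\to t$ edges parameterized by $d_2$, with the same $O(r^4)$ Steiner-vertex and $O(r^5)$ edge bounds and the same ``no new reachability'' argument via the critical path $P^s_{d_1,d_2}$. The closed form $A(d)=\binom{d}{2}$ is a small extra observation not needed by the paper (which works with $z^1_d,z^2_d$ abstractly), but it changes nothing substantive.
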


\paragraph{Construction of $H$.}
The additional Steiner nodes in our Steiner reachability shortcut set $H$ will correspond to a collection of points in a three-dimensional grid. Specifically, we define a set of Steiner nodes $W$ to be
$$
W = \{-1\} \times \{1, 4r\} \times \{1, 4r\} \times [1, 4r^2].
$$
(We add the $-1$ in the first coordinate of $W$ to distinguish the vertices in $W$ from the vertices in $V(G)$.) The edge set $E(H)$ of our Steiner shortcut will satisfy  $E(H) \subseteq (L_0 \times W) \cup (W \times L_{2r})$.

Before we can define $E(H)$, we need to introduce some new definitions. For every $d \in [r]$ and every $i \in \{1, 2\}$, we define integer $z^i_{d} \geq 0$ as follows:
$$
z^1_{d} = \sum_{j \in [0, r-1]} f_{d}(2j) \text{ \quad and \quad }  z^2_{d} = \sum_{j \in [0, r-1]} f_{d}(2j+1).
$$

Using integers $z^i_{d}$, we are now ready to construct $E(H)$.
\begin{itemize}
    \item For every node $(0, x, y, z) \in S$ and every $d \in [r]$, add the edge 
    $$
    (0, x, y, z) \rightarrow \left(-1, x+r, y, z+z^1_d\right)
    $$
    to $E(H)$ if $\left(-1, x+r, y, z+z^1_d\right) \in W$ is a valid point in $W$. 
    \item For every node $(-1, x, y, z) \in W$ and every $d \in [r]$, add the edge
    $$
    (-1, x, y, z) \rightarrow \left(2r, x, y+r, z+z^2_d\right)
    $$
    to $E(H)$ if $\left(2r, x, y+r, z+z^2_d\right) \in L_{2r}$ is a valid node in layer $L_{2r}$. 
\end{itemize}
This completes the construction of Steiner shortcut set $H$ and additional edges $E(H)$. 

\paragraph{Analysis of $H$.}
First we will bound the number of Steiner nodes in $W$ and the number of additional edges $E(H)$ in our Steiner shortcut set. We observe that $|W| = 4^3r^4 = O(r^4)$, as claimed. Additionally, we observe that the total number of edges in our Steiner shortcut is at most 
$$
|E(H)| \leq |S| \cdot r + |W| \cdot d \leq 2^7 r^5 = O(r^5),
$$
as claimed.

We next verify that $H$ is a valid Steiner reachability shortcut set, i.e., that the reachability relation between nodes in $V(G)$ remains the same in $G \cup H$. Let $u, v \in V(G)$ be nodes such that path $(u, w, v)$ exists in graph $G \cup H$ for some $w \in W$. Then we claim that $u$ can reach $v$ in $G$ as well. Let $u = (0, x, y, z)$. We can write $u \rightarrow w \rightarrow v$ as
$$
(0, x, y, z) \rightarrow (-1, x+r, y, z+z^1_{d_1}) \rightarrow (2r, x+r, y+r, z+z^1_{d_1}+z^2_{d_2}), 
$$
for some choice of $d_1, d_2 \in [1, r]$. We claim that critical path $P^u_{d_1, d_2} \in \mathcal{P}$ is a $u \leadsto v$ path in $G$. Notice that by construction, path $P^u_{d_1, d_2}$ begins at node $u$ and ends at node $v'$, where
$$
v' = \left(2r, x+r, y+r, z + \sum_{i=0}^{r-1}f_{d_1}(i) + \sum_{i=0}^{r-1}f_{d_2}(i)   \right) = (2r, x+r, y+r, z + z_{d_1}^1 +z_{d_2}^2 ) = v.
$$
We conclude that  $P^u_{d_1, d_2}$ is a $u \leadsto v$ path in $G$, as claimed. 

Finally, we verify that for every $s \leadsto t$ path $P^s_{d_1, d_2} \in \mathcal{P}$, there exists an $s \leadsto t$ path in $G \cup H$ of length two. Notice that by our construction of $\mathcal{P}$, if  $P^s_{d_1, d_2}$ is an $s \leadsto t$ path, then nodes $s$ and $t$ can be written as
$$
s = (0, x, y, z) \text{\quad and \quad} t = \left(  2r, x+r, y+r, z+z_{d_1}^1 + z_{d_2}^2 \right). 
$$
Then we can identify the following $s \leadsto t$ path in $G \cup H$ of length two:
$$
(0, x, y, z) \rightarrow (-1, x+r, y, z+z_{d_1}^1) \rightarrow \left(  2r, x+r, y+r, z+z_{d_1}^1 + z_{d_2}^2 \right).
$$
This completes the proof of Theorem \ref{thm:attack2}.

\section{Omitted Proofs}

\subsection{Proof of \Cref{lem:criticalpairbasic}}
\label{app:criticalpairbasic}
\criticalpairbasic*
\begin{proof}
      Each vertex $((0,0),(p_0,\ldots,p_{k-1})) \in S$  is part of 
    $\Delta^k$ critical pairs because for any critical direction 
    $(\vec{v}_0, \ldots, \vec{v}_{k-1}) \in \mathcal{B}(r, d)^k$, 
    we have that   
    $((\ell,0),(p_0+\ell \vec{v}_0,\ldots,p_{k-1}+\ell\vec{v}_{k-1})) \in L_{\ell, 0}$ is a valid vertex in $L_{\ell,0}$, since $p_j+\ell\vec{v}_j \in (-(r+1)r\ell, (r+1)r\ell]^d$.
    Additionally, the number of vertices in $S$ is at least $|S|$ = $(2r^2\ell)^{dk} = \frac{N r^{dk}}{(r+1)^{dk}}$. 
    We conclude that the total number of critical 
    pairs is $\frac{N \Delta^k r^{dk}}{(r+1)^{dk}}$.
    
    Turning to the uniqueness claim, if $(s, t) \in P$ is 
    an arbitrary critical pair we can write
    \begin{align*}
    s &= ((0, 0), (p_0, \ldots, p_{k-1})) \in S\\
    t &= ((\ell,0),(p_0+\ell \vec{v}_0,\ldots,p_{k-1}+\ell\vec{v}_{k-1})) \in L_{\ell,0},
    \end{align*}
    One $s$-$t$ path $P_{s,t}$ passes through all vertices of the form
    \[
    ((i, j), (p_0 + (i+1)\vec{v}_0, \ldots, p_{j-1} + (i+1)\vec{v}_{j-1}, p_j + i\vec{v}_j, \dots, p_{k-1} + i\vec{v}_{k-1})).
    \]
    We claim that $P_{s, t}$ is the unique 
    $s$-$t$ path in $G$. 
    By construction, any $s$-$t$ path $P'$ in $G$ can be identified with a 
    sequence of $k\ell$ vectors $\vec{u}_{0,0},\ldots,\vec{u}_{\ell-1,k-1}$ corresponding to the edges of $P'$ that satisfies
    \[
    \forall j\in [0,k),\;\;  \sum_{i=0}^{\ell-1} \vec{u}_{i,j} = \ell\vec{v}_{j}.
    \]
    However, if any $\vec{u}_{i,j}\neq \vec{v}_j$ then $(\sum_i \vec{u}_{i,j})/\ell$ is not an extreme point on the convex hull 
    $\cB(r,d)$, and therefore cannot be equal to $\vec{v}_j$.
    Thus $\vec{u}_{i,j}=\vec{v}_j$ and $P'=P_{s,t}$.

    Since each critical pair is connected by a unique path, we can refer to
    it as a \emph{critical path}.
    The last claim follows since any $k$ consecutive edges along a critical path uniquely determines its critical direction $(\vec{v}_0,\ldots,\vec{v}_{k-1})$, which uniquely determines its first vertex in $L_{0,0}$ and last vertex in $L_{\ell,0}$.
\end{proof}

\end{document}